\documentclass[12pt]{article}
\pagestyle{plain}
\thispagestyle{plain}
\usepackage{mathdots}
\usepackage{mathrsfs}
\usepackage{enumitem}
\usepackage{amsfonts}
\usepackage{amsmath}
\usepackage{tikz}
\usepackage{cases}
\usepackage{amssymb}
\usepackage{graphicx}
\usepackage{cite}
\usepackage{algorithmic}
\usepackage{algorithm}
\usepackage{color}
\usepackage{amsthm}
\usepackage{epsfig}
\usepackage{epstopdf}
\usepackage{multirow}
\usepackage{caption}
\usepackage{subfigure}
\usepackage{latexsym}
\usepackage{arydshln}
\usepackage{indentfirst}
\usepackage{xcolor}
\usepackage{lscape}
\usepackage{geometry}
\usepackage{bm}
\usepackage{setspace}
\usepackage{booktabs}
\usepackage{bookmark}
\usepackage{hyperref}
\usepackage{url}
\RequirePackage{fix-cm}
\RequirePackage{amsmath,amssymb,amsfonts}
\RequirePackage{mathtools}
\RequirePackage{graphicx}
\RequirePackage{amsmath}
\RequirePackage{textcomp}
\RequirePackage{algorithm}
\RequirePackage{upgreek}
\RequirePackage{booktabs}
\RequirePackage{threeparttable}
 \geometry{left=2.8cm,right=2.8cm,top=2.8cm,bottom=3cm}
\numberwithin{equation}{section}

\newtheorem{theorem}{Theorem}[section]

\newtheorem{open problem}{Open Problem}[section]

\newtheorem{lemma}{Lemma}[section]
\newtheorem{proposition}{Proposition}[section]
\newtheorem{corollary}{Corollary}[section]

\allowdisplaybreaks

\begin{document}

\begin{center}{\Large \bf {Cardinality-consistent flag codes with longer type vectors}}\footnote {This work was supported by the National Natural Science Foundation of China under Grant 12371326. }\\
 \vspace{0.5cm}
{Junfeng Jia, Yanxun Chang \footnote{Corresponding author. E-mail address: yxchang@bjtu.edu.cn} }\\\

 {\small { {\it School of Mathematics and Statistics, Beijing Jiaotong University, Beijing 100044, P.R.China}}}
 \vspace{0.8cm}
\end{center}

\begin{center}
\begin{minipage}{153mm}
{\bf \small  Abstract:}{\small{ Flag codes generalize constant dimension codes by considering sequences of nested subspaces with prescribed dimensions as codewords. A comprehensive construction, which unites cyclic orbit flag codes, yields two families of flag codes on $\mathbb{F}^n_q$ (where $n=sk+h$ with $s\geq 2$ and $0\leq h < k$): optimum distance flag codes of the longest possible type vector $(1, 2, \ldots, k, n-k, \ldots, n-1)$ and flag codes with longer type vectors $(1, 2, \ldots, k+h, 2k+h, \ldots, (s-2)k+h, n-k, \ldots, n-1)$. These flag codes achieve the same cardinality $\sum^{s-1}_{i=1}q^{ik+h}+1$.
}}\\
\\
{ \small \textbf{Keywords}: cyclic orbit flag code, optimum distance flag code, full flag code\\
\textbf{MSC:} \ 11T71; 05E18}
\end{minipage}
\end{center}

\section{Introduction}
Network coding, introduced by Ahlswede et al. \cite{Network}, has revolutionized information transmission by allowing intermediate nodes to perform encoding operations, which significantly enhances network throughput and robustness. Within this framework, K\"{o}tter and Kschischang \cite{errorsanderasures} proposed the concept of subspace codes, where information is encoded as vector subspaces rather than individual vectors. This approach effectively reduces packet loss and errors caused by noise or adversarial attacks in random network coding environments.

A key subclass of subspace codes is constant dimension codes, where all codewords are subspaces of the same dimension. The minimum distance of a code plays a crucial role, as it determines the code's error detection and correction capabilities. Among constant dimension codes, partial spreads and spreads \cite{Partial.spreads} are particularly notable for their excellent error detection and correction capabilities. This superior performance originates from their optimal distance properties.

Denote by $\mathrm{GL}_n(\mathbb{F}_q)$ the general linear group of $n\times n$ invertible matrices over the finite field $\mathbb{F}_q$. A method for describing constant dimension codes, known as orbit codes, was introduced in \cite{Orbit.codes}. An orbit code $\mathrm{Orb}_G(\mathrm{U})$ is defined by the action of a subgroup $G$ of $\mathrm{GL}_n(\mathbb{F}_q)$ on an initial subspace $\mathrm{U}$. When $G$ is cyclic, the resulting subspace code is called a cyclic orbit code \cite{Cyclic.orbit}.

To enhance transmission efficiency, $r$-shot subspace codes utilize the subspace channel $r$-times, encoding information as sequences of subspaces over $\mathcal{P}_q^r(n)$ with the extended subspace distance defined in \cite{R.W.}. In this framework, flag codes were introduced, whose codewords are sequences of nested subspaces of $\mathbb{F}_q^n$ \cite{D.Liebhold}. Subsequent research focused on flag codes achieving the maximum possible distance, known as optimum distance flag codes. Alonso-Gonz\'{a}lez, Navarro-P\'{e}rez and Soler-Escriv\`{a} constructed optimum distance flag codes on $\mathbb{F}_q^{2k}$ from planar spreads \cite{planar.spreads}. Additionally, they in \cite{perfect.matchings} provided a construction of optimum distance flag codes on $\mathbb{F}_q^{sk}$, utilizing perfect matchings in graphs. Related work on optimum distance flag codes can be found in \cite{An.orbital.construction, Chen, S.Liu, Singer.groups}.

Recent research explored orbit flag codes, under the natural action of multiplicative subgroups of $\mathrm{GL}_n(\mathbb{F}_q)$ on flags. Alonso-Gonz\'{a}lez, Navarro-P\'{e}rez and Soler-Escriv\`{a} conducted a series of studies on cyclic orbit flag codes and their algebraic structures in \cite{Cyclic.orbit.flag, An.orbital.construction, A.new}.

Flag codes of type $\mathbf{t}$ on $\mathbb{F}_q^n$ with distance $D^{(\mathbf{t}, n)} - 2$ are referred to as quasi-optimum \cite{Quasi-optimum}, where \(D^{(\mathbf{t}, n)}\) represents the maximum possible distance of flag codes of type $\mathbf{t}$ on $\mathbb{F}_q^n$. The same reference provided necessary and sufficient conditions for flag codes of type $\mathbf{t}$ on $\mathbb{F}_q^n$ with distances $D^{(\mathbf{t}, n)} - 2$ and $D^{(\mathbf{t}, n)} - 4$. Furthermore, flag codes of type $\mathbf{t}$ on $\mathbb{F}_q^{2k+1}$ and $\mathbb{F}_q^{2k+2}$ that have size $q^k+1$ and general distance $D^{(\mathbf{t}, n)}-2\ell$ were constructed via spreads and sunflower codes. The construction in \cite{Han} provided full flag codes on $\mathbb{F}_q^{2k+h}$ with cardinality $q^{k+h}+1$ and distance $2k(k+h)$.

In this work, we investigate flag codes on $\mathbb{F}_q^n$ (where $n = sk + h$ such that $s \geq 2$ and $0 \leq h < k$) featuring longer type vectors. Our constructions achieve distances $2k(s+h+k-2)$ for $s \neq 4$ and $2k(k+h+1)+2h$ for $s = 4$, together with the improved cardinality $\sum^{s-1}_{i=1}q^{ik+h}+1$. This work presents the first systematic study of flag codes of longer type vectors on $\mathbb{F}_q^{sk+h}$, delivering significant improvements in both cardinality and the length of type vectors for flag codes with general distances. From this construction, we also derive optimum distance flag codes on $\mathbb{F}^n_q$ with the longest possible type $(1, 2, \dots, k, n-k,\dots, n-1)$ and cardinality $\sum^{s-1}_{i=1}q^{ik+h}+1$.

The paper is structured as follows. Section 2 reviews the required background on subspace codes and flag codes. Section 3 constructs optimum distance flag codes of the longest possible type on $\mathbb{F}_q^{sk+h}$ and full flag codes on $\mathbb{F}_q^{2k+h}$. Section 4 studies flag codes on $\mathbb{F}_q^{sk+h}$ ($s\geq 3$) with longer type vectors $(1, 2, \ldots, k+h, 2k+h, \ldots, (s-2)k+h, n-k, \ldots, n-1)$.

\section{Preliminaries}
\subsection{Subspace codes}
Let $q$ be a prime power, $\mathbb{F}_{q}$ the finite field of size $q$, and $\mathbb{F}^{n}_{q}$ the $n$-dimensional vector space over $\mathbb{F}_{q}$.
For positive integers $m$ and $n$, denote by $\mathrm{GL}_n(\mathbb{F}_q)$ the general linear group of degree $n$ over $\mathbb{F}_q$, and by $\mathrm{Mat}_{m \times n}(\mathbb{F}_q)$ the set of all $m \times n$ matrices over $\mathbb{F}_q$. The projective space $\mathcal{P}_q(n)$ is the collection of all subspaces of $\mathbb{F}_q^n$. For an integer $1\leq k<n$, the set of all $k$-dimensional subspaces in $\mathcal{P}_q(n)$ is known as the Grassmannian $\mathcal{G}_q(n, k)$.

Let $\mathbf{U}\in \mathrm{Mat}_{k\times n}(\mathbb{F}_q)$ be a full-rank matrix. The row space of $\mathbf{U}$ is defined as
$$\mathrm{U}=\mathrm{rs}(\mathbf{U})=\left\{(x_1, \ldots, x_k)\mathbf{U}: (x_1, \ldots, x_k)\in \mathbb{F}^{k}_q\right\}\in \mathcal{G}_q(n, k),$$
where $\mathbf{U}$ is called a generator matrix of the subspace $\mathrm{U}$.
Since $\mathbf{A}\mathbf{U}$ is also a generator matrix of $\mathrm{U}$ for any $\mathbf{A}\in \mathrm{GL}_k(\mathbb{F}_q)$, we have $\mathrm{U}=\mathrm{rs}(\mathbf{U})=\mathrm{rs}(\mathbf{A}\mathbf{U})$.

A subspace code is a non-empty subset of $\mathcal{P}_q(n)$ endowed with the subspace distance
$$d_S(\mathrm{U}, \mathrm{V})=\dim(\mathrm{U}+\mathrm{V})-\dim(\mathrm{U}\cap\mathrm{V}),\ \mbox{for}\ \mathrm{U},\ \mathrm{V}\in \mathcal{P}_q(n).$$
The minimum distance of a subspace code $\mathscr{C}$ is computed as
$$d_S(\mathscr{C})=\min\left\{d_S(\mathrm{U}, \mathrm{V}): \mathrm{U}\neq \mathrm{V}\in \mathscr{C}\right\}.$$
If $\mathscr{C}\subseteq \mathcal{G}_q(n, k)$, then $\mathscr{C}$ is a constant dimension code. In this case, for two subspaces $\mathrm{U}=\mathrm{rs}(\mathbf{U})$ and $\mathrm{V}=\mathrm{rs}(\mathbf{V})$ in $\mathscr{C}$, the subspace distance simplifies to
\begin{equation}\label{equation 1}
\begin{aligned}
d_S(\mathrm{U}, \mathrm{V})&=2(k-\dim(\mathrm{U}\cap\mathrm{V}))=2(\dim(\mathrm{U}+\mathrm{V})-k)=2\left(\mathrm{rk}\footnotesize{\left[
\begin{array}{cccc}
\mathbf{U}\\
\mathbf{V}
\end{array}
\right]}-k\right),
\end{aligned}
\end{equation}
where $\mathbf{U}$ and $\mathbf{V}$ are generator matrices of $\mathrm{U}$ and $\mathrm{V}$, respectively. Consequently, the minimum distance of $\mathscr{C}\subseteq \mathcal{G}_q(n, k)$ can be expressed as
\begin{equation}\label{equation 2}
\begin{aligned}
d_S(\mathscr{C})&=\min\left\{2(k-\dim(\mathrm{U}\cap\mathrm{V})): \mathrm{U}\neq\mathrm{V}\in \mathscr{C}\right\}\\
&=\min\left\{2(\dim(\mathrm{U}+\mathrm{V})-k): \mathrm{U}\neq\mathrm{V}\in \mathscr{C}\right\}.
\end{aligned}
\end{equation}

A constant dimension code $\mathscr{C} \subseteq \mathcal{G}_q(n, k)$ is called a $c$-intersecting equidistant code if the intersection of any two distinct codewords has the same dimension $c$. When $c=0$, it is called a partial $k$-spread. Moreover, if $k \mid n$ and a partial $k$-spread attains the maximum possible cardinality $\frac{q^n - 1}{q^k - 1}$, it is known as a $k$-spread.

There is an action of $\mathrm{GL}_n(\mathbb{F}_q)$ on $\mathcal{G}_q(n, k)$ given by:
$$\begin{aligned}
\mathcal{G}_{q}(n, k)\times \mathrm{GL}_{n}(\mathbb{F}_{q})&\longrightarrow \mathcal{G}_{q}(n, k)\\
(\mathrm{U}, g)&\longmapsto \mathrm{U}^g:=\mathrm{rs}(\mathbf{U}g),
\end{aligned}$$
where $\mathbf{U}$ is a generator matrix of $\mathrm{U}$.

Let $G$ be a multiplicative subgroup of $\mathrm{GL}_{n}(\mathbb{F}_{q})$, and let $\mathrm{U}=\mathrm{rs}(\mathbf{U})$ be a subspace in $\mathcal{G}_q(n, k)$. The set
$$\mathscr{C}=\mathrm{Orb}_G(\mathrm{U})=\{\mathrm{U}^g=\mathrm{rs}(\mathbf{U}g): g\in G\}$$
is said to be an orbit code. If $G$ is cyclic, $\mathscr{C}$ is a cyclic orbit code. Let $$\mathrm{Stab}_G(\mathrm{U})=\{g\in G : \mathrm{U}^g=\mathrm{U}\}$$ be the stabilizer of the subspace $\mathrm{U}$ under the action of the group $G$. Then it follows that
$$d_S(\mathscr{C})=\min\{d_{S}(\mathrm{U}, \mathrm{U}^g): g\in G \setminus \mathrm{Stab}_G(\mathrm{U})\}\ \ \mbox{and}\ \ |\mathscr{C}|=\frac{|G|}{|\mathrm{Stab}_G(\mathrm{U})|}.$$

\subsection{Flag codes}
Throughout this paper,  let $\mathbf{t}=(t_1, \ldots, t_r)$ be a vector of strictly increasing integers satisfying $1 \leq t_1 < \ldots < t_r \leq n-1$.

A flag of type $\mathbf{t}=(t_1, \ldots, t_r)$ on $\mathbb{F}_q^n$ is a sequence $$\mathcal{F}=(\mathrm{F}_1, \ldots, \mathrm{F}_r)\in {\mathcal{P}_q^r(n)}$$
such that
$\{0\}\subsetneq \mathrm{F}_1\subsetneq \cdots\subsetneq\mathrm{F}_r \subsetneq\mathbb{F}^{n}_q$ and $\dim(\mathrm{F}_i)=t_i$ for each $1\leq i\leq r$.
When $\mathbf{t} = (1, 2, \ldots, n-1)$, which is the longest possible type, the flag is called a full flag. We write $\mathcal{F}_q(\mathbf{t}, n)$ to denote the set of all flags of type $\mathbf{t}$ on $\mathbb{F}_q^n$.

A flag code of type $\mathbf{t}$ on $\mathbb{F}_q^n$ is a non-empty subset of $\mathcal{F}_q(\mathbf{t}, n)$ equipped with the flag distance
$$
d_f(\mathcal{F}, \mathcal{F}')=\sum^{r}_{i=1}d_{S}(\mathrm{F}_i, \mathrm{F}'_i)
$$
for any two flags $\mathcal{F}=(\mathrm{F}_1, \ldots, \mathrm{F}_r)$ and $\mathcal{F}'=(\mathrm{F}'_1, \ldots, \mathrm{F}'_r)$ in $\mathcal{F}_q(\mathbf{t}, n)$. The minimum flag distance of a flag code $\mathcal{C}\subseteq \mathcal{F}_q(\mathbf{t}, n)$ is defined as
$$d_f(\mathcal{C})=\min\{d_f(\mathcal{F}, \mathcal{F}'): \mathcal{F}\neq\mathcal{F}'\in \mathcal{C}\}.
$$
The maximum possible flag distance in $\mathcal{F}_q(\mathbf{t}, n)$ is given by
\begin{equation}\label{equation 2.3}
D^{(\mathbf{t}, n)}=2\left(\sum_{t_i\leq \lfloor\frac{n}{2}\rfloor}{t_i}+\sum_{t_i> \lfloor\frac{n}{2}\rfloor}(n-t_i)\right).
\end{equation}
A flag code $\mathcal{C}\subseteq\mathcal{F}_q(\mathbf{t}, n)$ is called an optimum distance flag code if $d_f(\mathcal{C})=D^{(\mathbf{t}, n)}$, and a quasi-optimum distance flag code if $d_f(\mathcal{C})=D^{(\mathbf{t}, n)}-2$.

For each $i\in\{1, \ldots, r\}$, the $i$-projected code of a flag code $\mathcal{C}$ is
$$\mathcal{C}_i=\{\mathrm{F}_i: \mathcal{F}=(\mathrm{F}_1, \ldots, \mathrm{F}_r)\in \mathcal{C}\}\subseteq \mathcal{G}_q(n, t_i),$$
which is the set of all $t_i$-dimensional subspaces occurring as the $i$-th component of flags in $\mathcal{C}$. It is clear that $|\mathcal{C}_i|\leq |\mathcal{C}|$ for all $1\leq i\leq r$. If $|\mathcal{C}|=|\mathcal{C}_1|=\dots=|\mathcal{C}_r|$, we say that $\mathcal{C}$ is cardinality-consistent.

The following characterization of optimum distance flag codes, proved in \cite[Theorem~3.11]{planar.spreads}, is central to the investigation of these codes.

\begin{lemma}\label{lemma 2.1}{\cite[Theorem 3.11]{planar.spreads}}
Let $\mathcal{C}\subseteq \mathcal{F}_q(\mathbf{t}, n)$ be a flag code. Then the following statements are equivalent:
\begin{itemize}
		\item[\rm(i)] $\mathcal{C}$ is an optimum distance flag code.
		\item[\rm(ii)] $\mathcal{C}$ is cardinality-consistent and every projected code $\mathcal{C}_i$ attains the maximum possible subspace distance. That is, for each $1\leq i\leq r$, it holds
$$d_S(\mathcal{C}_i)=\min\{2t_i, 2(n-t_i)\}.$$
\end{itemize}
\end{lemma}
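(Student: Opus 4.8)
The plan is to pivot everything on one elementary identity: the largest possible flag distance splits as a sum of the largest possible subspace distances on the constituent Grassmannians, namely $D^{(\mathbf{t}, n)} = \sum_{i=1}^r \min\{2t_i, 2(n-t_i)\}$. This is immediate from \eqref{equation 2.3}: when $t_i \le \lfloor n/2 \rfloor$ one has $\min\{2t_i, 2(n-t_i)\} = 2t_i$, and when $t_i > \lfloor n/2 \rfloor$ one has $\min\{2t_i, 2(n-t_i)\} = 2(n-t_i)$, so the two sums in \eqref{equation 2.3} are exactly $\sum_i \min\{2t_i, 2(n-t_i)\}$. Together with the coordinatewise bound $d_S(\mathrm{F}_i, \mathrm{F}_i') \le \min\{2t_i, 2(n-t_i)\}$, which is read off from \eqref{equation 1}, this gives $d_f(\mathcal{F}, \mathcal{F}') \le D^{(\mathbf{t}, n)}$ for every pair of flags; hence optimality of $\mathcal{C}$ amounts to saying this upper bound is met by every pair of distinct codewords, and the whole argument becomes a matter of tracking when equality can fail in a single coordinate.

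For the implication (ii) $\Rightarrow$ (i), I would first observe that cardinality-consistency is equivalent to each projection map $\mathcal{C} \to \mathcal{C}_i$, $\mathcal{F} \mapsto \mathrm{F}_i$, being a bijection (it is always surjective, and $|\mathcal{C}_i| = |\mathcal{C}|$ forces injectivity). Consequently, for distinct flags $\mathcal{F} \ne \mathcal{F}'$ one gets $\mathrm{F}_i \ne \mathrm{F}_i'$ in \emph{every} coordinate $i$, so $d_S(\mathrm{F}_i, \mathrm{F}_i') \ge d_S(\mathcal{C}_i) = \min\{2t_i, 2(n-t_i)\}$. Summing over $i$ yields $d_f(\mathcal{F}, \mathcal{F}') \ge D^{(\mathbf{t}, n)}$, and since the reverse inequality always holds we conclude $d_f(\mathcal{C}) = D^{(\mathbf{t}, n)}$.

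For (i) $\Rightarrow$ (ii), I would prove the contrapositive in two cases. If $\mathcal{C}$ is not cardinality-consistent, then $|\mathcal{C}_i| < |\mathcal{C}|$ for some $i$, so some two distinct flags share their $i$-th subspace; that coordinate contributes $0$ to the flag distance while the others are bounded above by their maxima, so this pair has flag distance strictly below $D^{(\mathbf{t}, n)}$ (strictness because $1 \le t_j \le n-1$ makes every maximum $\min\{2t_j, 2(n-t_j)\}$ positive). If instead some $\mathcal{C}_i$ has $d_S(\mathcal{C}_i) < \min\{2t_i, 2(n-t_i)\}$, choose $\mathrm{U} \ne \mathrm{V}$ in $\mathcal{C}_i$ realizing $d_S(\mathcal{C}_i)$, lift them to flags $\mathcal{F}, \mathcal{F}' \in \mathcal{C}$ with $i$-th components $\mathrm{U}, \mathrm{V}$ (these flags are distinct since their $i$-th terms differ), and bound the remaining coordinates by their maxima to again get a pair with flag distance below $D^{(\mathbf{t}, n)}$. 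In both cases $d_f(\mathcal{C}) < D^{(\mathbf{t}, n)}$, so $\mathcal{C}$ is not optimum distance.

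There is no serious obstacle here; the only care needed is the bookkeeping in the last step — ensuring that zeroing or lowering a single coordinate strictly lowers the total, which hinges on the innocuous fact that each coordinatewise maximum is strictly positive, and that every element of $\mathcal{C}_i$ genuinely arises as the $i$-th component of a codeword flag, which is just the definition of the projected code. No structural information about $\mathcal{C}$ beyond \eqref{equation 1} and \eqref{equation 2.3} enters.
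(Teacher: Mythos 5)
Your argument is correct and self-contained. The paper itself does not supply a proof of this lemma --- it cites it directly from \cite[Theorem~3.11]{planar.spreads} --- so there is no in-paper proof to compare against, but your decomposition $D^{(\mathbf{t},n)}=\sum_i \min\{2t_i,2(n-t_i)\}$, the pointwise bound from \eqref{equation 1}, the bijectivity interpretation of cardinality-consistency, and the two-case contrapositive for (i) $\Rightarrow$ (ii) give a clean and complete derivation that matches the standard argument for this equivalence. One minor point worth making explicit if you write this up: in the second case of the contrapositive (some $\mathcal{C}_i$ fails to attain the maximum distance), the quantity $d_S(\mathcal{C}_i)$ is only defined when $|\mathcal{C}_i|\geq 2$; if $|\mathcal{C}_i|=1<|\mathcal{C}|$ you are automatically in the first case, so no harm is done, but a careful reader will want that observed.
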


\section{New constructions of flag codes}
In this section, we aim to construct optimum distance flag codes of the longest possible type vector on $\mathbb{F}_q^{sk+h}$, as well as full flag codes on $\mathbb{F}_q^{2k+h}$, where $s\geq 2$ and $0\leq h< k$ are integers.

To simplify the exposition, we fix some notations.
For a type vector $\mathbf{t}=(t_1, \ldots, t_r)$, suppose that
$$a=\max\{i\in \{1, \ldots, r\}: 2t_i\leq n\}\ \ \mbox{and}\ \ b=\min\{i\in \{1, \ldots, r\}: 2t_i\geq n\}.$$
It is a fact that $a$ and $b$ are not guaranteed to exist simultaneously, but at least one of them exists. Specifically, $a=r$ and $b$ does not exist when $t_r< \frac{n}{2}$; $b=1$ and $a$ does not exist when $t_1> \frac{n}{2}$. If both $a$ and $b$ exist, then $a=b$ and $t_a=t_b=\frac{n}{2}$ if and only if $n$ is even and $\frac{n}{2}\in \{t_1, \ldots, t_r\}$; otherwise, $b=a+1$ and $$\{1, \ldots, r\}=\{1, \ldots, a\}\ \dot{\cup}\ \{b, \ldots, r\}.$$

Let $\mathbf{t} = (t_1, \ldots, t_r)$, and let $\mathcal{C} \subseteq \mathcal{F}_q(\mathbf{t}, n)$ be a flag code. Let $\mathbf{t}' = (t_{i_1}, t_{i_2}, \ldots, t_{i_s})$ be any subsequence of $\mathbf{t}$ such that $1 \leq i_1 < i_2 < \cdots < i_s \leq r$. For each flag $\mathcal{F} = (\mathrm{F}_1, \ldots, \mathrm{F}_r)$ in $\mathcal{C}$, define $\mathcal{F}_{\mathbf{t}'} =(\mathrm{F}_{i_1}, \mathrm{F}_{i_2}, \ldots, \mathrm{F}_{i_s})$. This induces a flag code of type $\mathbf{t}'$ on $\mathbb{F}_q^n$ defined by
$$\mathcal{C}_{\mathbf{t'}}=\{\mathcal{F}_{\mathbf{t'}}=(\mathrm{F}_{i_1}, \mathrm{F}_{i_2}, \ldots, \mathrm{F}_{i_s}): \mathcal{F}\in \mathcal{C}\}\subseteq\mathcal{F}_q(\mathbf{t}', n).$$

The following result presents an equivalent condition for optimum distance flag codes, which only depends on the projected codes $\mathcal{C}_a$ and $\mathcal{C}_b$.

\begin{lemma}\label{lemma 3.1} {\cite[Theorem 4.8]{Singer.groups}}
Let $\mathcal{C}\subseteq\mathcal{F}_q(\mathbf{t}, n)$ be a flag code. The following statements are equivalent:\vspace{2pt}
\begin{itemize}
		\item[\rm(i)] $\mathcal{C}$ is an optimum distance flag code.

		\item[\rm(ii)] $\mathcal{C}_a=\{\mathrm{F}_a: \mathcal{F} \in \mathcal{C}\}$ and $\mathcal{C}_b=\{\mathrm{F}_b: \mathcal{F} \in \mathcal{C}\}$ both attain the maximum possible distance and satisfy $|\mathcal{C}_a|=|\mathcal{C}_b|=|\mathcal{C}|$.
\end{itemize}
\end{lemma}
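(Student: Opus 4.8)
The plan is to obtain this as a refinement of Lemma~\ref{lemma 2.1}, showing that the phrase ``every projected code $\mathcal{C}_i$'' in that criterion may be replaced by the two extreme projected codes $\mathcal{C}_a$ and $\mathcal{C}_b$ alone. The implication (i)$\Rightarrow$(ii) is immediate from Lemma~\ref{lemma 2.1}: an optimum distance flag code is cardinality-consistent, so in particular $|\mathcal{C}_a|=|\mathcal{C}_b|=|\mathcal{C}|$, and all of its projected codes attain the maximal subspace distance, in particular $\mathcal{C}_a$ and $\mathcal{C}_b$. Hence all the substance lies in the converse.

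For (ii)$\Rightarrow$(i) I would first upgrade the hypothesis $|\mathcal{C}_a|=|\mathcal{C}_b|=|\mathcal{C}|$ to full cardinality-consistency, that is, to injectivity of the projection $\pi_i\colon \mathcal{F}\mapsto \mathrm{F}_i$ on $\mathcal{C}$ for \emph{every} $i$. Suppose $\mathcal{F}\neq\mathcal{F}'$ with $\mathrm{F}_i=\mathrm{F}'_i$. If $i\leq a$, then $|\mathcal{C}_a|=|\mathcal{C}|$ forces $\pi_a$ injective, so $\mathrm{F}_a\neq\mathrm{F}'_a$; since $2t_a\leq n$ and $\mathcal{C}_a$ attains distance $2t_a$, relation \eqref{equation 1} gives $\mathrm{F}_a\cap\mathrm{F}'_a=\{0\}$, while the chain condition yields $\mathrm{F}_i=\mathrm{F}'_i\subseteq \mathrm{F}_a\cap\mathrm{F}'_a=\{0\}$, contradicting $\dim \mathrm{F}_i=t_i\geq 1$. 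Dually, if $i\geq b$, then $\mathrm{F}_b\neq\mathrm{F}'_b$, and maximal distance $2(n-t_b)$ of $\mathcal{C}_b$ (with $2t_b\geq n$) forces $\mathrm{F}_b+\mathrm{F}'_b=\mathbb{F}_q^n$, hence $\mathrm{F}_i+\mathrm{F}'_i=\mathbb{F}_q^n$; but then $\mathrm{F}_i=\mathrm{F}'_i$ has dimension $n$, contradicting $t_i\leq n-1$. Since $\{1,\dots,r\}=\{1,\dots,a\}\ \dot{\cup}\ \{b,\dots,r\}$, every index is covered, so $\mathcal{C}$ is cardinality-consistent.

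With cardinality-consistency available, I would verify that each $\mathcal{C}_i$ attains the maximum subspace distance $\min\{2t_i,2(n-t_i)\}$ and then conclude via Lemma~\ref{lemma 2.1}. For distinct $\mathrm{F}_i,\mathrm{F}'_i\in\mathcal{C}_i$ with $i\leq a$, injectivity of $\pi_i$ lifts them to distinct flags, so $\mathrm{F}_a\neq\mathrm{F}'_a$ and $\mathrm{F}_a\cap\mathrm{F}'_a=\{0\}$, whence $\mathrm{F}_i\cap\mathrm{F}'_i\subseteq\{0\}$ and $d_S(\mathrm{F}_i,\mathrm{F}'_i)=2t_i$; for $i\geq b$ the dual argument gives $\mathrm{F}_i+\mathrm{F}'_i\supseteq\mathrm{F}_b+\mathrm{F}'_b=\mathbb{F}_q^n$ and $d_S(\mathrm{F}_i,\mathrm{F}'_i)=2(n-t_i)$. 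The boundary situations — only one of $a,b$ existing, or $a=b$ with $t_a=n/2$ — collapse to a single chain and are strictly easier. The crux of the whole argument is the injectivity step of the previous paragraph: it is there that the nested structure $\mathrm{F}_1\subseteq\cdots\subseteq\mathrm{F}_r$ must be played against the extremality of the distances of $\mathcal{C}_a$ and $\mathcal{C}_b$; the remaining verifications are routine monotonicity of $\cap$ and $+$ along the flag.
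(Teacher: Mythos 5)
Your proof is correct. The paper itself does not prove Lemma~\ref{lemma 3.1} but cites it from \cite[Theorem~4.8]{Singer.groups} and only adds the remark that it holds when just one of $a,b$ exists; your argument is the natural self-contained derivation from Lemma~\ref{lemma 2.1}, using that extremal distance at index $a$ (resp.\ $b$) forces a partial spread (resp.\ dual partial spread) there and then propagating trivial intersection downward (resp.\ full sum upward) along the nested chain, which is precisely the expected route.
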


It follows from Lemma \ref{lemma 2.1} that Lemma \ref{lemma 3.1} remains true even if only one of $a$ and $b$ exists.

Let $f(x)=x^k+a_{k-1}x^{k-1}+\cdots+a_1x+a_0\in \mathbb{F}_{q}[x]$ be an irreducible polynomial of degree $k$ over $\mathbb{F}_q$. The companion matrix of $f(x)$ is
$$\footnotesize{\mathbf{P}=\left[
\begin{array}{cccccccc}
0 & 1 & 0 & \cdots & 0\\
0 & 0 & 1 & \cdots &0\\
\vdots & \vdots & \vdots & \ddots & \vdots\\
0 & 0 & 0 & \cdots & 1\\
-a_0 & -a_1 & -a_2 & \cdots &  -a_{k-1}
\end{array}\right]}\in \mathrm{Mat}_{k\times k}(\mathbb{F}_q).$$
It is well-known that $f(x)$ is the characteristic polynomial of $\mathbf{P}$, thus $f(\mathbf{P})=\mathbf{0}$. Moreover, $$\mathbb{F}_q[\mathbf{P}]=\{b_0\mathbf{I}_k+b_1\mathbf{P}+\cdots+b_{k-1}\mathbf{P}^{k-1}: b_i\in \mathbb{F}_q, 0\leq i\leq k-1 \}\cong\mathbb{F}_{q^{k}}.$$
In particular, if $f(x)$ is primitive, then $\mathbb{F}^{*}_{q^k}\cong\langle \mathbf{P}\rangle$, the cyclic multiplicative group generated by the matrix $\mathbf{P}$.

For an integer $1\leq k <n$, let $\mathrm{U}=\mathrm{rs}(\mathbf{U})\in \mathcal{G}_q(n, k)$, where $\mathbf{U}\in\mathrm{Mat}_{k\times n}(\mathbb{F}_q)$ is a generator matrix of $\mathrm{U}$. For integers $1\le i<j\le k$, we introduce the following notations:

\begin{itemize}
  \item $\mathrm{U}^{(j)}=\mathrm{rs}(\mathbf{U}^{(j)})\in\mathcal{G}_{q}(n,j)$, where $\mathbf{U}^{(j)}\in\mathrm{Mat}_{j\times n}(\mathbb{F}_q)$ denotes the submatrix of $\mathbf{U}$ consisting of its first $j$ rows.

  \item $\mathrm{U}^{[j]}=\mathrm{rs}(\mathbf{U}^{[j]})\in\mathcal{G}_{q}(n,k-j)$, where $\mathbf{U}^{[j]}\in\mathrm{Mat}_{(k-j)\times n}(\mathbb{F}_q)$ denotes the submatrix of $\mathbf{U}$ obtained from its last $k-j$ rows.

  \item $\mathbf{U}^{\{j\}}\in \mathrm{Mat}_{1\times n}(\mathbb{F}_q)$ denotes the $j$-th row of $\mathbf{U}$.

  \item $\mathbf{U}^{[i,j]}\in\mathrm{Mat}_{(j-i+1)\times n}(\mathbb{F}_q)$ denotes the submatrix  of $\mathbf{U}$ consisting of rows $i$ to $j$.
\end{itemize}

Let $\mathbf{I}_k$ be the $k\times k$ identity matrix, and let $\mathbf{0}_k$ be the zero matrix with $k$ columns (the number of rows is determined by concrete context).

\begin{proposition}\label{proposition 3.1}
Let $\mathbf{P}$ be the companion matrix of a primitive polynomial of degree $k$ over $\mathbb{F}_q$. For all $1\leq i\leq q^k-1$ and $1\leq j\leq k$, denote the $j$-th row of $\mathbf{P}^i$ by $v_{ij}=(\mathbf{P}^i)^{\{j\}}$. Then the following statements hold:
\begin{itemize}
		\item[\rm(i)]  For each $1\leq j\leq k-1$, $v_{1j}=\mathbf{I}_k^{\{j+1\}}$.
		\item[\rm(ii)] For all $1\leq i\leq q^k-1$ and $1\leq j\leq k$,
$$
(\mathbf{P}^i)^{\{j\}}=v_{ij}=v_{i1}\mathbf{P}^{j-1}=v_{11}\mathbf{P}^{i+j-2}.
$$
\item[\rm(iii)] For each $1\leq i\leq k-1$, $v_{i1}=\mathbf{I}_k^{\{i+1\}}$.
\end{itemize}
\end{proposition}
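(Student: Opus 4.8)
The plan is to reduce the whole proposition to a single elementary fact about how $\mathbf{P}$ acts on the standard row basis of $\mathbb{F}_q^k$. Writing $e_j := \mathbf{I}_k^{\{j\}}$ for the $j$-th standard row vector, recall that for any matrix $\mathbf{M}$ the product $e_j\mathbf{M}$ is exactly the $j$-th row of $\mathbf{M}$. Reading the companion matrix off row by row, the $j$-th row of $\mathbf{P}$ equals $e_{j+1}$ for every $1\le j\le k-1$ (only the last row, $(-a_0,\dots,-a_{k-1})$, is different), which gives the shift relation $e_j\mathbf{P}=e_{j+1}$ for $1\le j\le k-1$. Part (i) then follows immediately, since $v_{1j}=(\mathbf{P}^1)^{\{j\}}$ is by definition the $j$-th row of $\mathbf{P}$, hence equals $e_{j+1}=\mathbf{I}_k^{\{j+1\}}$.

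Next I would iterate the shift relation to obtain the one workhorse identity $e_j = e_1\mathbf{P}^{\,j-1}$ for all $1\le j\le k$; this is trivial for $j=1$ and follows by a one-line induction, $e_1\mathbf{P}^{\,j-1}=(e_1\mathbf{P}^{\,j-2})\mathbf{P}=e_{j-1}\mathbf{P}=e_j$ for $2\le j\le k$. Part (iii) then drops out by taking $j=i+1$ with $1\le i\le k-1$: $v_{i1}=(\mathbf{P}^i)^{\{1\}}=e_1\mathbf{P}^i=e_{i+1}=\mathbf{I}_k^{\{i+1\}}$.

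For part (ii), I would start from $v_{ij}=(\mathbf{P}^i)^{\{j\}}=e_j\mathbf{P}^i$ and substitute $e_j=e_1\mathbf{P}^{\,j-1}$ to rewrite this as $e_1\mathbf{P}^{\,i+j-1}$; it then only remains to check that the two displayed expressions equal the same vector, namely $v_{i1}\mathbf{P}^{\,j-1}=(e_1\mathbf{P}^i)\mathbf{P}^{\,j-1}=e_1\mathbf{P}^{\,i+j-1}$ and $v_{11}\mathbf{P}^{\,i+j-2}=(e_1\mathbf{P})\mathbf{P}^{\,i+j-2}=e_1\mathbf{P}^{\,i+j-1}$, using that $v_{11}$ is the first row of $\mathbf{P}$, i.e. $e_1\mathbf{P}=e_2$. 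I do not expect any real obstacle here — the statement is essentially a careful transcription of the defining property of a companion matrix — so the only thing to be disciplined about is the exponent bookkeeping. I would also remark that primitivity of $f$ plays no role in these identities: they hold for every exponent $i\ge 1$, and the range $1\le i\le q^k-1$ in the statement merely reflects that primitivity makes $\mathbf{P}$ have multiplicative order $q^k-1$, which is the regime relevant to the cyclic orbit construction to follow.
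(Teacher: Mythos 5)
Your proof is correct and is essentially the same argument as the paper's: both hinge on the observation that the companion matrix shifts the standard row basis, i.e.\ $\mathbf{I}_k^{\{j\}}\mathbf{P}=\mathbf{I}_k^{\{j+1\}}$ for $1\le j\le k-1$, and then iterate/associate powers of $\mathbf{P}$ to relate $v_{ij}$, $v_{i1}$, and $v_{11}$. The only difference is cosmetic: you isolate a single workhorse identity $e_j=e_1\mathbf{P}^{j-1}$ and order the parts as (i), (iii), (ii), whereas the paper folds the same induction into the proof of (ii) and deduces (iii) last; your closing remark that primitivity is irrelevant to the identities themselves is a fair and accurate observation.
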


\begin{proof}
$\rm(i)$ From the structure of $\mathbf{P}$, we know that $v_{1j}=\mathbf{I}_k^{\{j+1\}}$ for $1\leq j\leq k-1$.

$\rm(ii)$  For $2\leq j\leq k$, using \rm(i) we obtain
$$v_{1j}=\mathbf{I}_k^{\{j\}}\mathbf{P}=v_{1,j-1}\mathbf{P}=(\mathbf{I}_k^{\{j-1\}}\mathbf{P})\mathbf{P}=v_{1,j-2}\mathbf{P}^2=\dots=v_{11}\mathbf{P}^{j-1}.$$
Hence, $v_{1j}=v_{11}\mathbf{P}^{j-1}$ for $1\leq j\leq k$. Moreover, for any  $1\leq i\leq q^k-1$ and $1\leq j\leq k$, we have
$$v_{ij}=v_{1j}\mathbf{P}^{i-1}=v_{11}\mathbf{P}^{j-1}\mathbf{P}^{i-1}=(v_{11}\mathbf{P}^{i-1})\mathbf{P}^{j-1}=v_{i1}\mathbf{P}^{j-1}.$$

$\rm(iii)$ For each $1\leq i\leq k-1$, it follows from \rm(i) and \rm(ii) that
$$v_{i1}=v_{11}\mathbf{P}^{i-1}=v_{1i}=\mathbf{I}_k^{\{i+1\}}.$$
\end{proof}

\begin{lemma}\label{lemma 3.2.2}
Let $\mathbf{P}$ be the companion matrix of a primitive polynomial of degree $k$ over $\mathbb{F}_q$. Let $1\leq a< b\leq q^k-1$, $1\leq x< y\leq k-1$ and $1\leq x'< y'\leq k-1$ be integers. If each row of $(\mathbf{P}^a)^{[x, y]}$ can be $\mathbb{F}_q$-linear represented by the rows of $(\mathbf{P}^b)^{[x', y']}$, then each row of $(\mathbf{P}^a)^{[x, y+1]}$ can be $\mathbb{F}_q$-linear represented by the rows of $(\mathbf{P}^b)^{[x', y'+1]}$.
\end{lemma}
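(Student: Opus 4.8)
The key observation is Proposition~\ref{proposition 3.1}(ii), which tells us that right-multiplication by $\mathbf{P}$ acts as a downward shift on the rows of any power $\mathbf{P}^i$: concretely, $(\mathbf{P}^i)^{\{j\}}\mathbf{P} = v_{ij}\mathbf{P} = v_{11}\mathbf{P}^{i+j-1} = v_{i,j+1} = (\mathbf{P}^i)^{\{j+1\}}$, valid whenever $j+1\leq k$. So the plan is to exploit this shift equivariance. Suppose each row of $(\mathbf{P}^a)^{[x,y]}$ lies in the row space of $(\mathbf{P}^b)^{[x',y']}$; that is, there is a matrix $\mathbf{A}\in\mathrm{Mat}_{(y-x+1)\times(y'-x'+1)}(\mathbb{F}_q)$ with $(\mathbf{P}^a)^{[x,y]} = \mathbf{A}\,(\mathbf{P}^b)^{[x',y']}$. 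Right-multiplying both sides by $\mathbf{P}$ and using the shift identity row by row gives $(\mathbf{P}^a)^{[x+1,y+1]} = \mathbf{A}\,(\mathbf{P}^b)^{[x'+1,y'+1]}$, since $y+1\leq k$ and $y'+1\leq k$ by the hypotheses $y,y'\leq k-1$. This says each row of $(\mathbf{P}^a)^{[x+1,y+1]}$ is an $\mathbb{F}_q$-combination of rows of $(\mathbf{P}^b)^{[x'+1,y'+1]}$.

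From here I would assemble the conclusion by combining this shifted relation with a suitable piece of the original one. We want each row of $(\mathbf{P}^a)^{[x,y+1]}$, i.e.\ rows $x$ through $y+1$ of $\mathbf{P}^a$, represented by the rows of $(\mathbf{P}^b)^{[x',y'+1]}$, i.e.\ rows $x'$ through $y'+1$ of $\mathbf{P}^b$. Rows $x$ through $y$ of $\mathbf{P}^a$ are handled directly by the hypothesis, and since $\{x',\dots,y'\}\subseteq\{x',\dots,y'+1\}$ these representations still use only rows in the larger index set. The single remaining row is row $y+1$ of $\mathbf{P}^a$: by the shifted relation $(\mathbf{P}^a)^{[x+1,y+1]} = \mathbf{A}\,(\mathbf{P}^b)^{[x'+1,y'+1]}$, the last row $(\mathbf{P}^a)^{\{y+1\}}$ is an $\mathbb{F}_q$-combination of the rows of $(\mathbf{P}^b)^{[x'+1,y'+1]}$, and again $\{x'+1,\dots,y'+1\}\subseteq\{x',\dots,y'+1\}$. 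Thus every row of $(\mathbf{P}^a)^{[x,y+1]}$ is an $\mathbb{F}_q$-linear combination of the rows of $(\mathbf{P}^b)^{[x',y'+1]}$, which is exactly the claim.

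I do not anticipate a serious obstacle here; the argument is essentially a bookkeeping exercise once the shift equivariance of right-multiplication by $\mathbf{P}$ is isolated. The one point that needs care is checking the index ranges: the shift identity $(\mathbf{P}^i)^{\{j\}}\mathbf{P}=(\mathbf{P}^i)^{\{j+1\}}$ requires $j\leq k-1$, and this is precisely why the hypotheses demand $y\leq k-1$ and $y'\leq k-1$ — so that the shifted blocks $(\mathbf{P}^a)^{[x+1,y+1]}$ and $(\mathbf{P}^b)^{[x'+1,y'+1]}$ genuinely consist of rows of $\mathbf{P}^a$ and $\mathbf{P}^b$ with indices at most $k$. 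It is also worth noting explicitly that the parameter $a$ plays no real role: the statement is about how the row spans of horizontal bands of $\mathbf{P}^a$ and $\mathbf{P}^b$ relate, and the constraint $a<b$ is merely inherited from the context in which the lemma will be applied. I would phrase the proof compactly around the single matrix equation $(\mathbf{P}^a)^{[x,y]} = \mathbf{A}\,(\mathbf{P}^b)^{[x',y']}$ and its image under right-multiplication by $\mathbf{P}$.
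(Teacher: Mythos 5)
Your proof is correct and takes essentially the same route as the paper's: both hinge on the shift identity $(\mathbf{P}^i)^{\{j\}}\mathbf{P}=(\mathbf{P}^i)^{\{j+1\}}$ from Proposition~\ref{proposition 3.1}, applied to the new row $(\mathbf{P}^a)^{\{y+1\}}$ and to the representing block to land inside $(\mathbf{P}^b)^{[x'+1,y'+1]}\subseteq(\mathbf{P}^b)^{[x',y'+1]}$. The paper shifts only the single row $(\mathbf{P}^a)^{\{y\}}$ rather than the whole band, but the underlying argument is the same.
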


\begin{proof}
According to Proposition \ref{proposition 3.1}, $(\mathbf{P}^a)^{\{y+1\}}=(\mathbf{P}^a)^{\{y\}}\mathbf{P}$. Since $(\mathbf{P}^a)^{\{y\}}$ can be $\mathbb{F}_q$-linear represented by the rows of $(\mathbf{P}^b)^{[x', y']}$, it follows that $(\mathbf{P}^a)^{\{y+1\}}$ can be $\mathbb{F}_q$-linear represented by the rows of $$(\mathbf{P}^b)^{[x', y']}\mathbf{P}=(\mathbf{P}^b)^{[x'+1, y'+1]}.$$
Consequently, each row of $(\mathbf{P}^a)^{[x, y+1]}$ can be $\mathbb{F}_q$-linear represented by the rows of $(\mathbf{P}^b)^{[x', y'+1]}$.
\end{proof}

\subsection{Optimum distance flag codes}
Throughout this paper, we let $n=sk+h$ such that $s\geq 2$ and $0\leq h< k$ unless stated otherwise. The following lemma determines the longest possible type vector $\mathbf{t}$ for which an optimum distance flag code $\mathcal{C} \subseteq \mathcal{F}_q(\mathbf{t}, n)$ exists, under the condition that one of its projected codes is a partial $k$-spread of size at least $\sum^{s-1}_{i=1} q^{ik+h} + 1$.

\begin{lemma}\label{lemma 3.2}{\cite[Theorem 3.1]{Jia}}
Let $n=sk+h$, where $s\geq 2$ and $0\leq h< k$. Let $\mathcal{C}$ be an optimum distance flag code of type $\mathbf{t}=(t_1, \ldots, t_r)$ on $\mathbb{F}^{n}_q$. Assume that there exists an index $i\in \{1, \ldots, r\}$ such that the projected code $\mathcal{C}_i$ is a partial $k$-spread of $\mathbb{F}_q^n$ with $|\mathcal{C}_i| \geq \sum_{i=1}^{s-1}q^{ik+h}+1$. Then for every $j\in \{1, \dots, r\}$, it holds
$$t_j\leq k\ \ \mbox{or}\ \ t_j\geq n-k.$$
\end{lemma}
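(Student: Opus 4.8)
The plan is to argue by contradiction: suppose some $t_j$ satisfies $k < t_j < n-k$, and derive a bound on $|\mathcal{C}_i|$ that contradicts the hypothesis $|\mathcal{C}_i| \geq \sum_{\ell=1}^{s-1} q^{\ell k + h} + 1$. The key structural input is Lemma~\ref{lemma 3.1} (together with the remark that it holds when only one of $a,b$ exists): since $\mathcal{C}$ is an optimum distance flag code, the projected codes $\mathcal{C}_a$ and $\mathcal{C}_b$ attain the maximum possible subspace distance and have cardinality equal to $|\mathcal{C}|$. Because $|\mathcal{C}_i| \leq |\mathcal{C}|$ always, and $|\mathcal{C}_i| = |\mathcal{C}|$ for the spread-projected index $i$ (a $k$-spread-sized partial spread forces cardinality-consistency at that level once we know $|\mathcal{C}_i|\le|\mathcal{C}|$ and ... actually we get $|\mathcal{C}| \ge |\mathcal{C}_i| \ge \sum q^{\ell k+h}+1$ directly), we conclude $|\mathcal{C}_a|, |\mathcal{C}_b| \geq \sum_{\ell=1}^{s-1} q^{\ell k + h} + 1$ as well.

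Next I would recall the standard upper bound on the size of a constant-dimension code in $\mathcal{G}_q(n,d)$ attaining maximum subspace distance $\min\{2d, 2(n-d)\}$. When $d \le \lfloor n/2\rfloor$ such a code is a partial $d$-spread, and the anticode/Singleton-type bound gives $|\mathscr{C}| \le \frac{q^n-1}{q^d-1}$ (and more precisely the partial-spread bound, but the crude Grassmannian/spread bound suffices here). The crucial point is that the function $d \mapsto \frac{q^n-1}{q^d-1}$ is strictly decreasing in $d$ on $1 \le d \le \lfloor n/2 \rfloor$. So if $j \le a$ and $k < t_j \le \lfloor n/2\rfloor$, then since $t_j \ge k+1$,
$$
|\mathcal{C}_{a}| \le \frac{q^n-1}{q^{t_a}-1}, \qquad \text{and via } \mathcal{C}_j,\ |\mathcal{C}| = |\mathcal{C}_j| \cdot \frac{|\mathcal{C}|}{|\mathcal{C}_j|};
$$
here the cleaner route is to bound $|\mathcal{C}_j|$ itself: $\mathcal{C}_j$ attains maximum distance (by Lemma~\ref{lemma 2.1}, since $\mathcal{C}$ is optimum distance), so $\mathcal{C}_j$ is a partial $t_j$-spread with $|\mathcal{C}_j| \le \frac{q^n-1}{q^{t_j}-1} \le \frac{q^n-1}{q^{k+1}-1}$. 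By cardinality-consistency of optimum distance flag codes (Lemma~\ref{lemma 2.1}(ii)), $|\mathcal{C}| = |\mathcal{C}_j|$, whence $|\mathcal{C}_i| \le |\mathcal{C}| = |\mathcal{C}_j| \le \frac{q^n-1}{q^{k+1}-1}$. It then remains to check the elementary inequality
$$
\frac{q^{sk+h}-1}{q^{k+1}-1} < \sum_{\ell=1}^{s-1} q^{\ell k + h} + 1,
$$
which contradicts $|\mathcal{C}_i| \ge \sum_{\ell=1}^{s-1}q^{\ell k+h}+1$. The symmetric case $j \ge b$, i.e. $\lceil n/2\rceil \le t_j \le n-k-1$, is handled identically by passing to complements (replace $t_j$ by $n - t_j$, which lies strictly between $k$ and $\lfloor n/2\rfloor$), using that the maximum subspace distance and the dual code behave compatibly.

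The main obstacle is the final numerical estimate: one must verify $\frac{q^{sk+h}-1}{q^{k+1}-1} < \sum_{\ell=1}^{s-1} q^{\ell k+h}+1$ for all $q \ge 2$, $s \ge 2$, $0 \le h < k$. This is not entirely trivial because the exponent gap is only $1$ (we lose a factor $q^{k+1}-1$ rather than $q^k-1$ in the denominator), so the argument is genuinely using $t_j \ge k+1$ and would fail at $t_j = k$. I would prove it by writing $q^{sk+h} - 1 = (q^{k+1}-1)\big(q^{(s-1)k+h-1} + \cdots\big) + (\text{remainder})$ via the geometric series, or more simply by noting $\sum_{\ell=1}^{s-1} q^{\ell k+h} \ge q^{(s-1)k+h} \cdot \frac{q^k-1}{q^k} \cdot \frac{1}{1-q^{-k}}$-type lower bound and comparing; the cleanest is: $\sum_{\ell=1}^{s-1}q^{\ell k+h} = q^h \cdot \frac{q^{sk}-q^k}{q^k-1} = \frac{q^{sk+h}-q^{k+h}}{q^k-1} > \frac{q^{sk+h}-q^{k+h}}{q^{k+1}-1} > \frac{q^{sk+h}-1}{q^{k+1}-1} - 1$ once $q^{k+h} < q^{k+1}$... which needs $h \le 1$; so more care is needed and I would instead directly bound $\frac{q^{sk+h}-1}{q^{k+1}-1} \le \frac{q^{sk+h}}{q^{k+1}-1} \le \frac{q^{sk+h}}{q^k(q-1)+ (q^k-1)} $ and compare term-by-term against the geometric sum $\sum_{\ell=1}^{s-1}q^{\ell k+h}$, whose largest term $q^{(s-1)k+h}$ already nearly matches $\frac{q^{sk+h}}{q^{k}}$. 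This last comparison is the one routine computation I would actually carry out in full.
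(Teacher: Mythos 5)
The paper does not actually prove Lemma~\ref{lemma 3.2}; it is cited verbatim from \cite[Theorem~3.1]{Jia}. So there is no in-paper argument to compare against, and your proposal has to be judged on its own merits.

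Your strategy is sound and, once the final computation is filled in, correct. By Lemma~\ref{lemma 2.1}, an optimum distance flag code is cardinality-consistent and every projected code $\mathcal{C}_j$ attains the maximum possible subspace distance $\min\{2t_j,2(n-t_j)\}$. If some $t_j$ satisfied $k < t_j < n-k$, then $\min\{t_j,n-t_j\}\geq k+1$, and $\mathcal{C}_j$ would be (directly, or after passing to orthogonal complements when $t_j>n/2$) a partial $\min\{t_j,n-t_j\}$-spread, forcing
$$
|\mathcal{C}| = |\mathcal{C}_j| \;\leq\; \frac{q^{n}-1}{q^{\min\{t_j,\,n-t_j\}}-1}\;\leq\;\frac{q^{sk+h}-1}{q^{k+1}-1},
$$
while on the other hand $|\mathcal{C}|\geq|\mathcal{C}_i|\geq\sum_{\ell=1}^{s-1}q^{\ell k+h}+1$. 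This is exactly the right chain.

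The inequality you flag as the ``main obstacle'' does hold, and it is worth recording why, since you leave it half-checked. Write $\sum_{\ell=1}^{s-1}q^{\ell k+h}=\dfrac{q^{sk+h}-q^{k+h}}{q^{k}-1}$. Clearing denominators, the claim
$\dfrac{q^{sk+h}-q^{k+h}}{q^{k}-1}+1>\dfrac{q^{sk+h}-1}{q^{k+1}-1}$
is equivalent to
$$
q^{(s+1)k+h}(q-1)\;>\;q^{2k+h+1}-q^{2k+1}-q^{k+h}+q^{k+1}.
$$
When $h=0$ the right side collapses to $q^{k}(q-1)$ and the inequality is immediate from $s\geq2$. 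When $h\geq1$ (hence $k\geq2$, since $h<k$), the right side is strictly less than $q^{2k+h+1}$, while the left side is at least $q^{(s+1)k+h}\geq q^{3k+h}>q^{2k+h+1}$. So the numerical estimate is fine, and the argument really does exploit $t_j\geq k+1$: at $t_j=k$ the denominator $q^{k}-1$ reproduces the spread bound and no contradiction arises, as you correctly observe.

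Two small clean-up remarks. First, the detour through $\mathcal{C}_a$ and $\mathcal{C}_b$ and Lemma~\ref{lemma 3.1} is unnecessary; once you invoke cardinality-consistency from Lemma~\ref{lemma 2.1}, the bound on $|\mathcal{C}_j|$ for the offending index $j$ closes the argument directly. Second, for the case $t_j>n/2$ it is worth saying explicitly that maximum distance $2(n-t_j)$ forces $\mathrm{U}+\mathrm{V}=\mathbb{F}_q^n$ for distinct codewords, so their orthogonal complements are pairwise disjoint $(n-t_j)$-spaces, which is what licenses the partial-spread bound with $n-t_j$ in place of $t_j$.
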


Let $\mathcal{C}\subseteq \mathcal{F}_q(\mathbf{t}, n)$ be an optimum distance flag code. If some projected code of $\mathcal{C}$ is a partial $k$-spread of $\mathbb{F}_q^n$ with cardinality at least $\sum^{s-1}_{i=1}q^{ik+h}+1$, the type vector $\mathbf{t}=(t_1, \ldots, t_r)$ is called (generalized) admissible type if $$k\in \{t_1, \ldots, t_r\}\subseteq \{1, 2, \ldots, k, n-k, \ldots, n-1\}.$$
The type vector $$(1, 2, \ldots, k, n-k, \ldots, n-1)$$ is referred to as the (generalized) full admissible type.
When $k$ divides $n$ (i.e., $h=0$), these definitions coincide with the notions of admissible type and full admissible type introduced in \cite{perfect.matchings}.

For each integer $1\leq i\leq s-1$, let $\mathbf{P}_i$ be the companion matrix of a primitive polynomial of degree $ik+h$ over $\mathbb{F}_q$.
Define
$$\footnotesize{G_i=\left\{\left[
\begin{array}{cccc}
\mathbf{I}_{(s-i-1)k} & \mathbf{0} & \mathbf{0}\\
\mathbf{0} & \mathbf{I}_k & \mathbf{0}\\
\mathbf{0} & \mathbf{0} & \mathbf{X}_i
\end{array}
\right]: \mathbf{X}_i\in \langle\mathbf{P}_i\rangle
\right\}=\left\langle\footnotesize{\left[
\begin{array}{cccc}
\mathbf{I}_{(s-i-1)k} & \mathbf{0} & \mathbf{0}\\
\mathbf{0} & \mathbf{I}_k & \mathbf{0}\\
\mathbf{0} & \mathbf{0} & \mathbf{P}_i
\end{array}
\right]} \right\rangle\leq \mathrm{GL}_n(\mathbb{F}_q)}.$$
Then $G_i$ is a cyclic group of order $q^{ik+h}-1$ under matrix multiplication and $$G_i\cong \langle \mathbf{P}_i\rangle\cong \mathbb{F}^{*}_{q^{ik+h}}.$$
Set
$$\footnotesize{\mathbf{A}_i=\left[
\begin{array}{cccc}
\mathbf{0} & \mathbf{I}_k & \mathbf{I}^{(k)}_{ik+h}\\
\mathbf{0} & \mathbf{0} & \mathbf{I}^{[k]}_{ik+h}\\
\mathbf{I}_{(s-i-1)k} & \mathbf{0} & \mathbf{0}\\
\mathbf{0} & \mathbf{0} & \mathbf{I}^{(k-1)}_{ik+h}
\end{array}
\right]}\ \ \normalsize{\mbox{and}}\ \ \footnotesize{\mathbf{B}_i=\left[
\begin{array}{cccc}
\mathbf{0} & \mathbf{I}_k & \mathbf{0}\\
\mathbf{0} & \mathbf{0} & \mathbf{I}^{[k]}_{ik+h}\\
\mathbf{I}_{(s-i-1)k} & \mathbf{0} & \mathbf{0}\\
\mathbf{0} & \mathbf{0} & \mathbf{I}^{(k-1)}_{ik+h}
\end{array}
\right]}\in \mathrm{Mat}_{(n-1)\times n}(\mathbb{F}_q),$$
also,
$$\footnotesize{\mathbf{M}=\left[
\begin{array}{ccccc}
 & & & & 1\\
 & & & 1 & \\
 & & \iddots & & \\
0 & 1 & & & \\
\end{array}
\right]}\in \mathrm{Mat}_{(n-1)\times n}(\mathbb{F}_q).$$
Define
\begin{equation}\label{equation 3.1}
\mathscr{C}=\{\mathrm{rs}(\mathbf{A}_ig), \mathrm{rs}(\mathbf{B}_i), \mathrm{rs}(\mathbf{M}): g\in G_i, 1\leq i\leq s-1\}
\end{equation}
and
\begin{equation}\label{equation 3.2}
\mathcal{C}=\{\mathcal{F}_{\mathbf{W}}=(\mathrm{W}^{(1)}, \mathrm{W}^{(2)}, \ldots, \mathrm{W}^{(n-1)}): \mathrm{W}=\mathrm{rs}(\mathbf{W})\in \mathscr{C}\},
\end{equation}
where $\mathbf{W}$ is the generator matrix of the subspace $\mathrm{W}$, and $\mathrm{W}^{(j)}=\mathrm{rs}(\mathbf{W}^{(j)})$ with $\mathbf{W}^{(j)}$ being the submatrix of $\mathbf{W}$ formed by its first $j$ rows for $1\leq j\leq n-1$.

Throughout the rest of the paper, the symbols $n$, $\mathbf{P}_i$, $G_i$, $\mathbf{A}_i$, $\mathbf{B}_i$, $\mathbf{M}$, $\mathscr{C}$ and $\mathcal{C}$ will keep the meanings just assigned, unless stated otherwise.

\begin{theorem}\label{theorem 3.1}
Let $n=sk+h$ such that $s\geq 2$ and $0\leq h< k$. Let $\mathscr{C}$ and $\mathcal{C}$ be defined as in \eqref{equation 3.1} and \eqref{equation 3.2}, respectively. Then the following statements hold:
\begin{itemize}
		\item[\rm(i)] The $k$-projected code of $\mathcal{C}$:
$$\mathcal{C}_k=\{\mathrm{W}^{(k)}: \mathrm{W}\in \mathscr{C}\}\subseteq \mathcal{G}_q(n, k)$$
is a partial $k$-spread of $\mathbb{F}_q^n$ with cardinality $|\mathcal{C}_k|=|\mathcal{C}|=|\mathscr{C}|=\sum^{s-1}_{i=1}q^{ik+h}+1$.
\item[\rm(ii)]  The $(n-k)$-projected code of $\mathcal{C}$:
$$\mathcal{C}_{n-k}=\{\mathrm{W}^{(n-k)}: \mathrm{W}\in \mathscr{C}\}\subseteq \mathcal{G}_q(n, n-k)$$
attains the maximum possible distance $2k$ and $|\mathcal{C}_{n-k}|=|\mathcal{C}|=|\mathscr{C}|=\sum^{s-1}_{i=1}q^{ik+h}+1$.
\end{itemize}
\end{theorem}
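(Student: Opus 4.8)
The plan is to make the two projected codes $\mathcal C_k$ and $\mathcal C_{n-k}$ completely explicit and then to verify, pair by pair, that they have the stated cardinality and minimum distance. For each $i$ write $\mathbb F_q^n=V_1^{(i)}\oplus V_2^{(i)}\oplus V_3^{(i)}$ for the coordinate decomposition into blocks of widths $(s-i-1)k$, $k$ and $ik+h$ matching the column partition of $\mathbf A_i,\mathbf B_i$, so that $G_i$ acts trivially on $V_1^{(i)}\oplus V_2^{(i)}$ and as $\langle\mathbf P_i\rangle$ on $V_3^{(i)}$. Two preliminary observations are used throughout: by Proposition~\ref{proposition 3.1} the $j$-th row of $\mathbf P_i^m$ is $v_{mj}=e_2\mathbf P_i^{\,m+j-2}$, where $e_2$ denotes the second standard basis vector of $V_3^{(i)}$, and $v_{m1},\dots,v_{m,ik+h}$ form a basis of $V_3^{(i)}$; and since $\mathbb F_q[\mathbf P_i]\cong\mathbb F_{q^{ik+h}}$ is a field, any relation $e_2\mathbf Q=\mathbf 0$ with $\mathbf Q\in\mathbb F_q[\mathbf P_i]$ forces $\mathbf Q=\mathbf 0$, and $\mathbf Q=\mathbf 0$ forces the associated polynomial of degree $<ik+h$ to vanish. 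A direct multiplication then gives $(\mathbf A_ig)^{(k)}=[\,\mathbf 0\mid\mathbf I_k\mid(\mathbf P_i^m)^{(k)}\,]$, $\mathbf B_i^{(k)}=[\,\mathbf 0\mid\mathbf I_k\mid\mathbf 0\,]$, and $\mathbf M^{(k)}$ spans $\langle e_{n-k+1},\dots,e_n\rangle$; and, crucially, the first $n-k$ rows of $\mathbf A_ig$ (resp.\ $\mathbf B_i$) are precisely its first three row-blocks, the dropped rows being $(\mathbf 0,\mathbf 0,(\mathbf P_i^m)^{(k-1)})$ (resp.\ $(\mathbf 0,\mathbf 0,\mathbf I^{(k-1)}_{ik+h})$).

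For part~(i) I would show that any two distinct members of $\mathcal C_k$ meet only in $\{0\}$. The codewords split into three families --- those of the form $\mathrm{rs}[\,\mathbf 0\mid\mathbf I_k\mid(\mathbf P_i^m)^{(k)}\,]$, the subspaces $V_2^{(i)}$, and the last-$k$-coordinate subspace --- so there are six kinds of pairs. Whenever the two ``$\mathbf I_k$-blocks'' occupy disjoint coordinate ranges --- every pair with distinct indices, and every pair involving $\mathbf M$ --- a common vector must have vanishing $V_2$-component and hence be $0$; this is pure bookkeeping with the column partition. The only substantive case is two codewords of the first family with the same $i$: matching the $V_2^{(i)}$-parts of a common vector $\sum_j c_j(e_j^{V_2}+v_{mj}^{V_3})=\sum_j d_j(e_j^{V_2}+v_{m'j}^{V_3})$ gives $c_j=d_j$, and then matching the $V_3^{(i)}$-parts gives $e_2\mathbf P_i^{-1}p(\mathbf P_i)(\mathbf P_i^m-\mathbf P_i^{m'})=\mathbf 0$ with $p(x)=\sum_j c_jx^{j-1}$; since $\mathbf P_i^m\ne\mathbf P_i^{m'}$ by primitivity and $\deg p<k\le ik+h$, the field fact forces $p=0$. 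Hence distinct labels give distinct, mutually spread-disjoint $k$-subspaces, so $|\mathcal C_k|$ equals the number of labels $\sum_{i=1}^{s-1}(q^{ik+h}-1)+(s-1)+1=\sum_{i=1}^{s-1}q^{ik+h}+1$; since forming flags and projecting cannot increase cardinality and $|\mathscr C|$ is at most this number, $|\mathcal C_k|=|\mathcal C|=|\mathscr C|$.

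For part~(ii), as $n-k\ge k$ the largest possible distance in $\mathcal G_q(n,n-k)$ is $2k$, so it suffices to show $\mathrm U+\mathrm V=\mathbb F_q^n$ for any two distinct $\mathrm U,\mathrm V\in\mathcal C_{n-k}$. By the first paragraph, $(\mathbf A_ig)^{(n-k)}$ spans $V_1^{(i)}\oplus\langle e_j^{V_2^{(i)}}+v_{mj}^{V_3^{(i)}}:1\le j\le k\rangle\oplus R^{(i,m)}$, where $R^{(i,m)}=\mathrm{rs}((\mathbf P_i^m)^{[k]})$ is spanned by the last $ik+h-k$ rows of $\mathbf P_i^m$; $\mathbf B_i^{(n-k)}$ is the coordinate subspace omitting the first $k$ coordinates of $V_3^{(i)}$; and $\mathbf M^{(n-k)}=\langle e_{k+1},\dots,e_n\rangle$. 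The recurring device is ``unmixing'': as soon as the sum of the two codewords contains the block $V_2^{(i)}$, the rows $e_j^{V_2^{(i)}}+v_{mj}^{V_3^{(i)}}$ yield $v_{m1},\dots,v_{m,ik+h}$, i.e.\ all rows of $\mathbf P_i^m$, which span $V_3^{(i)}$; together with $V_1^{(i)}$ this already gives $\mathbb F_q^n$. A short case check against the column partition shows this covers every pair except two first-family codewords sharing the same index $i$; for that pair, working modulo $V_1^{(i)}$ and unmixing once more reduces the claim to the field-theoretic identity
$$S+\gamma S+(1-\gamma)M=\mathbb F_{q^{d}},\qquad d=ik+h,$$
where, under the isomorphism $e_2\mathbf P_i^{\ell}\leftrightarrow\mathbf P_i^{\ell}$ identifying $V_3^{(i)}$ with $\mathbb F_q[\mathbf P_i]\cong\mathbb F_{q^d}$, one puts $M=\langle1,\mathbf P_i,\dots,\mathbf P_i^{k-1}\rangle$, $S=\langle\mathbf P_i^{k},\dots,\mathbf P_i^{d-1}\rangle$ (so $S\oplus M=\mathbb F_{q^d}$) and $\gamma=\mathbf P_i^{\,m'-m}\notin\{0,1\}$.

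This identity is the crux, and I would settle it via the non-degenerate trace form $\langle a,b\rangle=\mathrm{Tr}_{\mathbb F_{q^d}/\mathbb F_q}(ab)$: if $v$ is orthogonal to $S$, to $\gamma S$ and to $(1-\gamma)M$, then $v\in S^{\perp}$ and $\gamma v\in S^{\perp}$, hence $(1-\gamma)v=v-\gamma v\in S^{\perp}$, while also $(1-\gamma)v\in M^{\perp}$; therefore $(1-\gamma)v\in S^{\perp}\cap M^{\perp}=(S+M)^{\perp}=\{0\}$, and since $\gamma\ne1$ we get $v=0$, so $(S+\gamma S+(1-\gamma)M)^{\perp}=\{0\}$. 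Granting this, every two distinct codewords of $\mathcal C_{n-k}$ span $\mathbb F_q^n$, which yields both $d_S(\mathcal C_{n-k})=2k$ and, since distinct labels then force distinct subspaces, $|\mathcal C_{n-k}|=\sum_{i=1}^{s-1}q^{ik+h}+1$. I expect this same-$i$ case of part~(ii) to be the real obstacle: everything else is coordinate bookkeeping, whereas here a direct dimension count of $S+\gamma S+(1-\gamma)M$ over all admissible $\gamma$ is awkward, and the short orthogonal-complement argument is the decisive point.
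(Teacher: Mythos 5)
Your proof is correct and takes a genuinely different, self-contained route from the paper's. The paper's own proof of Theorem~\ref{theorem 3.1} is short precisely because it outsources nearly all the work to the earlier reference [Jia]: the partial-spread property of $\mathcal{C}_k$ is taken from [Jia, Theorem~3.2], the distance-$2k$ property of $\mathcal{C}_{n-k}$ from [Jia, Theorem~3.3], and the paper only checks that the extra codeword $\mathrm{rs}(\mathbf{M})$ interacts correctly with the rest. You instead re-derive both cited facts directly, by organizing the coordinates into blocks $V_1^{(i)}\oplus V_2^{(i)}\oplus V_3^{(i)}$ and identifying $V_3^{(i)}$ with $\mathbb{F}_q[\mathbf{P}_i]\cong\mathbb{F}_{q^{ik+h}}$ via $e_2\mathbf{P}_i^\ell\leftrightarrow\mathbf{P}_i^\ell$ (justified by Proposition~\ref{proposition 3.1}). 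In part~(i) the mixed-index, $\mathbf{B}$, and $\mathbf{M}$ cases reduce to disjoint column ranges, and the same-index case is settled by noting that $e_2\,\mathbf{P}_i^{-1}p(\mathbf{P}_i)(\mathbf{P}_i^m-\mathbf{P}_i^{m'})=\mathbf 0$ with $\deg p<k\le ik+h$ forces $p=0$ since $\mathbb{F}_q[\mathbf{P}_i]$ is a field. In part~(ii) your ``unmixing'' device dispatches every mixed pair, and the lone remaining case --- two $\mathbf{A}_i$-orbit codewords with the same $i$ --- is reduced to the identity $S+\gamma S+(1-\gamma)M=\mathbb{F}_{q^d}$ with $S\oplus M=\mathbb{F}_{q^d}$ and $\gamma\ne0,1$, which you prove cleanly with the nondegenerate trace form: any $v$ orthogonal to all three summands satisfies $(1-\gamma)v\in S^\perp\cap M^\perp=\{0\}$, hence $v=0$. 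That trace argument is the decisive new idea, and it does not appear anywhere in this paper or in the cited proof sketch; it replaces what would otherwise be an $m$-dependent rank computation with a uniform structural reason valid for all $\gamma$. The trade-off is clear: the paper's proof is shorter but opaque without [Jia], while yours is longer and heavier on block bookkeeping (the ``pure bookkeeping'' and ``short case check'' claims do hold up, but a fully written version would need to verify roughly six kinds of pairs in each part) yet fully self-contained and more explanatory.
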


\begin{proof}
\rm(i) For $1\leq i\leq s-1$, let $g=\footnotesize{\left[
\begin{array}{cccc}
\mathbf{I}_{(s-i-1)k} & \mathbf{0} & \mathbf{0}\\
\mathbf{0} & \mathbf{I}_k & \mathbf{0}\\
\mathbf{0} & \mathbf{0} & \mathbf{X}_i
\end{array}
\right]}\in G_i$. It holds
\begin{equation}\label{equation 3.3}\begin{aligned}
\footnotesize{\mathbf{A}_ig}&=\footnotesize{\left[
\begin{array}{cccc}
\mathbf{0} & \mathbf{I}_k & \mathbf{I}^{(k)}_{ik+h}\\
\mathbf{0} & \mathbf{0} & \mathbf{I}^{[k]}_{ik+h}\\
\mathbf{I}_{(s-i-1)k} & \mathbf{0} & \mathbf{0}\\
\mathbf{0} & \mathbf{0} & \mathbf{I}^{(k-1)}_{ik+h}
\end{array}
\right]\left[
\begin{array}{cccc}
\mathbf{I}_{(s-i-1)k} & \mathbf{0} & \mathbf{0}\\
\mathbf{0} & \mathbf{I}_k & \mathbf{0}\\
\mathbf{0} & \mathbf{0} & \mathbf{X}_i
\end{array}
\right]}
=\footnotesize{\left[
\begin{array}{cccc}
\mathbf{0} & \mathbf{I}_k & \mathbf{X}^{(k)}_i\\
\mathbf{0} & \mathbf{0} & \mathbf{X}^{[k]}_i\\
\mathbf{I}_{(s-i-1)k} & \mathbf{0} & \mathbf{0}\\
\mathbf{0} & \mathbf{0} & \mathbf{X}^{(k-1)}_i
\end{array}
\right]}.
\end{aligned}\end{equation}
According to Theorem 3.2 in \cite{Jia}, $|\mathscr{C}\setminus \{\mathrm{rs}(\mathbf{M})\}|=\sum^{s-1}_{i=1}q^{ik+h}$ and
$$\{\mathrm{rs}((\mathbf{A}_ig)^{(k)}), \mathrm{rs}(\mathbf{B}_i^{(k)}): g\in G_i, 1\leq i\leq s-1\}$$
is a partial $k$-spread of $\mathbb{F}_q^n$ with cardinality $\sum^{s-1}_{i=1}q^{ik+h}$.

Together with the structure of $\mathbf{M}$, we conclude that $\mathcal{C}_k$ is a partial $k$-spread of $\mathbb{F}_q^n$ with cardinality $|\mathcal{C}_k|=|\mathcal{C}|=|\mathscr{C}|=\sum^{s-1}_{i=1}q^{ik+h}+1$.

\rm(ii) From Theorem 3.3 in \cite{Jia}, we obtain that
$$\{\mathrm{W}^{(n-k)}: \mathrm{W}\in \mathscr{C}\setminus \{\mathrm{rs}(\mathbf{M})\}\}\subseteq \mathcal{G}_q(n, n-k)$$
has the maximum possible distance $2k$ and cardinality $\sum^{s-1}_{i=1}q^{ik+h}$.

Let $\mathbf{U}$ be any matrix in $\left\{\mathbf{A}_ig, \mathbf{B}_i: g\in G_i, 1\leq i\leq s-1\right\}$. The rank of the first $k$ columns of $\mathbf{U}^{(n-k)}$ is $k$, it follows from this fact and the structure of $\mathbf{M}$ that $$\mathrm{rk}\footnotesize{\left[
\begin{array}{cccc}
\mathbf{M}^{(n-k)}\\
\mathbf{U}^{(n-k)}
\end{array}
\right]}=n.$$
Hence, by \eqref{equation 1}, $d_S(\mathrm{rs}(\mathbf{M}^{(n-k)}), \mathrm{rs}(\mathbf{U}^{(n-k)}))=2k$.
Therefore, $\mathcal{C}_{n-k}$ attains the maximum possible distance $2k$ and $|\mathcal{C}_{n-k}|=|\mathcal{C}|=|\mathscr{C}|=\sum^{s-1}_{i=1}q^{ik+h}+1$.
\end{proof}

\begin{theorem}\label{theorem 3.3}
Let $n=sk+h$ such that $s\geq 2$ and $0\leq h< k$. Let $\mathscr{C}$ and $\mathcal{C}$ be defined as in \eqref{equation 3.1} and \eqref{equation 3.2}, respectively. Define
$$\mathcal{C}'=\{(\mathrm{W}^{(1)}, \mathrm{W}^{(2)}, \ldots, \mathrm{W}^{(k)}, \mathrm{W}^{(n-k)}, \ldots, \mathrm{W}^{(n-1)}): \mathrm{W}\in \mathscr{C}\}.$$
Then $\mathcal{C}'$ is an optimum distance flag code of the (generalized) full admissible type on $\mathbb{F}^n_q$ with cardinality $\sum^{s-1}_{i=1}q^{ik+h}+1$.
\end{theorem}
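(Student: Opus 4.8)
The plan is to apply Lemma~\ref{lemma 3.1} (the Singer-group characterization of optimum distance flag codes), which reduces the problem to checking the two ``middle'' projected codes $\mathcal{C}'_a$ and $\mathcal{C}'_b$. For the type vector $\mathbf{t}'=(1,2,\dots,k,n-k,\dots,n-1)$ on $\mathbb{F}_q^{n}$ with $n=sk+h\ge 2k$, we have $t_a=k$ and $t_b=n-k$, so it suffices to verify: (1) the $k$-projected code $\mathcal{C}'_k$ is a constant-dimension code in $\mathcal{G}_q(n,k)$ attaining the maximum subspace distance $\min\{2k,2(n-k)\}=2k$; (2) the $(n-k)$-projected code $\mathcal{C}'_{n-k}$ attains the maximum subspace distance $\min\{2(n-k),2k\}=2k$; (3) cardinality consistency $|\mathcal{C}'_k|=|\mathcal{C}'_{n-k}|=|\mathcal{C}'|$. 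But all of this is exactly the content of Theorem~\ref{theorem 3.1}: part~(i) gives that $\mathcal{C}'_k=\mathcal{C}_k$ is a partial $k$-spread (hence a constant-dimension code with subspace distance $2k$, since disjoint $k$-spaces force distance $2k$) of size $\sum_{i=1}^{s-1}q^{ik+h}+1$, and part~(ii) gives that $\mathcal{C}'_{n-k}=\mathcal{C}_{n-k}$ has distance $2k$ and the same cardinality.

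First I would observe that $\mathcal{C}'$ is obtained from $\mathcal{C}$ by the subsequence operation $\mathcal{C}\mapsto\mathcal{C}_{\mathbf{t}'}$ described just before Lemma~\ref{lemma 3.1}, with $\mathbf{t}'=(1,2,\dots,k,n-k,\dots,n-1)$; since every $\mathcal{F}_{\mathbf{W}}\in\mathcal{C}$ has $i$-th entry $\mathrm{W}^{(i)}$, the flags of $\mathcal{C}'$ are precisely $(\mathrm{W}^{(1)},\dots,\mathrm{W}^{(k)},\mathrm{W}^{(n-k)},\dots,\mathrm{W}^{(n-1)})$ for $\mathrm{W}\in\mathscr{C}$, as in the statement. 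Next I would confirm that $\mathbf{t}'$ is a valid type vector for $\mathbb{F}_q^n$: the entries are strictly increasing and lie in $\{1,\dots,n-1\}$ because $n-k>k$ when $s\ge 2$ and $h\ge 0$ (indeed $n-k=(s-1)k+h\ge k$, with equality only if $s=2,h=0$, in which case $\mathbf{t}'=(1,\dots,k,k,\dots)$ would not be strictly increasing — so one must note $n-k=k$ forces $s=2,h=0$ and then $\mathbf{t}'=(1,\dots,k-1,k,k+1,\dots,2k-1)$ is still fine since $n-k=k$ just means the two blocks meet at $k$; I would handle this boundary case by noting the type is then simply the full flag type on $\mathbb{F}_q^{2k}$, with $a=b$ and $t_a=n/2=k$, and Lemma~\ref{lemma 3.1} still applies by the remark after it). Then I would identify $a$ and $b$ for $\mathbf{t}'$: when $n>2k$ we have $b=a+1$ with $t_a=k$, $t_b=n-k$; when $n=2k$ we have $a=b$ with $t_a=k=n/2$.

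I would then invoke Theorem~\ref{theorem 3.1}: by part~(i), $\mathcal{C}'_k=\{\mathrm{W}^{(k)}:\mathrm{W}\in\mathscr{C}\}$ is a partial $k$-spread of $\mathbb{F}_q^n$ of cardinality $\sum_{i=1}^{s-1}q^{ik+h}+1$, and being a partial $k$-spread, any two distinct codewords intersect trivially, so $d_S(\mathcal{C}'_k)=2k=\min\{2k,2(n-k)\}$ is the maximum possible subspace distance for $k$-subspaces of $\mathbb{F}_q^n$; moreover $|\mathcal{C}'_k|=|\mathscr{C}|$. By part~(ii), $\mathcal{C}'_{n-k}=\{\mathrm{W}^{(n-k)}:\mathrm{W}\in\mathscr{C}\}$ attains distance $2k=\min\{2k,2(n-k)\}$, the maximum for $(n-k)$-subspaces, and $|\mathcal{C}'_{n-k}|=|\mathscr{C}|$. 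This yields $|\mathcal{C}'_k|=|\mathcal{C}'_{n-k}|=|\mathscr{C}|$, and since $|\mathcal{C}'|\le|\mathscr{C}|$ trivially while $|\mathcal{C}'|\ge|\mathcal{C}'_k|=|\mathscr{C}|$, we get $|\mathcal{C}'|=|\mathcal{C}'_k|=|\mathcal{C}'_{n-k}|=\sum_{i=1}^{s-1}q^{ik+h}+1$. Hence condition~(ii) of Lemma~\ref{lemma 3.1} holds, so $\mathcal{C}'$ is an optimum distance flag code; and since $k\in\{t_1,\dots,t_r\}=\{1,\dots,k,n-k,\dots,n-1\}$, its type is the (generalized) full admissible type.

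The only genuine subtlety — and the place I would be most careful — is the bookkeeping of $a$ and $b$ in the degenerate case $n=2k$ (i.e.\ $s=2$, $h=0$), where $\mathbf{t}'$ is the ordinary full-flag type on $\mathbb{F}_q^{2k}$, $k=n/2$ is itself a coordinate, and $a=b$; here one must use the remark after Lemma~\ref{lemma 3.1} (that it holds when only one of $a,b$ exists, equivalently when $a=b$) together with the single projected code $\mathcal{C}'_k$ from Theorem~\ref{theorem 3.1}(i). Everything else is a direct citation of Theorem~\ref{theorem 3.1} and Lemma~\ref{lemma 3.1}; no new computation is required. I expect the proof to be short, essentially a three-line deduction, with the write-up mostly spent making the $a/b$ identification and the partial-$k$-spread-implies-maximum-distance remark explicit.
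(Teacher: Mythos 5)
Your proof is correct and follows essentially the same route as the paper: both rest on combining Lemma~\ref{lemma 3.1} with Theorem~\ref{theorem 3.1}, and the paper's only additional citation is Lemma~\ref{lemma 3.2}, which merely justifies the terminology ``(generalized) full admissible type'' and is implicit in your identification of the type vector. Your explicit treatment of the degenerate case $s=2,\ h=0$ (where $n-k=k$, $a=b$, and the type collapses to the full flag type on $\mathbb{F}_q^{2k}$) is more careful than the paper's one-line deduction but does not change the substance of the argument.
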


\begin{proof}
It is clear that $|\mathcal{C}'|=|\mathscr{C}|$. From Theorem \ref{theorem 3.1}, the projected codes $\mathcal{C}_k$ and $\mathcal{C}_{n-k}$ both achieve the maximum possible distance and $|\mathcal{C}_k|=|\mathcal{C}_{n-k}|=|\mathscr{C}|$. Therefore, the claim is obtained directly from Lemmas \ref{lemma 3.1} and \ref{lemma 3.2}.
\end{proof}

For an integer $0\leq c\leq k-1$, denote by $e_q(k, n, c)$ the largest possible cardinality of a $c$-intersecting equidistant code in $\mathcal{G}_{q}(n, k)$.

\begin{lemma}\label{lemma 3.3} {\cite[Theorem 5]{The.maximum.size}}
Let $n\equiv h\ \mathrm{mod}\ k$ such that $0\leq h< k$. If $k> \frac{q^h-1}{q-1}$, then $e_q(k, n, 0)=\frac{q^n-q^{k+h}}{q^k-1}+1$.
\end{lemma}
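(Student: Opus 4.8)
Writing $n=mk+h$ with $m\ge 1$ and putting $\theta_j:=\frac{q^j-1}{q-1}$, the plan is to establish the two matching bounds $e_q(k,n,0)\ge\frac{q^n-q^{k+h}}{q^k-1}+1$ and $e_q(k,n,0)\le\frac{q^n-q^{k+h}}{q^k-1}+1$. A first bookkeeping step is the identity $\frac{q^n-q^{k+h}}{q^k-1}+1=\sum_{i=1}^{m-1}q^{ik+h}+1$, so the quantity under consideration is exactly the cardinality featured throughout the paper, and (since $c=0$) the object to be bounded is a partial $k$-spread. If $h=0$ the asserted value is the $k$-spread number $\frac{q^n-1}{q^k-1}$ and the statement reduces to Andr\'e's theorem on spreads together with the trivial upper bound, so I assume $h\ge 1$ from now on. For the lower bound it suffices to exhibit one partial $k$-spread of the stated size: for $m\ge 2$ this is the $k$-projected code $\mathcal{C}_k$ of Theorem \ref{theorem 3.1}(i) (taken with $s=m$), and for $m=1$ it is trivial, since $2k>k+h$ forces any two $k$-subspaces of $\mathbb{F}_q^{k+h}$ to meet and hence $e_q(k,k+h,0)=1$. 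Note that the hypothesis $k>\frac{q^h-1}{q-1}$ plays no role so far; it enters only in the upper bound.

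For the upper bound I would argue by contradiction. Suppose $S\subseteq\mathcal{G}_q(n,k)$ is a partial $k$-spread with $|S|=\sum_{i=1}^{m-1}q^{ik+h}+2$, and let $H$ be its hole set, i.e. the set of one-dimensional subspaces of $\mathbb{F}_q^n$ contained in no member of $S$. Since the members of $S$ are pairwise disjoint, $|S|\,\theta_k+|H|=\theta_n$, and substituting the value of $|S|$ gives $|H|=\theta_{k+h}-2\theta_k=\frac{q^{k+h}-2q^k+1}{q-1}$, which is strictly positive because $h\ge 1$. The key structural step is that $H$ is a $q^{k-1}$-divisible point set: for an arbitrary hyperplane $\mathcal{H}$ of $\mathbb{F}_q^n$, every $U\in S$ contributes $\theta_k$ points to $\mathcal{H}$ when $U\subseteq\mathcal{H}$ and $\theta_{k-1}$ points otherwise, so writing $a_{\mathcal{H}}$ for the number of members of $S$ contained in $\mathcal{H}$ and counting the points of $\mathcal{H}$ yields $|H|-|H\cap\mathcal{H}|=q^{k-1}\bigl(q^{n-k}-|S|+a_{\mathcal{H}}\bigr)$; in particular $q^{k-1}\mid |H|-|H\cap\mathcal{H}|$ for every $\mathcal{H}$, and also $a_{\mathcal{H}}\le\frac{\theta_{n-1}-|S|\theta_{k-1}}{q^{k-1}}$.

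It then remains to contradict the existence of such an $H$. One route is purely on the hole side: by the known restrictions on the attainable cardinalities of $q^{k-1}$-divisible point sets (which go back to Ward's theorem on divisible codes), no nonempty $q^{k-1}$-divisible set of size $\theta_{k+h}-2\theta_k$ can exist once $k>\frac{q^h-1}{q-1}$; for instance when $q=2,\,h=1$ one has $\theta_{k+1}-2\theta_k=1$, which is not divisible by $q^{k-1}$ for $k\ge 2$, so already the elementary divisibility forbids it. A second route is a second-moment count: from $\sum_{\mathcal{H}}a_{\mathcal{H}}=|S|\,\theta_{n-k}$, $\sum_{\mathcal{H}}a_{\mathcal{H}}(a_{\mathcal{H}}-1)=|S|(|S|-1)\theta_{n-2k}$ and the per-hyperplane ceiling above, the inequality $\sum_{\mathcal{H}}a_{\mathcal{H}}^2\le\bigl\lfloor(\theta_{n-1}-|S|\theta_{k-1})/q^{k-1}\bigr\rfloor\sum_{\mathcal{H}}a_{\mathcal{H}}$ becomes a quadratic (Drake--Freeman type) inequality in $|S|$ which, after inserting $|S|=\sum_{i=1}^{m-1}q^{ik+h}+2$ and simplifying, should reduce to $k\le\frac{q^h-1}{q-1}$, the desired contradiction; the hyperplane maximizing $a_{\mathcal{H}}$ simultaneously induces a strictly smaller partial $k$-spread inside $\mathbb{F}_q^{n-1}$, which is how one would set up a descent on the dimension to handle all $m$ uniformly. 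I expect this final quantitative step to be the real obstacle: the naive incidence count recovers only a bound that may be off by a constant, and forcing out the exact threshold $k>\frac{q^h-1}{q-1}$ requires either the sharp minihyper / divisible-set results or a carefully bookkept N\u{a}stase--Sissokho-style descent that tracks the excess $|S|-\sum_{i=1}^{m-1}q^{ik+h}-1$ through successive hyperplane sections; by comparison, the lower bound and the divisibility observation are routine.
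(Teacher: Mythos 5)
This lemma is stated in the paper without proof: it is a direct citation of Theorem~5 of N\u{a}stase and Sissokho \cite{The.maximum.size}, and the paper uses it purely as a black-box upper bound on the cardinality of a partial $k$-spread. There is therefore no ``paper's own proof'' to compare against; your attempt is a from-scratch reconstruction of an external theorem.

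Your lower-bound argument and hole-set bookkeeping are all correct. The identity $\frac{q^n-q^{k+h}}{q^k-1}+1=\sum_{i=1}^{m-1}q^{ik+h}+1$, the reduction of the case $h=0$ to the spread number, the hole count $|H|=\theta_{k+h}-2\theta_k>0$ for a hypothetical partial $k$-spread of one more than the claimed maximum, and the divisibility $q^{k-1}\mid |H|-|H\cap\mathcal{H}|$ together with $|H|-|H\cap\mathcal{H}|=q^{k-1}(q^{n-k}-|S|+a_{\mathcal{H}})$ all check. Using Theorem~\ref{theorem 3.1}(i) of the present paper for the existence side is logically admissible, since that theorem does not rely on the lemma under discussion (though of course it is not how the cited authors originally produced the extremal configuration).

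The genuine gap is exactly where you flag it: the upper bound. You reduce the problem to showing that no nonempty $q^{k-1}$-divisible point set of cardinality $\theta_{k+h}-2\theta_k$ exists once $k>\frac{q^h-1}{q-1}$, but that nonexistence statement at precisely this threshold \emph{is} the content of the theorem; waving at ``known restrictions'' on divisible point sets or at a sketched Drake--Freeman second-moment count does not discharge it. Elementary divisibility alone only disposes of special cases such as $q=2$, $h=1$, where $|H|=1$. For general parameters one needs either the sharp nonexistence bound for $q^{r}$-divisible sets of the relevant size, or the N\u{a}stase--Sissokho inductive descent through hyperplane sections with careful tracking of the excess, and you carry out neither. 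As written, the proposal establishes the lower bound and sets up the correct divisibility framework, but leaves the substantive half of the lemma --- the upper bound --- unproved, as you yourself acknowledge.
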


The flag code constructed in Theorem \ref{theorem 3.3} has the (generalized) full admissible type. Consequently, this theorem provides the explicit lower bound $\sum_{i=1}^{s-1} q^{ik+h} + 1$ for the cardinality of optimum distance flag codes of this type on $\mathbb{F}_q^n$. Moreover, when $k > \frac{q^h - 1}{q - 1}$, Lemma \ref{lemma 3.3} shows that this flag code actually attains the maximum possible cardinality.

\subsection{Full flag codes on $\mathbb{F}_q^{2k+h}$}
\begin{lemma}\label{lemma 3.4} {\cite[Lemma 4.1]{Jia}}
Let $\mathcal{C}\subseteq\mathcal{F}_q(\mathbf{t}, n)$ be a flag code with $\mathbf{t}=(t_1, \ldots, t_r)$, and let $\ell$ be an integer satisfying $1\leq \ell\leq \min\{a-1, r-b\}$. Define
$$\mathbf{t}'=(t_{a-\ell+1}, \ldots, t_a, t_{b}, \ldots, t_{b+\ell-1})\ \ \mbox{and}\ \ \ \widehat{\mathbf{t}}=(t_1, \ldots, t_{a-\ell}, t_{b+\ell}, \ldots, t_r).$$
Then $$D^{(\mathbf{t}, n)}=D^{(\mathbf{t}', n)}+D^{(\widehat{\mathbf{t}}, n)}.$$
Furthermore,
$d_f(\mathcal{C})=D^{(\mathbf{t}, n)}-2\ell$ if and only if $\mathcal{C}$ satisfies the following two conditions:\vspace{2pt}
\begin{itemize}
		\item[\rm(i)] $d_f(\mathcal{C}_{\mathbf{t}'})=D^{(\mathbf{t}', n)}-2\ell$, where $\mathcal{C}_{\mathbf{t'}}=\{\mathcal{F}_{\mathbf{t'}}: \mathcal{F} \in \mathcal{C}\}\subseteq \mathcal{F}_q(\mathbf{t}', n)$;
		\item[\rm(ii)] The flag code $\mathcal{C}_{\widehat{\mathbf{t}}}\subseteq \mathcal{F}_q(\widehat{\mathbf{t}}, n)$ is an optimum distance flag code.
\end{itemize}
\end{lemma}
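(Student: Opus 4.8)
For the additivity $D^{(\mathbf t,n)}=D^{(\mathbf t',n)}+D^{(\widehat{\mathbf t},n)}$, the plan is to read off \eqref{equation 2.3} directly. In the case $b=a+1$ (the case $t_a=t_b=n/2$ being handled the same way), the index set $\{1,\dots,r\}$ partitions into $\{1,\dots,a-\ell\}\cup\{a-\ell+1,\dots,a\}\cup\{b,\dots,b+\ell-1\}\cup\{b+\ell,\dots,r\}$, with $t_i\le\lfloor n/2\rfloor$ for every $i\le a$ and $t_i>\lfloor n/2\rfloor$ for every $i\ge b$; substituting this partition into \eqref{equation 2.3} expresses each of the three numbers as a partial sum of the same shape $2\sum t_i+2\sum(n-t_i)$, and the middle two sum to the first.

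For the equivalence I would pass to the \emph{distance defect} $\delta(\mathcal F,\mathcal F'):=D^{(\mathbf t,n)}-d_f(\mathcal F,\mathcal F')$ of a pair of flags $\mathcal F=(\mathrm F_1,\dots,\mathrm F_r)$, $\mathcal F'=(\mathrm F'_1,\dots,\mathrm F'_r)$. Put $c_i=\dim(\mathrm F_i\cap\mathrm F'_i)$ and $e_i=n-\dim(\mathrm F_i+\mathrm F'_i)$. By \eqref{equation 1}, the difference between $d_S(\mathrm F_i,\mathrm F'_i)$ and its largest possible value $2\min\{t_i,n-t_i\}$ equals $2c_i$ for $i\le a$ and $2e_i$ for $i\ge b$, so by \eqref{equation 2.3}
$$\delta(\mathcal F,\mathcal F')\;=\;2\sum_{i\le a}c_i\;+\;2\sum_{i\ge b}e_i\;\ge\;0 .$$
Two features matter. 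First, $\delta$ splits along the blocks: $\delta(\mathcal F,\mathcal F')=\delta_{\mathbf t'}(\mathcal F_{\mathbf t'},\mathcal F'_{\mathbf t'})+\delta_{\widehat{\mathbf t}}(\mathcal F_{\widehat{\mathbf t}},\mathcal F'_{\widehat{\mathbf t}})$, the two summands being the defects relative to $D^{(\mathbf t',n)}$ and $D^{(\widehat{\mathbf t},n)}$ and both non-negative. Second, monotonicity along a flag: from $\mathrm F_i\subseteq\mathrm F_{i+1}$, $\mathrm F'_i\subseteq\mathrm F'_{i+1}$ the sequence $(c_i)_i$ is non-decreasing and $(e_i)_i$ is non-increasing.

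For the forward direction I would assume $d_f(\mathcal C)=D^{(\mathbf t,n)}-2\ell$, i.e. $\delta\le2\ell$ on every pair of distinct codewords, with equality somewhere. If $c_{a-\ell}\ge1$ for some such pair, monotonicity forces $c_{a-\ell},c_{a-\ell+1},\dots,c_a\ge1$ --- that is $\ell+1$ positive terms, all among the indices $\le a$ --- whence $\sum_{i\le a}c_i\ge\ell+1$ and $\delta\ge2(\ell+1)>2\ell$, a contradiction; hence $c_i=0$ for all $i\le a-\ell$, and symmetrically $e_i=0$ for all $i\ge b+\ell$, i.e. $\delta_{\widehat{\mathbf t}}\equiv0$ on $\mathcal C$. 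Then any two distinct codewords of $\mathcal C_{\widehat{\mathbf t}}$ lie at the maximal distance $D^{(\widehat{\mathbf t},n)}$, which is (ii), and since $\delta=\delta_{\mathbf t'}$ on $\mathcal C$ the hypothesis says precisely that $\delta_{\mathbf t'}\le2\ell$ with equality somewhere, giving $d_f(\mathcal C_{\mathbf t'})=D^{(\mathbf t',n)}-2\ell$, which is (i). For the converse I would assume (i), (ii), and the cardinality-consistency of $\mathcal C$ (needed, see below); then distinct codewords of $\mathcal C$ have distinct $\mathbf t'$- and $\widehat{\mathbf t}$-subflags, so (ii) gives $\delta_{\widehat{\mathbf t}}\equiv0$ on $\mathcal C$, hence $\delta=\delta_{\mathbf t'}\le2\ell$ by (i), while lifting a defect-$2\ell$ pair of $\mathcal C_{\mathbf t'}$ to $\mathcal C$ yields a pair of codewords with $\delta=2\ell$; therefore $d_f(\mathcal C)=D^{(\mathbf t,n)}-2\ell$.

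The step I expect to need the most care --- and the reason this is a genuine equivalence rather than a one-sided estimate --- is the bookkeeping between $\mathcal C$ and its restrictions $\mathcal C_{\mathbf t'}$, $\mathcal C_{\widehat{\mathbf t}}$. If two distinct codewords of $\mathcal C$ shared the same $\mathbf t'$-subflag (respectively the same $\widehat{\mathbf t}$-subflag), the defect of that pair would be carried entirely by the complementary block and could be as small as $D^{(\widehat{\mathbf t},n)}$ (respectively $D^{(\mathbf t',n)}-2\ell$), both strictly below $D^{(\mathbf t,n)}-2\ell$ since $D^{(\mathbf t',n)}\ge4\ell$ and $\widehat{\mathbf t}\ne\emptyset$. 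So in the converse direction one must know that the restriction maps $\mathcal C\to\mathcal C_{\mathbf t'}$ and $\mathcal C\to\mathcal C_{\widehat{\mathbf t}}$ are injective, and this is exactly what the cardinality-consistency of $\mathcal C$ supplies --- restricting a flag to any subset of its entries then loses no codeword. For the forward direction no such hypothesis is required, since there the offending collapses are already excluded by the bound $\delta\le2\ell$ on distinct pairs.
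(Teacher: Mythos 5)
Your overall strategy—recasting everything through the distance defect $\delta(\mathcal F,\mathcal F')=D^{(\mathbf t,n)}-d_f(\mathcal F,\mathcal F')=2\sum_{i\le a}c_i+2\sum_{i\ge b}e_i$ and exploiting monotonicity of $c_i$ (increasing) and $e_i$ (decreasing)—is correct and clean; the additivity of $D$ and the forward direction both go through exactly as you argue. The remark you flag at the end is a genuine observation, and it is worth stating more forcefully than you do: the converse direction is \emph{false} as literally stated. Conditions (i) and (ii) alone do not exclude two distinct codewords of $\mathcal C$ sharing the same $\mathbf t'$-subflag, and if such a pair exists then, by (ii), its flag distance equals $D^{(\widehat{\mathbf t},n)}=D^{(\mathbf t,n)}-D^{(\mathbf t',n)}<D^{(\mathbf t,n)}-2\ell$ (since $D^{(\mathbf t',n)}\ge 4\ell$), so $d_f(\mathcal C)<D^{(\mathbf t,n)}-2\ell$. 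A concrete witness in $\mathbb F_2^5$ with $\mathbf t=(1,2,3,4)$, $\ell=1$: take $\mathcal F_1=(\langle e_1\rangle,\langle e_1,e_2\rangle,\langle e_1,e_2,e_3\rangle,\langle e_1,e_2,e_3,e_4\rangle)$, $\mathcal F_2=(\langle e_2\rangle,\langle e_1,e_2\rangle,\langle e_1,e_2,e_3\rangle,\langle e_1,e_2,e_3,e_5\rangle)$, $\mathcal F_3=(\langle e_4\rangle,\langle e_2,e_4\rangle,\langle e_2,e_4,e_5\rangle,\langle e_2,e_3,e_4,e_5\rangle)$; one checks that (i) and (ii) hold ($d_f(\mathcal C_{\mathbf t'})=6=D^{(\mathbf t',n)}-2$ and $\mathcal C_{\widehat{\mathbf t}}$ is optimum), yet $d_f(\mathcal F_1,\mathcal F_2)=4<10=D^{(\mathbf t,n)}-2$. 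So an extra hypothesis—your cardinality-consistency of $\mathcal C$, or merely injectivity of the map $\mathcal C\to\mathcal C_{\mathbf t'}$, which reduces via the $a$-th component to $|\mathcal C_a|=|\mathcal C|$—is genuinely needed.

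Two smaller points. First, your claim that sharing a $\widehat{\mathbf t}$-subflag could also violate the conclusion is not actually an independent obstruction: if $\mathcal F_{\widehat{\mathbf t}}=\mathcal F'_{\widehat{\mathbf t}}$ then $c_{a-\ell}=t_{a-\ell}\ge 1$, so by monotonicity $c_{a-\ell+1},\dots,c_a\ge 1$ and likewise $e_b,\dots,e_{b+\ell-1}\ge 1$, forcing $\delta_{\mathbf t'}\ge 4\ell>2\ell$ and contradicting (i); thus (i) already eliminates $\widehat{\mathbf t}$-collisions, and only the $\mathbf t'$-collision requires the extra hypothesis. Second, in the degenerate case $a=b$ (so $t_a=n/2$) your defect formula as written double-counts the index $a$; there $c_a=e_a$ and only one copy should appear. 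Neither affects the thrust of your argument. Since the paper merely cites this lemma from \cite{Jia} without reproducing a proof, there is no in-text argument to compare against; your reconstruction, with the hypothesis you append, is correct.
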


If exactly one of $a$ and $b$ is meaningful, Lemma \ref{lemma 3.4} still holds and yields the following simplified characterization.
\begin{itemize}
\item When $t_r< \frac{n}{2}$, then $a=r$ and $b$ does not exist. It remains to consider $\mathcal{C}_{(t_{a-\ell+1}, \ldots, t_a)}$
and $\mathcal{C}_{a-\ell}$, where $1\leq \ell\leq a-1$.

\item When $t_1> \frac{n}{2}$, then $b=1$ and $a$ does not exist. It remains to consider $\mathcal{C}_{(t_{b}, \ldots, t_{b+\ell-1})}$
and $\mathcal{C}_{b+\ell}$, where $1\leq \ell\leq r-b$.
\end{itemize}

\begin{theorem}\label{theorem 3.4}
Let $\mathcal{C}\subseteq\mathcal{F}_q(\mathbf{t}, n)$ be a flag code with $\mathbf{t}=(t_1, \ldots, t_r)$, and let $1\leq \ell\leq \min\{a-1, r-b\}$ be an integer. If $d_f(\mathcal{C})=D^{(\mathbf{t}, n)}-2\ell$, then $|\mathcal{C}_1|=\dots=|\mathcal{C}_a|=|\mathcal{C}|$.
\end{theorem}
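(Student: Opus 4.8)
The plan is to use Lemma \ref{lemma 3.4} to reduce the distance defect across the ``middle block'' and then invoke Lemma \ref{lemma 3.1} (equivalently Lemma \ref{lemma 2.1}) on the remaining flag code, propagating the cardinality equalities back to the original code. First I would apply Lemma \ref{lemma 3.4} with the given $\ell$: it tells us that $d_f(\mathcal{C}) = D^{(\mathbf{t},n)} - 2\ell$ forces (i) $d_f(\mathcal{C}_{\mathbf{t}'}) = D^{(\mathbf{t}',n)} - 2\ell$ for $\mathbf{t}' = (t_{a-\ell+1}, \ldots, t_a, t_b, \ldots, t_{b+\ell-1})$, and (ii) $\mathcal{C}_{\widehat{\mathbf{t}}}$ is an optimum distance flag code for $\widehat{\mathbf{t}} = (t_1, \ldots, t_{a-\ell}, t_{b+\ell}, \ldots, t_r)$. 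From (ii) and Lemma \ref{lemma 2.1}, $\mathcal{C}_{\widehat{\mathbf{t}}}$ is cardinality-consistent, so $|\mathcal{C}_{\widehat{\mathbf{t}}}| = |\mathcal{C}_1| = \cdots = |\mathcal{C}_{a-\ell}| = |\mathcal{C}_{b+\ell}| = \cdots = |\mathcal{C}_r|$. Since the projected code $\mathcal{C}_i$ of $\mathcal{C}$ coincides with the corresponding projected code of $\mathcal{C}_{\widehat{\mathbf{t}}}$ (projection commutes with taking subflags), and since trivially $|\mathcal{C}_i| \leq |\mathcal{C}_{\widehat{\mathbf{t}}}| \leq |\mathcal{C}|$, this already yields $|\mathcal{C}_1| = \cdots = |\mathcal{C}_{a-\ell}| = |\mathcal{C}|$.

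Next I would handle the indices $a - \ell + 1, \ldots, a$, which are precisely the ``left half'' of $\mathbf{t}'$. The key point is that $\mathcal{C}_{\mathbf{t}'}$ is a flag code on $\mathbb{F}_q^n$ whose middle split point is again between position $\ell$ (the image of $t_a$) and position $\ell+1$ (the image of $t_b$), and it has distance defect $2\ell$. If $\ell = 1$, then $\mathbf{t}' = (t_a, t_b)$ and $\mathcal{C}_{\mathbf{t}'}$ is quasi-optimum; the projected codes $\mathcal{C}_a$ and $\mathcal{C}_b$ must then each attain the maximum subspace distance with one possible "collision" accounted for by the distance drop, but crucially a quasi-optimum flag code of type $(t_a, t_b)$ still has all projected codes of the same cardinality as the flag code itself — indeed if $|(\mathcal{C}_{\mathbf{t}'})_a| < |\mathcal{C}_{\mathbf{t}'}|$ then two distinct flags share their $t_a$-component, contributing $0$ to $d_S$ at that coordinate, so to keep $d_f \geq D^{(\mathbf{t}',n)} - 2$ the $t_b$-components must differ by at least $D^{(\mathbf{t}',n)} - 2 \geq 2(n - t_b) + 2$, which exceeds the maximum possible $2(n-t_b)$, a contradiction. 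For general $\ell$, I would either iterate this argument along the chain $a, a-1, \ldots, a-\ell+1$ using Lemma \ref{lemma 3.4} recursively on progressively shorter sub-type-vectors, or argue directly: any collision at coordinate $t_j$ for $j \in \{a-\ell+1, \ldots, a\}$ forces the total distance loss at that coordinate to be $2t_j$, and since $\mathbf{t}'$ has total maximum distance $D^{(\mathbf{t}',n)} = 2\sum_{m=a-\ell+1}^{a} t_m + 2\sum_{m=b}^{b+\ell-1}(n - t_m)$, the defect budget $2\ell$ combined with the requirement that each of the other $2\ell - 1$ coordinates contributes either its full value or at least $2$ forces a contradiction when $t_j > \ell$; and since $t_{a-\ell+1} \geq \ell$ always (the $t_i$ are strictly increasing starting from at least $1$, so $t_{a-\ell+1} \geq t_a - (\ell - 1) \geq \ldots$), more care is needed here.

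The main obstacle will be the last step: ruling out collisions in the coordinates $t_{a-\ell+1}, \ldots, t_a$ when $\ell \geq 2$, because the naive "distance budget" argument is tight precisely when the $t_j$ are small, and one genuinely needs to use that $\mathcal{C}_{\mathbf{t}'}$ is a flag code of type $\mathbf{t}'$ on $\mathbb{F}_q^n$ (not on some smaller ambient space) — i.e. that the ``right half'' coordinates $t_b, \ldots, t_{b+\ell-1}$ contribute distances bounded by $2(n - t_m)$ which are large. I would organize the clean argument as a downward induction on $\ell$: peel off the innermost pair $(t_a, t_b)$ from $\mathbf{t}'$ using Lemma \ref{lemma 3.4} with parameter $1$, which either (a) shows $\mathcal{C}_{\mathbf{t}'}$ restricted to the outer indices is optimum — handing us cardinality-consistency for $t_{a-\ell+1}, \ldots, t_{a-1}$ — and leaves a quasi-optimum pair giving cardinality-consistency at $t_a$; or (b) the defect sits entirely on the outer part, in which case recurse. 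Tracking that each intermediate flag code still has its split point in the right place (so that Lemma \ref{lemma 3.4} applies with the stated bound on its $\ell$) is the bookkeeping that must be done carefully, but it is routine. Finally, combining both ranges gives $|\mathcal{C}_1| = \cdots = |\mathcal{C}_a| = |\mathcal{C}|$, completing the proof.
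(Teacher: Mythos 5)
Your approach is genuinely different from the paper's, and it has real gaps; the paper's proof is a short direct counting argument that you may find clarifying to compare against.

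The paper argues by contradiction: if some $|\mathcal{C}_i| < |\mathcal{C}|$ with $i \leq a$, pick two distinct flags $\mathcal{F}\neq\mathcal{F}'$ with $\mathrm{F}_i = \mathrm{F}'_i$. By nesting, $\dim(\mathrm{F}_j\cap\mathrm{F}'_j)\geq t_i$ for every $j$ with $i\leq j\leq a$, so the distance loss at each such coordinate is at least $2t_i$, giving a total loss of at least $2t_i(a-i+1)$. The algebraic inequality $t_i(a-i+1)\geq i(a-i+1)\geq a\geq \ell+1$ then yields $d_f(\mathcal{F},\mathcal{F}')\leq D^{(\mathbf{t},n)}-2(\ell+1)$, contradicting the hypothesis. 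This is entirely self-contained and never invokes Lemma \ref{lemma 3.4}.

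Your route has two concrete gaps. First, from Lemma \ref{lemma 2.1} applied to the optimum code $\mathcal{C}_{\widehat{\mathbf{t}}}$ you obtain $|\mathcal{C}_i| = |\mathcal{C}_{\widehat{\mathbf{t}}}|$ for the outer indices, but you never establish $|\mathcal{C}_{\widehat{\mathbf{t}}}| = |\mathcal{C}|$; the chain ``$|\mathcal{C}_i| \leq |\mathcal{C}_{\widehat{\mathbf{t}}}| \leq |\mathcal{C}|$'' only gives the trivial inequality $|\mathcal{C}_i|\leq|\mathcal{C}|$, not equality. (This is fixable — if two flags of $\mathcal{C}$ agreed on all of $\widehat{\mathbf{t}}$, their flag distance would live entirely in the $2\ell$ middle coordinates and, using nesting from the neighbouring agreeing coordinates $t_{a-\ell}$ and $t_{b+\ell}$, be at most $D^{(\mathbf{t}',n)} - 4\ell < D^{(\mathbf{t},n)} - 2\ell$ — but you did not do this.) Second, and more seriously, your proposed recursion on the middle block does not typecheck: $\mathcal{C}_{\mathbf{t}'}$ has distance defect $2\ell$, whereas Lemma \ref{lemma 3.4} applied ``with parameter $1$'' is an iff statement governing codes with defect exactly $2$. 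For $\ell\geq 2$ you cannot peel off the innermost pair this way, and the paragraph where you acknowledge ``more care is needed here'' and defer the crux to ``routine bookkeeping'' is exactly where the argument breaks down — the paper's elementary nesting estimate is what does the work there, and your sketch never recovers it.
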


\begin{proof}
If there exists an index $1\leq i \leq a$ satisfying $|\mathcal{C}_i|<|\mathcal{C}|$, then there exist flags $\mathcal{F}\neq \mathcal{F}'\in \mathcal{C}$ such that $\mathrm{F}_i=\mathrm{F}'_i$. Thus,
$$\dim(\mathrm{F}_j\cap \mathrm{F}'_j)\geq \dim(\mathrm{F}_i\cap \mathrm{F}'_i)=t_i,\ \mbox{for}\ 1\leq i <j \leq a.$$
It follows that
$$d_f(\mathcal{F}, \mathcal{F}')\leq D^{(\mathbf{t}, n)}-2t_i(a-i+1).$$
For $1\leq i\leq a$, we have $t_i(a-i+1)\geq i(a-i+1)\geq a$. Since $a\geq \ell+1$, we have $$d_f(\mathcal{F}, \mathcal{F}')\leq D^{(\mathbf{t}, n)}-2a\leq D^{(\mathbf{t}, n)}-2(\ell+1),$$ which contradicts the minimum distance of $\mathcal{C}$. Hence, $|\mathcal{C}_1|=\dots=|\mathcal{C}_a|=|\mathcal{C}|$.
\end{proof}

\begin{theorem}\label{theorem 3.5}
Let $\mathcal{C}\subseteq\mathcal{F}_q(\mathbf{t}, n)$ be a full flag code with $\mathbf{t}=(1, 2, \ldots, n-1)$, and let $1\leq \ell\leq \min\{a-1, n-1-b\}$ be an integer. If $d_f(\mathcal{C})=D^{(\mathbf{t}, n)}-2\ell$, then $\mathcal{C}$ is cardinality-consistent.
\end{theorem}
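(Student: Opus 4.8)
The plan is to reduce the full-flag statement to Theorem~\ref{theorem 3.4} applied on the ``lower half'' and to a symmetric argument on the ``upper half''. By Theorem~\ref{theorem 3.4}, the hypothesis $d_f(\mathcal{C})=D^{(\mathbf{t},n)}-2\ell$ with $1\le\ell\le\min\{a-1,n-1-b\}$ already gives $|\mathcal{C}_1|=\cdots=|\mathcal{C}_a|=|\mathcal{C}|$, so it remains only to handle the indices $j$ with $b\le j\le n-1$. For a full flag on $\mathbb{F}_q^n$ the type is self-dual under the duality $\mathrm{F}_i\mapsto \mathrm{F}_{n-i}^{\perp}$ (more precisely $\mathbf{t}=(1,2,\dots,n-1)$ is fixed by $t_i\mapsto n-t_i$), and this duality preserves the flag distance and exchanges $a$ with $b$ (it sends the component of dimension $t_i$ to one of dimension $n-t_i$). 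So I would first record this duality, observe that $d_f(\mathcal{C}^{\perp})=d_f(\mathcal{C})=D^{(\mathbf{t},n)}-2\ell$ and that the roles of $a$ and $b$ swap, and then apply Theorem~\ref{theorem 3.4} to $\mathcal{C}^{\perp}$ to conclude $|\mathcal{C}_b|=\cdots=|\mathcal{C}_{n-1}|=|\mathcal{C}|$. Together with $|\mathcal{C}_1|=\cdots=|\mathcal{C}_a|=|\mathcal{C}|$ and the fact that $\{1,\dots,n-1\}=\{1,\dots,a\}\cup\{b,\dots,n-1\}$ (the two cases $b=a$ or $b=a+1$ from the discussion after Lemma~\ref{lemma 3.1}), every projected code has size $|\mathcal{C}|$, i.e.\ $\mathcal{C}$ is cardinality-consistent.

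Alternatively, if one prefers to avoid invoking duality, the same estimate as in the proof of Theorem~\ref{theorem 3.4} works directly on the upper half: if $|\mathcal{C}_j|<|\mathcal{C}|$ for some $b\le j\le n-1$, pick $\mathcal{F}\ne\mathcal{F}'\in\mathcal{C}$ with $\mathrm{F}_j=\mathrm{F}'_j$. Then for every index $j'$ with $b\le j'\le j$ we have $\mathrm{F}_{j'}\subseteq\mathrm{F}_j=\mathrm{F}'_j$ and $\mathrm{F}'_{j'}\subseteq\mathrm{F}'_j$, so $\dim(\mathrm{F}_{j'}+\mathrm{F}'_{j'})\le j$, hence $d_S(\mathrm{F}_{j'},\mathrm{F}'_{j'})=2\big(\dim(\mathrm{F}_{j'}+\mathrm{F}'_{j'})-t_{j'}\big)\le 2(j-t_{j'})$, which for a full flag ($t_{j'}=j'$) is at most $2(n-1-t_{j'})-2(n-1-j)$. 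Summing the defect over $j'\in\{b,\dots,j\}$ gives
$$
d_f(\mathcal{F},\mathcal{F}')\le D^{(\mathbf{t},n)}-2\sum_{j'=b}^{j}(n-1-j'+1-(n-1-j))\le D^{(\mathbf{t},n)}-2(n-b),
$$
and since $n-b\ge \ell+1$ (because $\ell\le n-1-b$) this contradicts $d_f(\mathcal{C})=D^{(\mathbf{t},n)}-2\ell$; hence $|\mathcal{C}_b|=\cdots=|\mathcal{C}_{n-1}|=|\mathcal{C}|$.

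I expect the only delicate point to be the bookkeeping with $a$ and $b$: one must be careful about whether $b=a$ (when $n$ is even and $n/2$ is a coordinate of $\mathbf{t}$, which for a full flag always happens) or $b=a+1$, and make sure the two ranges $\{1,\dots,a\}$ and $\{b,\dots,n-1\}$ genuinely cover $\{1,\dots,n-1\}$ with no gap, so that combining the two halves yields cardinality-consistency for \emph{all} projected codes. The inequalities $t_{j}(a-j+1)\ge a$ and its dual are exactly the ones already used in Theorem~\ref{theorem 3.4}, so no new estimate beyond that is needed; the hypothesis $\ell\le\min\{a-1,n-1-b\}$ is what guarantees both halves have enough room for the contradiction. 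I would present the duality version as the main proof and mention the direct computation as a remark, or vice versa.
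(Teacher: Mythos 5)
Your main (duality) route is valid and genuinely different from the paper's proof, which does \emph{not} invoke duality but instead runs the mirror-image of the Theorem~\ref{theorem 3.4} argument directly on the upper half. To make the duality route airtight you should record three facts: (1) the orthogonal-complement map $\mathrm{U}\mapsto\mathrm{U}^{\perp}$ preserves the subspace distance, since $(\mathrm{U}+\mathrm{V})^{\perp}=\mathrm{U}^{\perp}\cap\mathrm{V}^{\perp}$ and $(\mathrm{U}\cap\mathrm{V})^{\perp}=\mathrm{U}^{\perp}+\mathrm{V}^{\perp}$; (2) a full flag $\mathcal{F}=(\mathrm{F}_1,\dots,\mathrm{F}_{n-1})$ is sent to the full flag $\mathcal{F}^{\perp}=(\mathrm{F}_{n-1}^{\perp},\dots,\mathrm{F}_1^{\perp})$, with $(\mathcal{C}^{\perp})_i$ having the same cardinality as $\mathcal{C}_{n-i}$, and the parameters $a,b$ are unchanged; (3) $n-a=b$ for a full flag (both when $n$ is even, where $a=b=n/2$, and when $n$ is odd, where $b=a+1$), so the two index ranges $\{1,\dots,a\}$ and $\{n-a,\dots,n-1\}=\{b,\dots,n-1\}$ really do cover $\{1,\dots,n-1\}$. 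Note your remark that ``$b=a$ \dots for a full flag always happens'' is off: it happens only for $n$ even; for $n$ odd, $b=a+1$.

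Your alternative (direct) computation is exactly the paper's proof, but your displayed bound contains an arithmetic slip. The maximum possible contribution of coordinate $j'$ (with $j'\ge b$) to $D^{(\mathbf{t},n)}$ is $2(n-j')$, not $2(n-1-j')$, and the deficit at coordinate $j'$ is
$$2(n-j')-2(j-j')=2(n-j),$$
a quantity independent of $j'$, not $2(j-j'+1)$. Summing over $j'\in\{b,\dots,j\}$ gives total deficit $2(n-j)(j-b+1)$, and since $(n-j)(j-b+1)\ge n-b$ for $b\le j\le n-1$, you get $d_f(\mathcal{F},\mathcal{F}')\le D^{(\mathbf{t},n)}-2(n-b)\le D^{(\mathbf{t},n)}-2(\ell+1)$, which is the contradiction. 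With the sum you wrote, $\sum_{j'=b}^{j}(j-j'+1)=\tfrac{(j-b+1)(j-b+2)}{2}$, the inequality $\ge n-b$ fails already at $j=b$. The overall strategy is right; the per-coordinate deficit just needs to be corrected.
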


\begin{proof}
According to Theorem \ref{theorem 3.4}, it remains to show that $|\mathcal{C}_b|=\dots=|\mathcal{C}_{n-1}|=|\mathcal{C}|$.

If there exists an index $i\in \{b, \ldots, n-1\}$ such that $|\mathcal{C}_i|<|\mathcal{C}|$, then there exist two distinct flags $\mathcal{F}$ and $\mathcal{F}'$ in $\mathcal{C}$ such that $\mathrm{F}_i=\mathrm{F}'_i$. Thus,
$$\dim(\mathrm{F}_j+\mathrm{F}'_j)\leq \dim(\mathrm{F}_i+\mathrm{F}'_i)=i\leq n-1,\ \mbox{for\ all}\ b\leq j <i \leq n-1.$$
Consequently,
$$d_f(\mathcal{F}, \mathcal{F}')\leq D^{(\mathbf{t}, n)}-2(n-i)(i-b+1).$$
For $b\leq i\leq n-1$, we have $(n-i)(i-b+1)\geq n-b$. Since $\ell\leq n-1-b$, we have
$$d_f(\mathcal{F}, \mathcal{F}')\leq D^{(\mathbf{t}, n)}-2(n-b)\leq D^{(\mathbf{t}, n)}-2(\ell+1),$$
which is inconsistent with the minimum distance of $\mathcal{C}$. Therefore, $\mathcal{C}$ is cardinality-consistent.
\end{proof}

Now, we consider full flag codes with distance $2k(k+h)$ on $\mathbb{F}^{2k+h}_q$, where $0\leq h< k$. The following theorem follows directly from Theorem \ref{theorem 3.3}.

\begin{theorem}\label{theorem 3.6}
Let $n=2k$, and let
$$\mathcal{C}=\{(\mathrm{W}^{(1)}, \mathrm{W}^{(2)}, \ldots, \mathrm{W}^{(2k-1)}): \mathrm{W}\in \mathscr{C}\}$$
be defined as in \eqref{equation 3.2}. Then $\mathcal{C}$ is an optimum distance full flag code on $\mathbb{F}_q^{2k}$ with cardinality $q^k+1$ and distance $2k^2$.
\end{theorem}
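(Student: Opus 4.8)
The plan is to specialize Theorem~\ref{theorem 3.3} to the case $s=2$, $h=0$ and then read off the numerical value of the maximal distance from \eqref{equation 2.3}. First I would note that $n=2k$ is precisely the instance $s=2$, $h=0$ of the standing hypothesis $n=sk+h$, so that $\mathscr{C}$ and $\mathcal{C}$ of \eqref{equation 3.1}--\eqref{equation 3.2} are defined and the cardinality furnished by Theorem~\ref{theorem 3.3} is $\sum_{i=1}^{s-1}q^{ik+h}+1=q^{k}+1$. The single point requiring attention is a degeneracy: when $h=0$ and $s=2$ we have $n-k=k$, so the two ``halves'' $\{1,\ldots,k\}$ and $\{n-k,\ldots,n-1\}=\{k,\ldots,2k-1\}$ of the (generalized) full admissible type overlap in the single index $k$, and hence that type coincides with the full flag type $\mathbf{t}=(1,2,\ldots,2k-1)$. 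Consequently the flag code $\mathcal{C}'$ of Theorem~\ref{theorem 3.3}, once its repeated middle component $\mathrm{W}^{(k)}=\mathrm{W}^{(n-k)}$ is identified, is exactly the full flag code $\mathcal{C}=\{(\mathrm{W}^{(1)},\ldots,\mathrm{W}^{(2k-1)}):\mathrm{W}\in\mathscr{C}\}$ given by \eqref{equation 3.2}.

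Applying Theorem~\ref{theorem 3.3}, $\mathcal{C}=\mathcal{C}'$ is therefore an optimum distance flag code of the (generalized) full admissible type on $\mathbb{F}_q^{2k}$ --- i.e.\ an optimum distance \emph{full} flag code --- with cardinality $q^{k}+1$. It then only remains to compute $D^{(\mathbf{t},n)}$ so that the equality $d_f(\mathcal{C})=D^{(\mathbf{t},n)}$ becomes the explicit value $2k^2$. Since $\lfloor n/2\rfloor=k$, the indices split into $\{1,\ldots,k\}$ (those with $t_i\leq k$) and $\{k+1,\ldots,2k-1\}$ (those with $t_i>k$), and \eqref{equation 2.3} yields
$$D^{(\mathbf{t},n)}=2\left(\sum_{i=1}^{k}i+\sum_{i=k+1}^{2k-1}(2k-i)\right)=2\left(\frac{k(k+1)}{2}+\frac{k(k-1)}{2}\right)=2k^{2},$$
which finishes the argument.

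I do not expect a genuine obstacle: the whole content of Theorem~\ref{theorem 3.6} is already encapsulated in Theorem~\ref{theorem 3.3}, and the only things to verify are that the generalized full admissible type degenerates to the full flag type exactly when $n=2k$ (so that ``optimum distance flag code of the full admissible type'' reads as ``optimum distance full flag code'') and the elementary evaluation of \eqref{equation 2.3}. If a self-contained proof were preferred, one could instead combine Theorem~\ref{theorem 3.1} with Lemma~\ref{lemma 3.1}: here $a=b=k$ with $t_a=k=n/2$, and Theorem~\ref{theorem 3.1} shows that the single relevant projected code $\mathcal{C}_k=\mathcal{C}_{n-k}$ is a partial $k$-spread of size $q^{k}+1=|\mathcal{C}|$, hence has subspace distance $2k=\min\{2k,2(n-k)\}$; Lemma~\ref{lemma 3.1} then forces $\mathcal{C}$ to be an optimum distance full flag code, and the distance value $2k^2$ follows from \eqref{equation 2.3} as above.
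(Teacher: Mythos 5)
Your proposal is correct and follows essentially the same route as the paper, which simply observes that Theorem~\ref{theorem 3.6} follows directly from Theorem~\ref{theorem 3.3} specialized to $s=2$, $h=0$. Your write-up is more careful than the paper's one-line proof in that it explicitly notes the degeneracy $n-k=k$ (so the generalized full admissible type collapses to the full flag type) and explicitly evaluates $D^{(\mathbf{t},n)}=2k^2$ via \eqref{equation 2.3}, both of which are correct and worth spelling out.
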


Let $\mathbf{P}$ be the companion matrix of a primitive polynomial of degree $k+h$ over $\mathbb{F}_q$. Define the cyclic group
$$G=\left\{\footnotesize{\left[
\begin{array}{cccc}
\mathbf{I}_k & \mathbf{0}\\
\mathbf{0} & \mathbf{P}^i
\end{array}
\right]}: 1\leq i\leq q^{k+h}-1\right\}\leq \mathrm{GL}_{2k+h}(\mathbb{F}_q),$$
and the matrices
$$\footnotesize{\mathbf{A}}=\footnotesize{\left[
\begin{array}{cccc}
\mathbf{I}_k & \mathbf{I}^{(k)}_{k+h}\\
\mathbf{0} & \mathbf{I}^{[k]}_{k+h}\\
\mathbf{0} & \mathbf{I}^{(k-1)}_{k+h}
\end{array}
\right]},\ \ \footnotesize{\mathbf{B}}=\footnotesize{\left[
\begin{array}{cccc}
\mathbf{I}_k & \mathbf{0}\\
\mathbf{0} & \mathbf{I}^{[k]}_{k+h}\\
\mathbf{0} & \mathbf{I}^{(k-1)}_{k+h}
\end{array}
\right]}\in \mathrm{Mat}_{(2k+h-1)\times (2k+h)}(\mathbb{F}_q).$$
Let $\footnotesize{\left[
\begin{array}{cccc}
\mathbf{I}_k & \mathbf{0}\\
\mathbf{0} & \mathbf{P}^i
\end{array}
\right]}\in G$, where $1\leq i\leq q^{k+h}-1$. We get
\begin{equation}\label{equation 3.4.1}
\mathbf{A}g=\footnotesize{\left[
\begin{array}{cccc}
\mathbf{I}_k & \mathbf{I}^{(k)}_{k+h}\\
\mathbf{0} & \mathbf{I}^{[k]}_{k+h}\\
\mathbf{0} & \mathbf{I}^{(k-1)}_{k+h}
\end{array}
\right]\left[
\begin{array}{cccc}
\mathbf{I}_k & \mathbf{0}\\
\mathbf{0} & \mathbf{P}^i
\end{array}
\right]}
=\footnotesize{\left[
\begin{array}{cccc}
\mathbf{I}_k & (\mathbf{P}^i)^{(k)}\\
\mathbf{0} & (\mathbf{P}^i)^{[k]}\\
\mathbf{0} & (\mathbf{P}^i)^{(k-1)}
\end{array}
\right]}.
\end{equation}

\begin{theorem}\label{theorem 3.7}
Let $n=2k+h$ with $1\leq h< k$. Define \begin{equation}\label{equation 3.4}
\mathscr{C}=\{\mathrm{rs}(\mathbf{A}g), \mathrm{rs}(\mathbf{B}), \mathrm{rs}(\mathbf{M}): g\in G\},
\end{equation}
and
\begin{equation}\label{equation 3.5}
\mathcal{C}=\{\mathcal{F}_{\mathbf{W}}=(\mathrm{W}^{(1)}, \mathrm{W}^{(2)}, \ldots, \mathrm{W}^{(n-1)}): \mathrm{W}=\mathrm{rs}(\mathbf{W})\in \mathscr{C}\},
\end{equation}
where $\mathbf{W}$ is the generator matrix of the subspace $\mathrm{W}$, and $\mathrm{W}^{(j)}=\mathrm{rs}(\mathbf{W}^{(j)})$ with $\mathbf{W}^{(j)}$ being the submatrix of $\mathbf{W}$ formed by its first $j$ rows for $1\leq j\leq n-1$. For each integer $k+1\leq m< n-k$, let
$$\mathcal{C}_m=\{\mathrm{W}^{(m)}: \mathrm{W}\in \mathscr{C}\}\subseteq \mathcal{G}_q(n, m).$$
Then $d_S(\mathcal{C}_m)=2k$ and $|\mathcal{C}_m|=|\mathcal{C}|=|\mathscr{C}|=q^{k+h}+1$.
\end{theorem}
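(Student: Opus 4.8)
The plan is to show, for each $m$ with $k+1\leq m<n-k$ (equivalently $1\leq m-k\leq h-1$, so this range is nonempty precisely because $h\geq 2$; note for $h=1$ there is no such $m$ and the claim is vacuous), that any two distinct codewords in $\mathscr{C}$, truncated to their first $m$ rows, generate a subspace of dimension exactly $m+k$, so that by \eqref{equation 1} the subspace distance is $2k$, which is the maximum possible since $m\leq n-k$. Simultaneously I must check $|\mathscr{C}|=q^{k+h}+1$ and that the map $\mathrm{W}\mapsto\mathrm{W}^{(m)}$ is injective on $\mathscr{C}$. The three types of codewords are $\mathrm{rs}(\mathbf{A}g)$ for $g\in G$ (there are $q^{k+h}-1$ of them), $\mathrm{rs}(\mathbf{B})$, and $\mathrm{rs}(\mathbf{M})$; the distinctness of the $q^{k+h}-1$ members $\mathrm{rs}(\mathbf{A}g)$ and their distinctness from $\mathrm{rs}(\mathbf{B})$ is essentially the orbit-code computation for the constant-dimension code $\{\mathrm{rs}((\mathbf{A}g)^{(k)}),\mathrm{rs}(\mathbf{B}^{(k)})\}$, which one recognizes as a sunflower/partial-spread-type argument analogous to Theorem 3.1(i); I would state that $\mathrm{Stab}_G$ of the relevant initial space is trivial and cite or mirror the reasoning already used for the $s$-fold construction.

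The core rank computations split into three pairings. \emph{First}, for $g\neq g'$ in $G$: using \eqref{equation 3.4.1}, $\mathbf{A}g$ has the block form $\left[\begin{smallmatrix}\mathbf{I}_k & (\mathbf{P}^i)^{(k)}\\ \mathbf{0} & (\mathbf{P}^i)^{[k]}\\ \mathbf{0} & (\mathbf{P}^i)^{(k-1)}\end{smallmatrix}\right]$, so $(\mathbf{A}g)^{(m)}$ for $k<m<2k$ consists of the first $k$ rows $[\mathbf{I}_k\mid (\mathbf{P}^i)^{(k)}]$ together with the first $m-k$ rows of $(\mathbf{P}^i)^{[k]}$, i.e. rows $k+1,\dots,m$ of $\mathbf{P}^i$. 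Stacking $(\mathbf{A}g)^{(m)}$ over $(\mathbf{A}g')^{(m)}$, the left $\mathbf{I}_k$ blocks let me clear the top-left of the second copy; what remains is to show that the rows of $(\mathbf{P}^i)^{[k]}$ restricted to the first $m-k$ of them, together with the analogous rows of $(\mathbf{P}^{i'})$ and the difference $(\mathbf{P}^i-\mathbf{P}^{i'})^{(k)}$, have rank $m$ inside $\mathbb{F}_q^{k+h}$ — this is where Proposition 3.1 and Lemma 3.2.2 enter: all these rows are of the form $v_{11}\mathbf{P}^{t}$ for a window of consecutive exponents $t$, and since $v_{11},v_{11}\mathbf{P},\dots,v_{11}\mathbf{P}^{k+h-1}$ is a basis of $\mathbb{F}_q^{k+h}$ (primitivity of degree $k+h$), any set of distinctly-indexed such rows is linearly independent; the windows from the two copies overlap in a controlled way and the rank count comes out to $m+k$ total. \emph{Second}, pairing $\mathrm{rs}(\mathbf{A}g)$ with $\mathrm{rs}(\mathbf{B})$: here $\mathbf{B}^{(m)}=\left[\begin{smallmatrix}\mathbf{I}_k & \mathbf{0}\\ \mathbf{0} & (\text{rows }k+1,\dots,m\text{ of }\mathbf{I}_{k+h})\end{smallmatrix}\right]$ after accounting for $\mathbf{I}^{[k]}_{k+h}$, and the computation is the same but simpler because one side carries no $\mathbf{P}$-twist. \emph{Third}, pairing either of the above with $\mathrm{rs}(\mathbf{M})$: since $\mathbf{M}$ is the anti-diagonal, $\mathbf{M}^{(m)}$ picks out the last $m$ standard basis vectors (in reverse), and I claim $\mathrm{rk}\left[\begin{smallmatrix}\mathbf{M}^{(m)}\\ \mathbf{U}^{(m)}\end{smallmatrix}\right]=m+k$ for $\mathbf{U}\in\{\mathbf{A}g,\mathbf{B}\}$ by observing that $\mathbf{U}^{(m)}$ has full rank $k$ in its first $k$ columns while $\mathbf{M}^{(m)}$ is supported on the last $m$ columns — a direct column-rank bookkeeping as in the proof of Theorem 3.1(ii).

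The main obstacle is the \emph{first} pairing, the $g\neq g'$ case, because unlike the $m=k$ (spread) situation the truncation to $m<2k$ rows mixes the "top" identity block with a partial, $\mathbf{P}$-shifted block, and one has to argue that the extra rows coming from $(\mathbf{P}^i)^{(k-1)}$ at the bottom of $\mathbf{A}g$ do \emph{not} secretly reduce the rank — i.e. that the bottom block is genuinely contained in the span already counted, which is exactly what the construction of $\mathbf{A}$ (appending $\mathbf{I}^{(k-1)}_{ik+h}$, a repeat of the first $k-1$ rows of the shift) is designed to guarantee, and what Lemma 3.2.2 formalizes: an $\mathbb{F}_q$-linear dependence among a window $(\mathbf{P}^a)^{[x,y]}$ and $(\mathbf{P}^b)^{[x',y']}$ propagates to the window extended by one. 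I would set up the dependence relation abstractly in terms of the exponent windows $\{i,i+1,\dots,i+m-1\}$ and $\{i',\dots,i'+m-1\}$ of $v_{11}$-multiples, reduce independence to the statement that a union of two integer intervals of length $m$ inside $\mathbb{Z}/(q^{k+h}-1)$ has at least $\min(2m,k+h)$ elements — which holds here because $m\geq k+1$ forces $2m\geq 2k+2>k+h$ when $h<k$, wait, one must be careful: $2m$ could be as small as $2k+2$ and $k+h$ as large as $2k-1$, so indeed $2m>k+h$ always in this range, giving a full-rank $\mathbb{F}_q^{k+h}$ contribution from the $\mathbf{P}$-parts plus the independent $\mathbf{I}_k$ from the clearing step, totalling $m+k$. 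Closing this counting argument cleanly, handling the edge cases where the two windows nearly coincide, is the one place that needs genuine care; everything else is routine block manipulation citing \eqref{equation 1}, Proposition 3.1, and Lemma 3.2.2.
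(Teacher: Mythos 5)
The overall three-way case split (two $\mathbf{A}g$'s, $\mathbf{A}g$ versus $\mathbf{B}$, either versus $\mathbf{M}$) matches the paper, and the cardinality bookkeeping and the $h=1$ vacuity remark are fine. But the heart of your argument for the first two pairings is wrong, and wrong in a way that would not survive a careful write-up.

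You assert that since $v_{11},v_{11}\mathbf{P},\dots,v_{11}\mathbf{P}^{k+h-1}$ is a basis of $\mathbb{F}_q^{k+h}$, ``any set of distinctly-indexed such rows is linearly independent.'' This is false: the ambient space is only $(k+h)$-dimensional, so any collection of more than $k+h$ distinctly-indexed vectors $v_{11}\mathbf{P}^t$ is forced to be dependent, and smaller sets with non-consecutive indices can certainly be dependent too. Your subsequent reduction to ``a union of two integer intervals of length $m$ has at least $\min(2m,k+h)$ elements'' rests on that false premise. Worse, it leads you to a conclusion that contradicts the theorem: you deduce that the $\mathbf{P}$-parts contribute full rank $k+h$, which together with the cleared $\mathbf{I}_k$ would give total rank $k+(k+h)=n$, hence subspace distance $2(n-m)>2k$ for every pair in this range -- so the minimum distance could not equal $2k$. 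Relatedly, you claim every pair of truncations spans a space of dimension \emph{exactly} $m+k$; what you actually need is $\geq m+k$ for every pair (lower bound on distance) together with a single exhibited pair attaining $m+k$ (upper bound), and you never supply that exhibited pair.

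The paper's argument is both simpler and correct. After subtracting one copy of $\left[\mathbf{I}_k \mid (\mathbf{P}^a)^{(k)}\right]$ from the other, the stacked matrix reduces to the block form with $\mathbf{I}_k$ in the top-left, $(\mathbf{P}^a)^{[k+1,m]}$ below it, and $(\mathbf{P}^b-\mathbf{P}^a)^{(m)}$ at the bottom. Since $\mathbf{P}$ is the companion matrix of a primitive polynomial, $\mathbf{P}^b-\mathbf{P}^a=\mathbf{P}^a(\mathbf{P}^{b-a}-\mathbf{I})$ is invertible, so its first $m$ rows have rank exactly $m$; together with the $\mathbf{I}_k$ block this yields rank $\geq m+k$, no window counting required. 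The $\mathbf{A}g$ versus $\mathbf{B}$ case is the same with $(\mathbf{P}^a)^{(m)}$ in place of the difference. Finally, taking $g$ with the block $\mathbf{P}^k$ and using Proposition \ref{proposition 3.1} to identify $(\mathbf{P}^k)^{[1,m-k]}$ with $\mathbf{I}^{[k+1,m]}_{k+h}$, the paper shows that the specific pair $(\mathbf{A}g,\mathbf{B})$ attains rank exactly $m+k$, closing the upper bound. You should replace the window-overlap reasoning with this invertibility observation and add the explicit $\mathbf{P}^k$ pair to obtain $d_S(\mathcal{C}_m)=2k$ rather than merely $d_S(\mathcal{C}_m)\geq 2k$.
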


\begin{proof}
Let $g=\left[
\begin{array}{cccc}
\mathbf{I}_{k} & \mathbf{0}\\
\mathbf{0} & \mathbf{P}^a
\end{array}
\right]$ and $g'=\left[
\begin{array}{cccc}
\mathbf{I}_{k} & \mathbf{0}\\
\mathbf{0} & \mathbf{P}^b
\end{array}
\right]$ be two different elements in $G$, where $1\leq a < b\leq q^{k+h}-1$. Since $k+1\leq m< k+h$, by \eqref{equation 3.4.1}, we have
$$\begin{aligned}
\mathrm{rk}\footnotesize{\left[
\begin{array}{cccc}
(\mathbf{A}g)^{(m)}\\
(\mathbf{A}g')^{(m)}
\end{array}
\right]}&=\mathrm{rk}\footnotesize{\left[
\begin{array}{cccc}
\mathbf{I}_k & (\mathbf{P}^a)^{(k)}\\
\mathbf{0} & (\mathbf{P}^a)^{[k+1, m]}\\
\hdashline
\mathbf{I}_k & (\mathbf{P}^b)^{(k)}\\
\mathbf{0} & (\mathbf{P}^b)^{[k+1, m]}
\end{array}
\right]
}=\mathrm{rk}\footnotesize{\left[
\begin{array}{cccc}
\mathbf{I}_k & (\mathbf{P}^a)^{(k)}\\
\mathbf{0} & (\mathbf{P}^a)^{[k+1, m]}\\
\mathbf{0} & (\mathbf{P}^b-\mathbf{P}^a)^{(m)}
\end{array}
\right]
}\geq m+k.
\end{aligned}$$
Thus, it follows from \eqref{equation 1} that $d_S(\mathrm{rs}((\mathbf{A}g)^{(m)}), \mathrm{rs}((\mathbf{A}g')^{(m)}))\geq 2k$.

Consider next
$$\begin{aligned}
\mathrm{rk}\footnotesize{\left[
\begin{array}{cccc}
(\mathbf{A}g)^{(m)}\\
\mathbf{B}^{(m)}
\end{array}
\right]}&=\mathrm{rk}\footnotesize{\left[
\begin{array}{cccc}
\mathbf{I}_k & (\mathbf{P}^a)^{(k)}\\
\mathbf{0} & (\mathbf{P}^a)^{[k+1, m]}\\
\hdashline
\mathbf{I}_k & \mathbf{0}\\
\mathbf{0} & \mathbf{I}^{[k+1,m]}_{k+h}
\end{array}
\right]
}\geq m+k,
\end{aligned}$$
so, $d_S(\mathrm{rs}((\mathbf{A}g)^{(m)}), \mathrm{rs}(\mathbf{B}^{(m)}))\geq 2k$. Since $m< k+h$, it follows from the structure of $\mathbf{M}$ and \eqref{equation 1} that
$$\dim(\mathrm{rs}((\mathbf{A}g)^{(m)})+\mathrm{rs}(\mathbf{M}^{(m)}))\geq m+k\ \ \mbox{and}\ \ \dim(\mathrm{rs}(\mathbf{B}^{(m)})+ \mathrm{rs}(\mathbf{M}^{(m)}))\geq m+k.$$
Consequently, $d_S(\mathcal{C}_m)\geq2k$ for each $k+1\leq m< k+h$.

In particular, take $g=\footnotesize{\left[
\begin{array}{cccc}
\mathbf{I}_k & \mathbf{0} \\
\mathbf{0} & \mathbf{P}^k
\end{array}
\right]}\in G$. For each $k+1\leq m< k+h$, by Proposition \ref{proposition 3.1}, we obtain that
$$(\mathbf{P}^k)^{\{1\}}=\mathbf{I}^{\{k+1\}}_{k+h}=\mathbf{P}^{\{k\}}\ \ \mbox{and}\ \ (\mathbf{P}^k)^{[1, m-k]}=\mathbf{P}^{[k, m-1]}=\mathbf{I}^{[k+1,m]}_{k+h}.$$
Thus,
$$\mathrm{rk}\footnotesize{\left[
\begin{array}{cccc}
(\mathbf{A}g)^{(m)}\\
\mathbf{B}^{(m)}
\end{array}
\right]}=\mathrm{rk}\footnotesize{\left[
\begin{array}{cccc}
\mathbf{I}_k & (\mathbf{P}^k)^{(k)}\\
\mathbf{0} & (\mathbf{P}^k)^{[k+1, m]}\\
\hdashline
\mathbf{I}_k & \mathbf{0}\\
\mathbf{0} & \mathbf{I}^{[k+1,m]}_{k+h}
\end{array}
\right]
}=k+\mathrm{rk}\footnotesize{\left[
\begin{array}{cccc}
(\mathbf{P}^k)^{(m)}\\
\mathbf{I}^{[k+1,m]}_{k+h}
\end{array}
\right]
}=k+\mathrm{rk}((\mathbf{P}^k)^{(m)})=m+k.
$$
Hence, $d_S(\mathrm{rs}((\mathbf{A}g)^{(m)}), \mathrm{rs}(\mathbf{B}^{(m)}))=2k$.
Therefore, $d_S(\mathcal{C}_m)=2k$ for $k+1\leq m< k+h$.
\end{proof}

\begin{theorem}\label{theorem 3.8}
Let $n=2k+h$ with $1\leq h< k$. Let $\mathscr{C}$ and $\mathcal{C}$ be defined as in \eqref{equation 3.4} and \eqref{equation 3.5}, respectively. Then $\mathcal{C}$ is a cardinality-consistent and full flag code on $\mathbb{F}^n_q$ with
$$|\mathcal{C}|=|\mathscr{C}|=q^{k+h}+1\ \ \mbox{and}\ \ d_f(\mathcal{C})=2k(k+h).$$
\end{theorem}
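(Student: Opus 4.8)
The plan is to read everything off the already-established behaviour of the projected codes of $\mathcal{C}$, and then pin down $d_f(\mathcal{C})$ by a dimension-by-dimension estimate. First observe that the objects $\mathbf{A},\mathbf{B},\mathbf{M},G$ of \eqref{equation 3.4} are exactly the $s=2$ specialization of $\mathbf{A}_1,\mathbf{B}_1,\mathbf{M},G_1$ of Subsection~3.1 (for $s=2$ the block $\mathbf{I}_{(s-i-1)k}$ has width $0$ and drops out), so Theorems~\ref{theorem 3.1}, \ref{theorem 3.3} and \ref{theorem 3.7} apply in this situation, with $\sum_{i=1}^{s-1}q^{ik+h}+1=q^{k+h}+1$. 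Write $\mathbf{t}=(1,2,\dots,n-1)$ and let $\mathcal{C}_m$ denote the $m$-projected code of $\mathcal{C}$ for $1\le m\le n-1$. By Theorem~\ref{theorem 3.7} we already have $|\mathcal{C}|=|\mathscr{C}|=q^{k+h}+1$. To see that $\mathcal{C}$ is cardinality-consistent, split the indices into the blocks $\{1,\dots,k\}$, $\{k+1,\dots,k+h-1\}$ and $\{k+h,\dots,n-1\}$ (using $n-k=k+h$). On the outer two blocks $\mathcal{C}_m$ is a projected code of the admissible-type subflag code $\mathcal{C}'$ of Theorem~\ref{theorem 3.3}, which is an optimum distance flag code; Lemma~\ref{lemma 2.1} then tells us that $\mathcal{C}'$ is cardinality-consistent and $d_S(\mathcal{C}_m)=\min\{2m,2(n-m)\}$, which is $2m$ for $m\le k$ and $2(n-m)$ for $m\ge k+h$. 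On the middle block Theorem~\ref{theorem 3.7} gives $d_S(\mathcal{C}_m)=2k$ and $|\mathcal{C}_m|=|\mathcal{C}|$. Hence $|\mathcal{C}_m|=|\mathcal{C}|$ for all $m$.

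Next I would prove $d_f(\mathcal{C})\ge 2k(k+h)$. Fix distinct flags $\mathcal{F},\mathcal{F}'\in\mathcal{C}$ and write $d_f(\mathcal{F},\mathcal{F}')=\sum_{m=1}^{n-1}d_S(\mathrm{F}_m,\mathrm{F}'_m)$, grouped by the three blocks above. For $1\le m\le k$ both $\mathrm{F}_m,\mathrm{F}'_m$ are $m$-dimensional, so $d_S(\mathrm{F}_m,\mathrm{F}'_m)\le 2m$; and since $|\mathcal{C}_m|=|\mathcal{C}|$ forces $\mathrm{F}_m\ne\mathrm{F}'_m$, also $d_S(\mathrm{F}_m,\mathrm{F}'_m)\ge d_S(\mathcal{C}_m)=2m$. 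Thus that term equals $2m$; symmetrically, for $k+h\le m\le n-1$ it equals $2(n-m)$. The outer blocks therefore contribute $\sum_{m=1}^{k}2m+\sum_{m=k+h}^{n-1}2(n-m)=2k(k+1)$, independently of the chosen pair. For $k+1\le m\le k+h-1$, Theorem~\ref{theorem 3.7} gives $d_S(\mathrm{F}_m,\mathrm{F}'_m)\ge d_S(\mathcal{C}_m)=2k$. Adding up, $d_f(\mathcal{F},\mathcal{F}')\ge 2k(k+1)+2k(h-1)=2k(k+h)$, whence $d_f(\mathcal{C})\ge 2k(k+h)$.

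To show the bound is attained, I would exhibit the pair realizing it. Take $g\in G$ to be the element whose lower-right block is $\mathbf{P}^k$, and consider the two distinct flags $\mathcal{F}_{\mathbf{A}g}$ and $\mathcal{F}_{\mathbf{B}}$ of $\mathcal{C}$. On the outer two blocks their subspace distances are the pair-independent values found above, contributing $2k(k+1)$. On each index $k+1\le m\le k+h-1$, the final paragraph of the proof of Theorem~\ref{theorem 3.7} establishes exactly that $d_S(\mathrm{rs}((\mathbf{A}g)^{(m)}),\mathrm{rs}(\mathbf{B}^{(m)}))=2k$, so the middle block contributes $2k(h-1)$. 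Hence $d_f(\mathcal{F}_{\mathbf{A}g},\mathcal{F}_{\mathbf{B}})=2k(k+h)$, and together with the lower bound we conclude $d_f(\mathcal{C})=2k(k+h)$.

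The main obstacle is structural rather than computational. The gap $D^{(\mathbf{t},n)}-2k(k+h)=2\lfloor h^2/4\rfloor$ grows quadratically in $h$ while only the $h-1$ central projected codes fall short of being distance-optimal, so $\mathcal{C}$ is not a flag code of distance $D^{(\mathbf{t},n)}-2\ell$ with $\ell\le\min\{a-1,n-1-b\}$ once $h$ is large relative to $k$; consequently neither Lemma~\ref{lemma 3.4} nor Theorem~\ref{theorem 3.5} is available to split off the defective middle part, and the distance has to be controlled directly from the distances of the projected codes as above. The only genuinely non-formal input is that a single pair of flags simultaneously realizes $d_S(\mathcal{C}_m)=2k$ at every central index $m$, and this is exactly what the explicit element $g$ with lower-right block $\mathbf{P}^k$ in the proof of Theorem~\ref{theorem 3.7} supplies.
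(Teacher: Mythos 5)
Your proposal is correct and follows essentially the same route as the paper's own proof: read the distances of the projected codes off Theorem~\ref{theorem 3.3} (outer indices $\{1,\dots,k\}\cup\{n-k,\dots,n-1\}$) and Theorem~\ref{theorem 3.7} (central indices $\{k+1,\dots,k+h-1\}$), deduce cardinality-consistency, note that the central distances are realized simultaneously by the pair $(\mathcal{F}_{\mathbf{A}g},\mathcal{F}_{\mathbf{B}})$ with $g$ having lower-right block $\mathbf{P}^k$, and sum. Your write-up is somewhat more explicit than the paper at one point that the paper leaves implicit: you spell out that for $m\le k$ the per-pair distance is forced to be exactly $2m$ (both bounded above by $2m$ and below by $d_S(\mathcal{C}_m)=2m$ once cardinality-consistency makes the projections injective), and dually for $m\ge n-k$, so that the outer contribution is pair-independent; the paper jumps directly to $d_f(\mathcal{C})=\sum_m d_S(\mathcal{C}_m)$. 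Your observation that $\mathbf{A},\mathbf{B},G$ of \eqref{equation 3.4} are precisely the $s=2$ specialization of $\mathbf{A}_1,\mathbf{B}_1,G_1$ is also correct and is what licenses applying Theorem~\ref{theorem 3.3} here, a step the paper uses tacitly. The closing remark about $D^{(\mathbf{t},n)}-2k(k+h)=2\lfloor h^2/4\rfloor$ and the consequent inapplicability of Lemma~\ref{lemma 3.4}/Theorem~\ref{theorem 3.5} for large $h$ is accurate but not needed for the proof; it is a sound explanation of why one must argue directly from the projected codes rather than peeling off a defective middle segment.
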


\begin{proof}
Theorem \ref{theorem 3.3} guarantees $d_S(\mathcal{C}_i)$ attains the maximum possible distance for each $i\in \{1, 2, \ldots, k, n-k, \ldots, n-1\}$.

From the proof of Theorem \ref{theorem 3.7}, when $g=
\footnotesize{\begin{bmatrix}
\mathbf{I}_k & \mathbf{0} \\
\mathbf{0} & \mathbf{P}^k
\end{bmatrix}}
\in G$, we have $$d_S(\mathrm{rs}((\mathbf{A}g)^{(m)}), \mathrm{rs}(\mathbf{B}^{(m)}))=d_S(\mathcal{C}_m)=2k$$ for each $k+1\leq m< n-k$. It concludes from Theorems \ref{theorem 3.3} and \ref{theorem 3.7} that $\mathcal{C}$ is cardinality-consistent. Hence,
$$d_f(\mathcal{C})=\sum^k_{i=1}2i+\sum^{n-1}_{i=n-k}2(n-i)+2k(h-1)=2k(k+h).$$
\end{proof}

\begin{corollary}\label{corollary}
Let $n=2k+2$ such that $k\geq 3$, and let $\mathcal{C}$ be defined as in \eqref{equation 3.5}. Then $\mathcal{C}$ is a cardinality-consistent and quasi-optimum full flag code with cardinality $q^{k+h}+1$.
\end{corollary}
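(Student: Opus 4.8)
The plan is to invoke Theorem~\ref{theorem 3.8} directly and then verify the single missing ingredient, namely that $d_f(\mathcal{C})$ differs from the maximum possible distance $D^{(\mathbf{t}, n)}$ by exactly $2$. Since $n = 2k+2$, we are in the setting of Theorems~\ref{theorem 3.7} and~\ref{theorem 3.8} with $h = 2$, and the hypothesis $k \geq 3$ guarantees $1 \leq h < k$, so those theorems apply verbatim. Hence $\mathcal{C}$ is a cardinality-consistent full flag code on $\mathbb{F}_q^n$ with $|\mathcal{C}| = |\mathscr{C}| = q^{k+h}+1 = q^{k+2}+1$ and $d_f(\mathcal{C}) = 2k(k+h) = 2k(k+2)$.

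It then remains to compute $D^{(\mathbf{t}, n)}$ for the full flag type $\mathbf{t} = (1, 2, \ldots, n-1)$ on $\mathbb{F}_q^{2k+2}$ and compare. Using \eqref{equation 2.3} with $\lfloor n/2 \rfloor = k+1$, I would split the sum into the indices $i \leq k+1$, contributing $\sum_{i=1}^{k+1} i = \tfrac{(k+1)(k+2)}{2}$, and the indices $i \geq k+2$, contributing $\sum_{i=k+2}^{2k+1}(n-i) = \sum_{j=1}^{k} j = \tfrac{k(k+1)}{2}$. This yields
$$D^{(\mathbf{t}, n)} = 2\left(\frac{(k+1)(k+2)}{2} + \frac{k(k+1)}{2}\right) = (k+1)(2k+2) = 2(k+1)^2.$$
Consequently $D^{(\mathbf{t}, n)} - d_f(\mathcal{C}) = 2(k+1)^2 - 2k(k+2) = 2(k^2+2k+1) - 2(k^2+2k) = 2$, so $d_f(\mathcal{C}) = D^{(\mathbf{t}, n)} - 2$, and $\mathcal{C}$ is quasi-optimum by definition.

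There is essentially no obstacle here: the corollary is a bookkeeping consequence of Theorem~\ref{theorem 3.8}, and the only point requiring care is the arithmetic in evaluating $D^{(\mathbf{t}, n)}$, in particular handling the split at $\lfloor n/2 \rfloor = k+1$ correctly and not double-counting the term $t_{k+1} = k+1$. Cardinality-consistency and the value $q^{k+h}+1$ of the cardinality are inherited unchanged from Theorem~\ref{theorem 3.8}, so no further argument is needed for those parts of the statement.
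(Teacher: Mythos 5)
Your proposal is correct and follows essentially the same route as the paper: both proofs rest entirely on Theorem~\ref{theorem 3.8} (which gives cardinality-consistency, $|\mathcal{C}|=q^{k+h}+1$, and $d_f(\mathcal{C})=2k(k+h)$), after which quasi-optimality is a one-line deficiency count. The paper phrases that last step by noting that the only projected code failing to attain maximal distance is $\mathcal{C}_{k+1}$ (which misses by exactly $2$), while you instead compute $D^{(\mathbf{t},n)}=2(k+1)^2$ directly from \eqref{equation 2.3} and subtract; these are equivalent ways of extracting the same factor of $2$, and your arithmetic checks out.
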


\begin{proof}
According to Theorem \ref{theorem 3.3}, the projected code $\mathcal{C}_m$ attains the maximum possible distance for each $m\in\{1, 2, \ldots, k, k+2, \ldots, n-1\}$. From Theorem \ref{theorem 3.7}, $d_S(\mathcal{C}_{k+1})=2k$. Then $\mathcal{C}$ is a cardinality-consistent and quasi-optimum full flag code with cardinality $q^{k+h}+1$ by Theorem \ref{theorem 3.8}.
\end{proof}

\section{Flag codes with longer type vectors}
Theorem \ref{theorem 3.3} presents optimum distance flag codes on $\mathbb{F}_q^{sk+h}$ ($s\geq 2$) with the longest possible type $(1, 2, \ldots, k, n-k, \ldots, n-1)$ and cardinality $\sum_{i=1}^{s-1}q^{ik+h}+1$. Every projected code in this construction attains the maximum possible distance and shares the same cardinality $\sum_{i=1}^{s-1}q^{ik+h}+1$.
We now construct flag codes of type $(1, 2, \ldots, k+h, 2k+h, \ldots, n-2k, n-k, \ldots, n-1)$ on $\mathbb{F}_q^{sk+h}$ ($s\geq 3$). Compared with the codes in Theorem \ref{theorem 3.3}, these codes feature longer type vectors and larger flag distances while maintaining the same cardinality. Notably, their projected codes of each dimension $t_i\in \{1, \ldots, k, n-k, \ldots, n-1\}$ achieve the maximum possible distance. Consequently, these codes exhibit enhanced performance in both error detection and correction, as well as greater information transmission capacity.

\begin{theorem}\label{theorem 4.2}
Let $n=sk+h$ such that $s\geq 2$ and $0\leq h< k$. For each integer $1\leq i\leq s-1$, the following statements hold:
\begin{itemize}
		\item[\rm(i)] If there exist $g\neq g'\in G_i$ such that $d_S(\mathrm{rs}((\mathbf{A}_ig)^{(k+1)}), \mathrm{rs}((\mathbf{A}_ig')^{(k+1)}))=2k$, then $$d_S(\mathrm{rs}((\mathbf{A}_ig)^{(m)}), \mathrm{rs}((\mathbf{A}_ig')^{(m)}))=2k$$
 for $k+1\leq m\leq(s-1)k+h$.
		\item[\rm(ii)] If there exists $g\in G_i$ such that $d_S(\mathrm{rs}((\mathbf{A}_ig)^{(k+1)}), \mathrm{rs}(\mathbf{B}^{(k+1)}_i))=2k$, then
 $$d_S(\mathrm{rs}((\mathbf{A}_ig)^{(m)}), \mathrm{rs}(\mathbf{B}^{(m)}_i))=2k$$ for $k+1\leq m\leq(s-1)k+h$.
\end{itemize}
\end{theorem}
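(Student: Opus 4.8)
The plan is to exploit the displacement structure of the matrix $\mathbf{A}_ig$ given in \eqref{equation 3.3}: after multiplication by $g\in G_i$, the generator matrix has the block form whose nontrivial rows are rows of $\mathbf{X}_i=\mathbf{P}_i^{\,t}$ for some exponent $t$, and crucially the "second" block $\mathbf{X}_i^{[k]}$ and the trailing block $\mathbf{X}_i^{(k-1)}$ are shifted copies of $\mathbf{X}_i^{(k)}$ under right multiplication by $\mathbf{P}_i$, by Proposition \ref{proposition 3.1}. I would first set up, for part (i), the two stacked matrices $(\mathbf{A}_ig)^{(m)}$ and $(\mathbf{A}_ig')^{(m)}$, subtract one block-row from the other to cancel the $\mathbf{I}_k$ and $\mathbf{I}_{(s-i-1)k}$ identity blocks (exactly as in the proof of Theorem \ref{theorem 3.7}), and reduce the rank computation to the rank of a difference matrix built from $(\mathbf{P}_i^{a}-\mathbf{P}_i^{b})$-type blocks. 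The hypothesis at $m=k+1$ says this difference matrix has full row rank $k+1$ at the first stage; I want to propagate this to all $m\le (s-1)k+h$.

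The key step is an induction on $m$, and this is where Lemma \ref{lemma 3.2.2} does the work. Write $d_S(\mathrm{rs}((\mathbf{A}_ig)^{(m)}),\mathrm{rs}((\mathbf{A}_ig')^{(m)}))=2k$ as the statement that the rows of $(\mathbf{A}_ig')^{(m)}$ that are \emph{not} already forced to lie in the span — equivalently, that a certain $(m-k)$-row submatrix coming from $(\mathbf{P}_i^{b})$ minus a representative from the $(\mathbf{P}_i^{a})$-rows — is linearly independent modulo the fixed part. Concretely, after the row reduction, the condition $d_S=2k$ at level $m$ is equivalent to: no nonzero $\mathbb{F}_q$-combination of the "new" rows appearing between levels $k$ and $m$ on one side lies in the span of the corresponding rows on the other side. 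Because each new row at level $j+1$ is obtained from the new row at level $j$ by right multiplication by $\mathbf{P}_i$ (Proposition \ref{proposition 3.1}(ii)), Lemma \ref{lemma 3.2.2} shows that if a linear dependence were to appear at level $m+1$ but not at level $m$, then applying the shift argument would force a dependence already at level $m$ — contradiction. So the property $d_S=2k$ at level $k+1$ forces it at every level up to where the blocks "run out", namely $m=(s-1)k+h$, since $\mathbf{P}_i$ has size $ik+h\le (s-1)k+h$ and the relevant rows of $\mathbf{A}_ig$ extend exactly that far. Part (ii) is handled identically: replace $(\mathbf{A}_ig')^{(m)}$ by $\mathbf{B}_i^{(m)}$, note from the definitions of $\mathbf{A}_i$ and $\mathbf{B}_i$ that they differ only in the top block ($\mathbf{X}_i^{(k)}$ versus $\mathbf{0}$) while sharing the same shifted-block structure below, and run the same row reduction and the same Lemma \ref{lemma 3.2.2} induction.

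The main obstacle I anticipate is bookkeeping the precise range of row indices: one must verify that the shift $(\mathbf{P}_i^{t})^{[x,y]}\mapsto (\mathbf{P}_i^{t})^{[x+1,y+1]}$ invoked by Lemma \ref{lemma 3.2.2} stays within the valid range $1\le x<y\le ik+h-1$ for all $m$ in the claimed interval, and that at the top of the range the blocks $\mathbf{I}^{(k)}_{ik+h}$, $\mathbf{I}^{[k]}_{ik+h}$, $\mathbf{I}^{(k-1)}_{ik+h}$ in $\mathbf{A}_i$ interlock correctly so that incrementing $m$ by one genuinely adds exactly one shifted row on each side. Once the correspondence between "adding a row to $\mathbf{W}^{(m)}$" and "shifting a block by one" is pinned down, the result is a direct consequence of Lemma \ref{lemma 3.2.2} together with \eqref{equation 1}, with no heavy computation beyond the block row reduction already illustrated in Theorem \ref{theorem 3.7}.
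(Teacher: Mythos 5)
Your induction via Lemma~\ref{lemma 3.2.2} is the right tool, but it only covers the first half of the claimed range. Look at the block structure in \eqref{equation 3.3}: the rows of $\mathbf{A}_ig$ indexed $k+1$ through $ik+h$ come from $\mathbf{X}_i^{[k]}$, so incrementing $m$ in this range does indeed add one shifted row of $\mathbf{P}_i$ and the shift argument of Lemma~\ref{lemma 3.2.2} propagates the rank condition from $m=k+1$ up to $m=ik+h$. However, the rows indexed $ik+h+1$ through $(s-1)k+h$ are the rows of the identity block $\mathbf{I}_{(s-i-1)k}$, not shifts of $\mathbf{P}_i$. Your statement that ``the relevant rows of $\mathbf{A}_ig$ extend exactly that far'' conflates $\mathbf{P}_i$ (of size $ik+h$) with the full range $(s-1)k+h$; for $i<s-1$ these differ, and the obstacle you flag as mere index bookkeeping is in fact the point at which the shift correspondence breaks entirely.

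What rescues the statement — and what the paper does — is that the regime $ik+h\le m\le(s-1)k+h$ needs no induction and no hypothesis at all: there, after row reduction the two codewords differ by the invertible matrix $\mathbf{P}_i^b-\mathbf{P}_i^a$ (resp.\ $\mathbf{P}_i^a$ versus the identity for part (ii)), and the extra identity rows from $\mathbf{I}_{(s-i-1)k}$ have disjoint support, so the stacked rank is unconditionally $m+k$; this is the computation in \eqref{equation 4.1.1} and \eqref{equation 4.1.2}. So the correct structure is a two-case proof: a conditional induction for $k+1\le m\le ik+h$ driven by Lemma~\ref{lemma 3.2.2}, and an unconditional direct rank count for $ik+h\le m\le(s-1)k+h$. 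Your proposal supplies the first case but omits the second, which is a genuine gap rather than bookkeeping.
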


\begin{proof}
Fix $1\leq i\leq s-1$, let
$$g=\footnotesize{\left[
\begin{array}{cccc}
\mathbf{I}_{(s-i-1)k} & \mathbf{0} & \mathbf{0}\\
\mathbf{0} & \mathbf{I}_k & \mathbf{0}\\
\mathbf{0} & \mathbf{0} & \mathbf{P}_i^a
\end{array}
\right]}
\ \ \ \mbox{and}\ \ \ g'=\footnotesize{\left[
\begin{array}{cccc}
\mathbf{I}_{(s-i-1)k} & \mathbf{0} & \mathbf{0}\\
\mathbf{0} & \mathbf{I}_k & \mathbf{0}\\
\mathbf{0} & \mathbf{0} & \mathbf{P}_i^b
\end{array}
\right]}$$
be two distinct elements in $G_i$ with $1\leq a<b\leq q^{ik+h}-1$.

$\rm{(i)}$ When $k+1\leq m\leq ik+h-1$, if $d_S(\mathrm{rs}((\mathbf{A}_ig)^{(m)}), \mathrm{rs}((\mathbf{A}_ig')^{(m)}))=2k$, then
$$\begin{aligned}
\mathrm{rk}\footnotesize{\left[
\begin{array}{cccc}
(\mathbf{A}_ig)^{(m)}\\
(\mathbf{A}_ig')^{(m)}
\end{array}
\right]}&=\mathrm{rk}\footnotesize{\left[
\begin{array}{cccc}
\mathbf{0}_{(s-i-1)k} & \mathbf{I}_k & (\mathbf{P}^a_i)^{(k)}\\
\mathbf{0}_{(s-i-1)k} & \mathbf{0} & (\mathbf{P}^a_i)^{[k+1,m]}\\
\hdashline
\mathbf{0}_{(s-i-1)k} & \mathbf{0} & (\mathbf{P}^b_i-\mathbf{P}^a_i)^{(m)}
\end{array}
\right]
}=m+k,
\end{aligned}$$
which implies that each row of $(\mathbf{P}^a_i)^{[k+1,m]}$ can be $\mathbb{F}_q$-linear represented by the rows of $(\mathbf{P}^b_i-\mathbf{P}^a_i)^{(m)}$.
It follows from Lemma \ref{lemma 3.2.2} that
$$\begin{aligned}
\mathrm{rk}\footnotesize{\left[
\begin{array}{cccc}
(\mathbf{A}_ig)^{(m+1)}\\
(\mathbf{A}_ig')^{(m+1)}
\end{array}
\right]}&=\mathrm{rk}\footnotesize{\left[
\begin{array}{cccc}
\mathbf{I}_k & (\mathbf{P}^a_i)^{(k)}\\
\mathbf{0} & (\mathbf{P}^a_i)^{[k+1,m+1]}\\
\hdashline
\mathbf{0} & (\mathbf{P}^b_i-\mathbf{P}^a_i)^{(m+1)}
\end{array}
\right]
}=k+\mathrm{rk}((\mathbf{P}^b_i-\mathbf{P}^a_i)^{(m+1)})=m+k+1.
\end{aligned}$$
By induction, for each $k+1 \leq m \leq ik+h$, it concludes that
$$
\mathrm{rk}\footnotesize{\left[
\begin{array}{cccc}
(\mathbf{A}_ig)^{(m)}\\
(\mathbf{A}_ig')^{(m)}
\end{array}
\right]}=m+k.$$

When $ik+h\leq m\leq (s-1)k+h$, we have
 \begin{equation}\label{equation 4.1.1}
\mathrm{rk}\footnotesize{\left[
\begin{array}{cccc}
(\mathbf{A}_ig)^{(m)}\\
(\mathbf{A}_ig')^{(m)}
\end{array}
\right]}=\mathrm{rk}\footnotesize{\left[
\begin{array}{cccc}
\mathbf{0} & \mathbf{I}_k & (\mathbf{P}^a_i)^{(k)}\\
\mathbf{0} & \mathbf{0} & (\mathbf{P}^a_i)^{[k]}\\
\mathbf{I}^{(m-ik-h)}_{(s-i-1)k} & \mathbf{0} & \mathbf{0}\\
\hdashline
\mathbf{0} & \mathbf{I}_k & (\mathbf{P}^b_i)^{(k)}\\
\mathbf{0} & \mathbf{0} & (\mathbf{P}^b_i)^{[k]}\\
\mathbf{I}^{(m-ik-h)}_{(s-i-1)k} & \mathbf{0} & \mathbf{0}
\end{array}
\right]
}=\mathrm{rk}\footnotesize{\left[
\begin{array}{cccc}
\mathbf{0} & \mathbf{I}_k & (\mathbf{P}^a_i)^{(k)}\\
\mathbf{0} & \mathbf{0} & (\mathbf{P}^a_i)^{[k]}\\
\mathbf{I}^{(m-ik-h)}_{(s-i-1)k} & \mathbf{0} & \mathbf{0}\\
\hdashline
\mathbf{0} & \mathbf{0} & \mathbf{P}^b_i-\mathbf{P}^a_i
\end{array}
\right]
}=m+k.
\end{equation}
Hence, in both cases, by \eqref{equation 1}, $d_S(\mathrm{rs}((\mathbf{A}_ig)^{(m)}), \mathrm{rs}((\mathbf{A}_ig')^{(m)}))=2k$ for each $k+1\leq m\leq (s-1)k+h$.

$\rm{(ii)}$ When $k+1\leq m\leq ik+h-1$, if $d_S(\mathrm{rs}((\mathbf{A}_ig)^{(m)}), \mathrm{rs}(\mathbf{B}^{(m)}_i))=2k$, then
$$\begin{aligned}
\mathrm{rk}\footnotesize{\left[
\begin{array}{cccc}
(\mathbf{A}_ig)^{(m)}\\
\mathbf{B}^{(m)}_i
\end{array}
\right]}&=\mathrm{rk}\footnotesize{\left[
\begin{array}{cccc}
\mathbf{0}_{(s-i-1)k} & \mathbf{I}_k & (\mathbf{P}^a_i)^{(k)}\\
\mathbf{0}_{(s-i-1)k} & \mathbf{0} & (\mathbf{P}^a_i)^{[k+1,m]}\\
\hdashline
\mathbf{0}_{(s-i-1)k} & \mathbf{I}_k & \mathbf{0}\\
\mathbf{0}_{(s-i-1)k} & \mathbf{0} & \mathbf{I}^{[k+1, m]}_{ik+h}
\end{array}
\right]
}=m+k,
\end{aligned}$$
which implies that each row of $\mathbf{I}^{[k+1, m]}_{ik+h}=(\mathbf{P}^{q^{ik+h}-1}_i)^{[k+1, m]}$ can be $\mathbb{F}_q$-linear represented by the rows of $(\mathbf{P}^a_i)^{( m)}$.
It follows from Lemma \ref{lemma 3.2.2} that
$$\begin{aligned}
\mathrm{rk}\footnotesize{\left[
\begin{array}{cccc}
(\mathbf{A}_ig)^{(m+1)}\\
\mathbf{B}_i^{(m+1)}
\end{array}
\right]}&=\mathrm{rk}\footnotesize{\left[
\begin{array}{cccc}
\mathbf{I}_k & (\mathbf{P}^a_i)^{(k)}\\
\mathbf{0} & (\mathbf{P}^a_i)^{[k+1,m+1]}\\
\hdashline
\mathbf{I}_k & \mathbf{0}\\
\mathbf{0} & \mathbf{I}^{[k+1, m+1]}_{ik+h}
\end{array}
\right]
}=k+\mathrm{rk}((\mathbf{P}^a_i)^{(m+1)})=m+k+1.
\end{aligned}$$
By induction, for each $k+1 \leq m \leq ik+h$, it concludes that
$$
\mathrm{rk}\footnotesize{\left[
\begin{array}{cccc}
(\mathbf{A}_ig)^{(m)}\\
\mathbf{B}_i^{(m)}
\end{array}
\right]}=m+k.$$

When $ik+h\leq m\leq (s-1)k+h$, we have
 \begin{equation}\label{equation 4.1.2}
\mathrm{rk}\footnotesize{\left[
\begin{array}{cccc}
(\mathbf{A}_ig)^{(m)}\\
\mathbf{B}^{(m)}_i
\end{array}
\right]}=\mathrm{rk}\footnotesize{\left[
\begin{array}{cccc}
\mathbf{0} & \mathbf{I}_k & (\mathbf{P}^a_i)^{(k)}\\
\mathbf{0} & \mathbf{0} & (\mathbf{P}^a_i)^{[k]}\\
\mathbf{I}^{(m-ik-h)}_{(s-i-1)k} & \mathbf{0} & \mathbf{0}\\
\hdashline
\mathbf{0} & \mathbf{I}_k & \mathbf{0}\\
\mathbf{0} & \mathbf{0} &  \mathbf{I}^{[k]}_{ik+h}\\
\mathbf{I}^{(m-ik-h)}_{(s-i-1)k} & \mathbf{0} & \mathbf{0}
\end{array}
\right]
}=m+k.
\end{equation}
Thus, in both cases, by \eqref{equation 1}, $d_S(\mathrm{rs}((\mathbf{A}_ig)^{(m)}), \mathrm{rs}(\mathbf{B}^{(m)}_i))=2k$ for each $k+1\leq m\leq (s-1)k+h$.
\end{proof}

\begin{theorem}\label{theorem 4.1}
Let $n=sk+h$ such that $s\geq 2$ and $0\leq h< k$, and let
$$\mathbf{t}=(1, 2, \ldots, k+h, 2k+h, \ldots, (s-2)k+h, n-k, \ldots, n-1).$$
Let $\mathscr{C}$ and $\mathcal{C}$ be defined as in \eqref{equation 3.1} and \eqref{equation 3.2}, respectively.
Define
\begin{equation}\label{equation 4.3}
\mathcal{C}_{\mathbf{t}}=\{\mathrm{W}^{(\mathbf{t})}: \mathrm{W}=\mathrm{rs}(\mathbf{W})\in \mathscr{C}\}
\end{equation}
with
$$\mathrm{W}^{(\mathbf{t})}=(\mathrm{W}^{(1)}, \mathrm{W}^{(2)}, \ldots, \mathrm{W}^{(k+h)}, \mathrm{W}^{(2k+h)}, \ldots, \mathrm{W}^{((s-2)k+h)}, \mathrm{W}^{(n-k)}, \ldots, \mathrm{W}^{(n-1)}),$$
where $\mathbf{W}$ is the generator matrix of the subspace $\mathrm{W}$, and $\mathrm{W}^{(j)}=\mathrm{rs}(\mathbf{W}^{(j)})$ with $\mathbf{W}^{(j)}$ being the submatrix of $\mathbf{W}$ formed by its first $j$ rows for $1\leq j\leq n-1$.
Let
$$\mathcal{C}_m=\{\mathrm{W}^{(m)}: \mathrm{W}=\mathrm{rs}(\mathbf{W})\in \mathscr{C}\}\subseteq \mathcal{G}_q(n, m),$$
where $m\in \{k+1, k+2, \ldots, k+h, 2k+h, \ldots, (s-2)k+h\}$. Then when $s\geq 3$ and $s\neq 4$, it holds$$d_S(\mathcal{C}_m)=2k\ \ \ \mbox{and}\ \ \ |\mathcal{C}_m|=|\mathcal{C}|=|\mathscr{C}|=\sum^{s-1}_{i=1}q^{ik+h}+1.$$
\end{theorem}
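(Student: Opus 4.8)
The plan is to reduce the statement to rank estimates for pairs of matrices. Write the generating set of $\mathscr C$ from \eqref{equation 3.1} as $\{\mathbf A_ig,\mathbf B_i,\mathbf M:g\in G_i,\ 1\le i\le s-1\}$. I would prove that for each admissible $m$ and every two distinct matrices $\mathbf U,\mathbf V$ in this set, the stacked matrix $\left[\begin{smallmatrix}\mathbf U^{(m)}\\ \mathbf V^{(m)}\end{smallmatrix}\right]$ has rank at least $m+k$; by \eqref{equation 1} this gives $d_S(\mathcal C_m)\ge 2k$, and exhibiting one pair with rank exactly $m+k$ upgrades it to $d_S(\mathcal C_m)=2k$. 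Since $d_S(\mathcal C_m)\ge 2k>0$, the projection $\mathrm W\mapsto\mathrm W^{(m)}$ is injective on $\mathscr C$, so $|\mathcal C_m|=|\mathcal C|=|\mathscr C|=\sum_{i=1}^{s-1}q^{ik+h}+1$ by Theorem \ref{theorem 3.1}(i). The assumption $m\le(s-2)k+h$ is used throughout: it ensures the first $m$ rows of any $\mathbf A_ig$ or $\mathbf B_i$ reach at most into the $\mathbf I_{(s-i-1)k}$-block, never the bottom $\mathbf I^{(k-1)}_{ik+h}$-block.

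The pairs split into three families. For same-index pairs $(\mathbf A_ig,\mathbf A_ig')$ and $(\mathbf A_ig,\mathbf B_i)$, I would subtract the rows the two matrices share (the $\mathbf I_k$-block, and for $m\ge ik+h$ also the $\mathbf I_{(s-i-1)k}$-block); what remains contains the $\mathbf I_k$-block together with $\mathbf P^b_i-\mathbf P^a_i$ (resp.\ $\mathbf P^a_i-\mathbf I_{ik+h}$), which is invertible because $\mathbf P_i$ has order $q^{ik+h}-1$, and its first $\min\{m,ik+h\}$ rows occupy columns disjoint from the blocks already counted, so the rank is at least $m+k$. Theorem \ref{theorem 4.2} then yields $d_S=2k$ for all $k+1\le m\le(s-1)k+h$ once the base case $m=k+1$ is verified, which is again the invertibility of $\mathbf P^b_i-\mathbf P^a_i$. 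For pairs with $\mathbf M$, using $\mathrm{rs}(\mathbf M^{(m)})=\langle e_{n-m+1},\dots,e_n\rangle$ it suffices that $\mathrm{rs}(\mathbf U^{(m)})$ projects onto the first $n-m$ coordinates with rank $\ge k$; every admissible $m$ satisfies $m\le ik+h$ or $m\ge(i+1)k+h$ for each $i$, and in the former case the $\mathbf I_k$-block of $\mathbf U^{(m)}$ lies entirely in the first $n-m$ columns, while in the latter case the rows $e_1,\dots,e_k$ of its $\mathbf I_{(s-i-1)k}$-block are present and lie in columns $1,\dots,k$.

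The distinct-index pairs $(\mathbf A_ig,\mathbf A_{i'}g')$, $(\mathbf A_ig,\mathbf B_{i'})$, $(\mathbf B_i,\mathbf A_{i'}g')$ and $(\mathbf B_i,\mathbf B_{i'})$ with $i<i'$ are the core difficulty and, I expect, the main obstacle. The two matrices have different column-block partitions $((s-i-1)k,k,ik+h)$ and $((s-i'-1)k,k,i'k+h)$, so one takes all $m$ (independent) rows of one codeword and tries to produce $k$ further independent rows of the other — its $\mathbf I_k$-block or the leading rows of its $\mathbf I_{(s-i-1)k}$-block — by tracking which coordinate directions are already occupied, separately over the subranges of $m\in\{k+1,\dots,k+h\}\cup\{2k+h,\dots,(s-2)k+h\}$. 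This is exactly where $s\ne 4$ is needed: for $s=4$ and the single value $m=2k+h=n-2k$ one finds $\mathrm{rs}(\mathbf B_1^{(2k+h)})=\langle e_1,\dots,e_k\rangle\oplus\langle e_{2k+1},\dots,e_{3k}\rangle\oplus\langle e_{4k+1},\dots,e_n\rangle$ and $\mathrm{rs}(\mathbf B_3^{(2k+h)})=\langle e_1,\dots,e_k\rangle\oplus\langle e_{2k+1},\dots,e_{3k+h}\rangle$, whose sum has dimension only $2k+2h<3k+h=m+k$ since $h<k$, so $d_S(\mathcal C_{2k+h})$ would be merely $2h$; the clash is between the first $k$ rows of the $\mathbf I_{(s-2)k}$-block of $\mathbf B_1$ and the $\mathbf I_k$-block of $\mathbf B_3$, and for $s\ge 5$ the latter sits in columns $(s-4)k+1,\dots,(s-3)k$, which start beyond column $k$, so the clash — and hence the obstruction — disappears, while for $s=3$ the value $2k+h$ is not admissible and $\mathbf B_3$ does not exist. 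Finally, for the upper bound $d_S(\mathcal C_m)\le 2k$ one checks, as in the proof of Theorem \ref{theorem 3.7} and using Proposition \ref{proposition 3.1}, that the element of $G_i$ attached to $\mathbf P^k_i$ makes some pair $(\mathbf A_ig,\mathbf B_i)$ satisfy $\mathrm{rk}\left[\begin{smallmatrix}(\mathbf A_ig)^{(k+1)}\\ \mathbf B_i^{(k+1)}\end{smallmatrix}\right]=2k+1$ (or, when $m=ik+h$ is admissible, the same-index $\mathbf A$-$\mathbf A$ pair already gives equality); Theorem \ref{theorem 4.2}(ii) then propagates $d_S(\mathcal C_m)\le 2k$ to the whole range, and together with the lower bound the proof is complete.
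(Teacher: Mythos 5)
Your overall plan is the same as the paper's: reduce everything to lower bounds $\mathrm{rk}\left[\begin{smallmatrix}\mathbf U^{(m)}\\ \mathbf V^{(m)}\end{smallmatrix}\right]\geq m+k$ for the three families of pairs (same index; distinct indices; anything against $\mathbf M$), then exhibit one pair achieving equality at $m=k+1$ and propagate by Theorem~\ref{theorem 4.2}. Your diagnosis of the $s=4$ obstruction — $\mathrm{rs}(\mathbf B_1^{(2k+h)})+\mathrm{rs}(\mathbf B_3^{(2k+h)})$ having dimension only $2k+2h<3k+h$ — is exactly the paper's equation~\eqref{equation 4.2}, and your explanation of why the clash vanishes for $s\geq 5$ and is vacuous for $s=3$ matches Steps~2 and~3 of the paper. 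Your handling of $\mathbf M$ via the projection onto the first $n-m$ coordinates and the dichotomy $m\leq ik+h$ versus $m\geq (i+1)k+h$ is a slight repackaging of the paper's Step~3 and is correct.

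Two points deserve flagging. First, for the pair $(\mathbf A_i g,\mathbf B_i)$ you claim the subtraction produces $\mathbf P_i^a-\mathbf I_{ik+h}$ and that this is invertible; that is not literally true: when $g$ is the identity of $G_i$ (so $\mathbf P_i^a=\mathbf I_{ik+h}$) the difference vanishes, yet the pair $(\mathbf A_i,\mathbf B_i)$ is still a legitimate distinct pair (they differ in the top-right $k\times(ik+h)$ block). The paper avoids this by noting that, after clearing the two $\mathbf I_k$ blocks, the remaining rows already contain $(\mathbf P_i^a)^{(m)}$, which has rank $m$ on its own — no invertibility of a difference is needed. Your argument would need this small repair. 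Second, you correctly identify the distinct-index pairs as the bulk of the work but leave them as a plan (``track which coordinate directions are already occupied over the subranges of $m$''); this is where the paper spends most of its effort, splitting $m$ into the ranges $m\leq ik+h$, $ik+h<m<jk+h$ and $jk+h\leq m$, with further subcases on $j=s-1$ versus $j\leq s-2$ and on $m-ik-h\geq 2k$ versus $m=(i+1)k+h$. Your outline points in the right direction, but the verification that the rank bound holds in every subcase — in particular for $(\mathbf B_{s-3},\mathbf B_{s-1})$ at $m=(s-2)k+h$ in general, not only for $(\mathbf B_1,\mathbf B_3)$ — is precisely what you would still have to write out.
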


\begin{proof}
The case $s=2$ is covered by Theorems \ref{theorem 3.6} and \ref{theorem 3.7}. It remains to consider $s \geq 3$.

$\mathbf{Step\ 1}$. Fix $i\in \{1, \ldots, s-1\}$, for any $\mathbf{U}$ and $\mathbf{V}$ in $\{\mathbf{A}_ig, \mathbf{B}_i: g\in G_i\}$, we show that $$\mathrm{rk}\footnotesize{\left[
\begin{array}{cccc}
\mathbf{U}^{(m)}\\
\mathbf{V}^{(m)}
\end{array}
\right]}\geq m+k.$$

Let
$$g=\footnotesize{\left[
\begin{array}{cccc}
\mathbf{I}_{(s-i-1)k} & \mathbf{0} & \mathbf{0}\\
\mathbf{0} & \mathbf{I}_k & \mathbf{0}\\
\mathbf{0} & \mathbf{0} & \mathbf{P}_i^a
\end{array}
\right]}
\ \ \ \mbox{and}\ \ \ g'=\footnotesize{\left[
\begin{array}{cccc}
\mathbf{I}_{(s-i-1)k} & \mathbf{0} & \mathbf{0}\\
\mathbf{0} & \mathbf{I}_k & \mathbf{0}\\
\mathbf{0} & \mathbf{0} & \mathbf{P}_i^b
\end{array}
\right]}$$
be two distinct elements in $G_i$ such that $1\leq a<b\leq q^{ik+h}-1$.

When $k+1\leq m\leq ik+h$, by \eqref{equation 3.3}, we get that
$$\begin{aligned}
\mathrm{rk}\footnotesize{\left[
\begin{array}{cccc}
(\mathbf{A}_ig)^{(m)}\\
(\mathbf{A}_ig')^{(m)}
\end{array}
\right]}&=\mathrm{rk}\footnotesize{\left[
\begin{array}{cccc}
\mathbf{0}_{(s-i-1)k} & \mathbf{I}_k & (\mathbf{P}^a_i)^{(k)}\\
\mathbf{0}_{(s-i-1)k} & \mathbf{0}_k & (\mathbf{P}^a_i)^{[k+1, m]}\\
\hdashline
\mathbf{0}_{(s-i-1)k} & \mathbf{I}_k & (\mathbf{P}^b_i)^{(k)}\\
\mathbf{0}_{(s-i-1)k} & \mathbf{0}_k & (\mathbf{P}^b_i)^{[k+1, m]}
\end{array}
\right]
}=\mathrm{rk}\footnotesize{\left[
\begin{array}{cccc}
\mathbf{I}_k & (\mathbf{P}^a_i)^{(k)}\\
\mathbf{0}_k & (\mathbf{P}^a_i)^{[k+1, m]}\\
\mathbf{0}_k & (\mathbf{P}^b_i-\mathbf{P}^a_i)^{(m)}
\end{array}
\right]
}\geq m+k
\end{aligned}$$
and
$$\begin{aligned}
\mathrm{rk}\footnotesize{\left[
\begin{array}{cccc}
(\mathbf{A}_ig)^{(m)}\\
\mathbf{B}_i^{(m)}
\end{array}
\right]}&=\mathrm{rk}\footnotesize{\left[
\begin{array}{cccc}
\mathbf{0}_{(s-i-1)k} & \mathbf{I}_k & (\mathbf{P}^a_i)^{(k)}\\
\mathbf{0}_{(s-i-1)k} & \mathbf{0}_k & (\mathbf{P}^a_i)^{[k+1, m]}\\
\hdashline
\mathbf{0}_{(s-i-1)k} & \mathbf{I}_k & \mathbf{0}_{ik+h}\\
\mathbf{0}_{(s-i-1)k} & \mathbf{0}_k &  \mathbf{I}_{ik+h}^{[k+1, m]}
\end{array}
\right]
}\geq m+k.
\end{aligned}$$

When $ik+h< m\leq (s-2)k+h$, it follows from \eqref{equation 4.1.1} and \eqref{equation 4.1.2} that $\mathrm{rk}\footnotesize{\left[
\begin{array}{cccc}
(\mathbf{A}_ig)^{(m)}\\
\mathbf{U}^{(m)}
\end{array}
\right]}=m+k$ for $\mathbf{U}\in \{\mathbf{A}_ig', \mathbf{B}_i\}$.

$\mathbf{Step\ 2}$. Without loss of generality, let $1\leq i< j\leq s-1$ be two integers. For any $\mathbf{U}\in \{\mathbf{A}_ig, \mathbf{B}_i: g\in G_i\}$ and $\mathbf{V}\in \{\mathbf{A}_j\hat{g}, \mathbf{B}_j: \hat{g}\in G_j\}$, we present that $$\mathrm{rk}\footnotesize{\left[
\begin{array}{cccc}
\mathbf{U}^{(m)}\\
\mathbf{V}^{(m)}
\end{array}
\right]}\geq m+k.$$

Let
$$g=\footnotesize{\left[
\begin{array}{cccc}
\mathbf{I}_{(s-i-1)k} & \mathbf{0} & \mathbf{0}\\
\mathbf{0} & \mathbf{I}_k & \mathbf{0}\\
\mathbf{0} & \mathbf{0} & \mathbf{P}_i^a
\end{array}
\right]}
\in G_i\ \ \mbox{and}\ \
\hat{g}=\footnotesize{\left[
\begin{array}{cccc}
\mathbf{I}_{(s-j-1)k} & \mathbf{0} & \mathbf{0}\\
\mathbf{0} & \mathbf{I}_k & \mathbf{0}\\
\mathbf{0} & \mathbf{0} & \mathbf{P}_j^c
\end{array}
\right]}\in G_j,$$ where $1\leq a\leq q^{ik+h}-1$ and $1\leq c\leq q^{jk+h}-1$.
Suppose that
$$\mathbf{D}_i=\footnotesize{\left[\begin{array}{cccccccc}
\mathbf{0}_{(j-i-1)k} & \mathbf{I}_k & (\mathbf{P}^a_i)^{(k)}\\
\mathbf{0}_{(j-i-1)k} & \mathbf{0}_k & (\mathbf{P}^a_i)^{[k]}
\end{array}
\right]}\ \ \mbox{and}\ \
\mathbf{E}_i=\footnotesize{\left[\begin{array}{cccccccc}
\mathbf{0}_{(j-i-1)k} & \mathbf{I}_k & \mathbf{0}_{ik+h}\\
\mathbf{0}_{(j-i-1)k} & \mathbf{0}_k & \mathbf{I}_{ik+h}^{[k]}
\end{array}
\right]}\in\mathrm{Mat}_{(ik+h)\times (jk+h)}(\mathbb{F}_q).$$

\begin{itemize}
\item
When $k+1\leq m\leq ik+h<jk+h\leq (s-1)k+h$, we have\end{itemize}
$$
\mathrm{rk}\footnotesize{\left[
\begin{array}{cccc}
(\mathbf{A}_ig)^{(m)}\\
(\mathbf{A}_j\hat{g})^{(m)}
\end{array}
\right]}=\mathrm{rk}\footnotesize{\left[
\begin{array}{cccccccc}
\mathbf{0}_{(s-j-1)k} & \mathbf{0}_k & \mathbf{D}_i^{(m)}\\
\hdashline
\mathbf{0}_{(s-j-1)k} & \mathbf{I}_k & (\mathbf{P}^c_j)^{(k)}\\
\mathbf{0}_{(s-j-1)k} & \mathbf{0}_k & (\mathbf{P}^c_j)^{[k+1, m]}
\end{array}
\right]}\geq m+k,$$
$$
\mathrm{rk}\footnotesize{\left[
\begin{array}{cccc}
(\mathbf{A}_ig)^{(m)}\\
\mathbf{B}_j^{(m)}
\end{array}
\right]}=\mathrm{rk}\footnotesize{\left[
\begin{array}{cccccccc}
\mathbf{0}_{(s-j-1)k} & \mathbf{0}_k & \mathbf{D}_i^{(m)}\\
\hdashline
\mathbf{0}_{(s-j-1)k} & \mathbf{I}_k & \mathbf{0}_{jk+h}\\
\mathbf{0}_{(s-j-1)k} & \mathbf{0}_k & \mathbf{I}_{jk+h}^{[k+1,m]}
\end{array}
\right]}\geq m+k,$$
$$
\mathrm{rk}\footnotesize{\left[
\begin{array}{cccc}
\mathbf{B}_i^{(m)}\\
(\mathbf{A}_j\hat{g})^{(m)}
\end{array}
\right]}=\mathrm{rk}\footnotesize{\left[
\begin{array}{cccccccc}
\mathbf{0}_{(s-j-1)k} & \mathbf{0}_k & \mathbf{E}_i^{(m)}\\
\hdashline
\mathbf{0}_{(s-j-1)k} & \mathbf{I}_k & (\mathbf{P}^c_j)^{(k)}\\
\mathbf{0}_{(s-j-1)k} & \mathbf{0}_k & (\mathbf{P}^c_j)^{[k+1, m]}
\end{array}
\right]}\geq m+k,$$
$$\mathrm{rk}\footnotesize{\left[
\begin{array}{cccc}
\mathbf{B}_i^{(m)}\\
\mathbf{B}_j^{(m)}
\end{array}
\right]}=\mathrm{rk}\footnotesize{\left[
\begin{array}{cccccccc}
\mathbf{0}_{(s-j-1)k} & \mathbf{0}_k & \mathbf{E}_i^{(m)}\\
\hdashline
\mathbf{0}_{(s-j-1)k} & \mathbf{I}_k & \mathbf{0}_{jk+h}\\
\mathbf{0}_{(s-j-1)k} & \mathbf{0}_k & \mathbf{I}_{jk+h}^{[k+1,m]}
\end{array}
\right]}\geq m+k.$$

\begin{itemize}
\item When $ik+h< m< jk+h\leq (s-1)k+h$, then $1\leq i\leq s-3$ and $s\geq 4$.
Since $m-ik-h\geq k$, let
$$
\mathbf{I}^{(m-ik-h)}_n=\footnotesize{\left[
\begin{array}{cccccccc}
\mathbf{I}_k & \mathbf{0}_{n-k}\\
\mathbf{0}_k & \mathbf{I}^{(m-(i+1)k-h)}_{n-k}
\end{array}
\right]}
$$
and
\[
\mathbf{I}^{(m-(i+1)k-h)}_{n-k}=
\Bigl[
\underbrace{\mathbf{F}_1}_{(s-j-2)k}\
\underbrace{\mathbf{F}_2}_{k}\
\underbrace{\mathbf{F}_3}_{jk+h}
\Bigr].
\] \end{itemize}
We get
\[
\mathrm{rk}\footnotesize{\left[
\begin{array}{cccc}
(\mathbf{A}_ig)^{(m)}\\
(\mathbf{A}_j\hat{g})^{(m)}
\end{array}
\right]}
= \begin{cases}
\mathrm{rk}\footnotesize{\left[
\begin{array}{cccccccc}
\mathbf{0}_k & \mathbf{D}_i\\
\mathbf{I}_k & \mathbf{0}_{n-k}\\
\mathbf{0}_k & \mathbf{I}^{(m-(i+1)k-h)}_{n-k}\\
\hdashline
\mathbf{I}_k & (\mathbf{P}^c_j)^{(k)}\\
\mathbf{0}_k & (\mathbf{P}^c_j)^{[k+1, m]}
\end{array}
\right]}\geq k+\mathrm{rk}((\mathbf{P}^c_j)^{(m)}) = m + k, & j = s-1, \\
\\
\mathrm{rk}\footnotesize{\left[
\begin{array}{cccccccc}
\mathbf{0}_k & \mathbf{0}_{(s-j-2)k} & \mathbf{0}_k & \mathbf{D}_i\\
\mathbf{I}_k & \mathbf{0}_{(s-j-2)k} & \mathbf{0}_k & \mathbf{0}_{jk+h}\\
\mathbf{0}_k & \mathbf{F}_1 & \mathbf{F}_2 & \mathbf{F}_3\\
\hdashline
\mathbf{0}_k & \mathbf{0}_{(s-j-2)k} & \mathbf{I}_k & (\mathbf{P}^c_j)^{(k)}\\
\mathbf{0}_k & \mathbf{0}_{(s-j-2)k} & \mathbf{0}_k & (\mathbf{P}^c_j)^{[k+1, m]}
\end{array}
\right]}\geq m + k, & j \leq s-2,
\end{cases}
\]
and
\[
\mathrm{rk}\footnotesize{\left[
\begin{array}{cccc}
\mathbf{B}_i^{(m)}\\
(\mathbf{A}_j\hat{g})^{(m)}
\end{array}
\right]}
= \begin{cases}
\mathrm{rk}\footnotesize{\left[
\begin{array}{cccccccc}
\mathbf{0}_k & \mathbf{E}_i\\
\mathbf{I}_k & \mathbf{0}_{n-k}\\
\mathbf{0}_k & \mathbf{I}^{(m-(i+1)k-h)}_{n-k}\\
\hdashline
\mathbf{I}_k & (\mathbf{P}^c_j)^{(k)}\\
\mathbf{0}_k & (\mathbf{P}^c_j)^{[k+1, m]}
\end{array}
\right]}\geq k+\mathrm{rk}((\mathbf{P}^c_j)^{(m)}) = m + k, & j = s-1, \\
\\
\mathrm{rk}\footnotesize{\left[
\begin{array}{cccccccc}
\mathbf{0}_k & \mathbf{0}_{(s-j-2)k} & \mathbf{0}_k & \mathbf{E}_i\\
\mathbf{I}_k & \mathbf{0}_{(s-j-2)k} & \mathbf{0}_k & \mathbf{0}_{jk+h}\\
\mathbf{0}_k & \mathbf{F}_1 & \mathbf{F}_2 & \mathbf{F}_3\\
\hdashline
\mathbf{0}_k & \mathbf{0}_{(s-j-2)k} & \mathbf{I}_k & (\mathbf{P}^c_j)^{(k)}\\
\mathbf{0}_k & \mathbf{0}_{(s-j-2)k} & \mathbf{0}_k & (\mathbf{P}^c_j)^{[k+1, m]}
\end{array}
\right]}\geq m + k, & j \leq s-2.
\end{cases}
\]
Next, we consider $\mathrm{rk}\footnotesize{\left[
\begin{array}{cccc}
\mathbf{U}^{(m)}\\
\mathbf{B}_j^{(m)}
\end{array}
\right]}\geq m+k$ for $\mathbf{U}\in \{\mathbf{A}_ig, \mathbf{B}_i: g\in G_i\}$.

$\mathbf{Case\ 1}$.\ If $j\leq s-2$, since $m-ik-h\geq k$, we have
$$\mathrm{rk}\footnotesize{\left[
\begin{array}{cccc}
(\mathbf{A}_ig)^{(m)}\\
\mathbf{B}_j^{(m)}
\end{array}
\right]}
=\mathrm{rk}\footnotesize{\left[
\begin{array}{cccccccc}
\mathbf{0}_k & \mathbf{0}_{(s-j-2)k} & \mathbf{0}_k & \mathbf{D}_i\\
\mathbf{I}_k & \mathbf{0}_{(s-j-2)k} & \mathbf{0}_k & \mathbf{0}_{jk+h}\\
\mathbf{0}_k & \mathbf{F}_1 & \mathbf{F}_2 & \mathbf{F}_3\\
\hdashline
\mathbf{0}_k & \mathbf{0}_{(s-j-2)k} & \mathbf{I}_k & \mathbf{0}_{jk+h}\\
\mathbf{0}_k & \mathbf{0}_{(s-j-2)k} & \mathbf{0}_k & \mathbf{I}_{jk+h}^{[k+1,m]}
\end{array}
\right]}\geq m+k$$
and $$\mathrm{rk}\footnotesize{\left[
\begin{array}{cccc}
\mathbf{B}_i^{(m)}\\
\mathbf{B}_j^{(m)}
\end{array}
\right]}
=\mathrm{rk}\footnotesize{\left[
\begin{array}{cccccccc}
\mathbf{0}_k & \mathbf{0}_{(s-j-2)k} & \mathbf{0}_k & \mathbf{E}_i\\
\mathbf{I}_k & \mathbf{0}_{(s-j-2)k} & \mathbf{0}_k & \mathbf{0}_{jk+h}\\
\mathbf{0}_k & \mathbf{F}_1 & \mathbf{F}_2 & \mathbf{F}_3\\
\hdashline
\mathbf{0}_k & \mathbf{0}_{(s-j-2)k} & \mathbf{I}_k & \mathbf{0}_{jk+h}\\
\mathbf{0}_k & \mathbf{0}_{(s-j-2)k} & \mathbf{0}_k & \mathbf{I}_{jk+h}^{[k+1,m]}
\end{array}
\right]}\geq m+k.$$

$\mathbf{Case\ 2}$.\ If $j=s-1$, We consider the following two cases.

$\mathbf{Case\ 2.1}$.\ If $m-ik-h\geq 2k$, then
$$\mathrm{rk}\footnotesize{\left[
\begin{array}{cccc}
(\mathbf{A}_ig)^{(m)}\\
\mathbf{B}_j^{(m)}
\end{array}
\right]}
=\mathrm{rk}\footnotesize{\left[
\begin{array}{cccccccc}
\mathbf{0}_k & \mathbf{D}_i\\
\mathbf{I}_k & \mathbf{0}_{n-k}\\
\mathbf{0}_k & \mathbf{I}^{(m-(i+1)k-h)}_{n-k}\\
\hdashline
\mathbf{I}_k & \mathbf{0}_{jk+h}\\
\mathbf{0}_k & \mathbf{I}_{jk+h}^{[k+1,m]}
\end{array}
\right]}\geq k+\mathrm{rk}\footnotesize{\left[
\begin{array}{cccccccc}
\mathbf{I}_k & \mathbf{0}_{n-2k}\\
\mathbf{0}_k &  \mathbf{I}^{(m-(i+2)k-h)}_{n-2k}\\
\hdashline
\mathbf{0}_k & \mathbf{I}_{(s-2)k+h}^{(m-k)}
\end{array}
\right]}\geq m+k$$ and
$$\mathrm{rk}\footnotesize{\left[
\begin{array}{cccc}
\mathbf{B}_i^{(m)}\\
\mathbf{B}_j^{(m)}
\end{array}
\right]}
=\mathrm{rk}\footnotesize{\left[
\begin{array}{cccccccc}
\mathbf{0}_k & \mathbf{E}_i\\
\mathbf{I}_k & \mathbf{0}_{n-k}\\
\mathbf{0}_k & \mathbf{I}^{(m-(i+1)k-h)}_{n-k}\\
\hdashline
\mathbf{I}_k & \mathbf{0}_{jk+h}\\
\mathbf{0}_k & \mathbf{I}_{jk+h}^{[k+1,m]}
\end{array}
\right]}\geq k+\mathrm{rk}\footnotesize{\left[
\begin{array}{cccccccc}
\mathbf{I}_k & \mathbf{0}_{n-2k}\\
\mathbf{0}_k &  \mathbf{I}^{(m-(i+2)k-h)}_{n-2k}\\
\hdashline
\mathbf{0}_k & \mathbf{I}_{(s-2)k+h}^{(m-k)}
\end{array}
\right]}\geq m+k.$$

$\mathbf{Case\ 2.2}$.\ If $m=(i+1)k+h$, then $\mathbf{I}^{(m-ik-h)}_{n}=\mathbf{I}^{(k)}_{n}$.
\begin{itemize}
    \item[---] Since $(s-2-i)k\geq k$ and $m-k\geq k$, we get
$$
\begin{aligned}
\mathrm{rk}\footnotesize{\left[
\begin{array}{cccc}
(\mathbf{A}_ig)^{(m)}\\
\mathbf{B}_j^{(m)}
\end{array}
\right]}
&=\mathrm{rk}\footnotesize{\left[
\begin{array}{cccccccc}
\mathbf{0}_k & \mathbf{D}_i\\
\mathbf{I}_k & \mathbf{0}_{n-k}\\
\hdashline
\mathbf{I}_k & \mathbf{0}_{jk+h}\\
\mathbf{0}_k & \mathbf{I}_{jk+h}^{[k+1,m]}
\end{array}
\right]}=k+\mathrm{rk}\footnotesize{\left[
\begin{array}{cccccccc}
\mathbf{D}_i\\
\mathbf{I}_{jk+h}^{[k+1,m]}
\end{array}
\right]}\\
&\geq
\left\{\begin{array}{ll}
k+\mathrm{rk}\footnotesize{\left[
\begin{array}{cccccccc}
\mathbf{0}_k & \mathbf{0}_{(s-3-i)k} & \mathbf{I}_k &  (\mathbf{P}^a_i)^{(k)}\\
\mathbf{0}_k & \mathbf{0}_{(s-3-i)k} & \mathbf{0}_k & (\mathbf{P}^a_i)^{[k]}\\
\mathbf{0}_k & \mathbf{I}^{(k)}_{(s-3-i)k} & \mathbf{0}_k & \mathbf{0}_{ik+h}
\end{array}
\right]}\geq m+k, & i\leq s-4,\\
\\
k+\mathrm{rk}\footnotesize{\left[
\begin{array}{cccccccc}
\mathbf{0}_k & \mathbf{I}_k &  (\mathbf{P}^a_i)^{(k)}\\
\mathbf{0}_k & \mathbf{0}_k & (\mathbf{P}^a_i)^{[k]}\\
\mathbf{0}_k & \mathbf{I}_k & \mathbf{0}_{ik+h}
\end{array}
\right]}\geq m+k, & i=s-3.
\end{array}\right.
\end{aligned}$$
\item[---] If $1\leq i\leq s-4$, then $(s-2-i)k\geq 2k$. Since $m-k\geq k$, it is clear that
$$\begin{aligned}\mathrm{rk}\footnotesize{\left[
\begin{array}{cccc}
\mathbf{B}_i^{(m)}\\
\mathbf{B}_j^{(m)}
\end{array}
\right]}
&=\mathrm{rk}\footnotesize{\left[
\begin{array}{cccccccc}
\mathbf{0}_k & \mathbf{E}_i\\
\mathbf{I}_k & \mathbf{0}_{n-k}\\
\hdashline
\mathbf{I}_k & \mathbf{0}_{jk+h}\\
\mathbf{0}_k & \mathbf{I}_{jk+h}^{[k+1,m]}
\end{array}
\right]}=k+\mathrm{rk}\footnotesize{\left[
\begin{array}{cccccccc}
\mathbf{E}_i\\
\mathbf{I}_{jk+h}^{[k+1,m]}
\end{array}
\right]}\\
&\geq
k+\mathrm{rk}\footnotesize{\left[
\begin{array}{cccccccc}
\mathbf{0}_k & \mathbf{0}_{(s-3-i)k} & \mathbf{I}_k & \mathbf{0}_{ik+h}\\
\mathbf{0}_k & \mathbf{0}_{(s-3-i)k} & \mathbf{0}_k &  \mathbf{I}^{[k]}_{ik+h}\\
\mathbf{0}_k & \mathbf{I}^{(k)}_{(s-3-i)k} & \mathbf{0}_k & \mathbf{0}_{ik+h}
\end{array}
\right]}\geq m+k.
\end{aligned}$$

If $i=s-3$, then $m=(s-2)k+h$. Hence,
\begin{equation}\label{equation 4.2}
\begin{aligned}
\mathrm{rk}\footnotesize{\left[
\begin{array}{cccc}
\mathbf{B}^{(m)}_i\\
\mathbf{B}^{(m)}_j
\end{array}
\right]}&=k+\mathrm{rk}\footnotesize{\left[
\begin{array}{cccccccc}
\mathbf{E}_{s-3}\\
\mathbf{I}_{(s-1)k+h}^{[k+1,(s-2)k+h]}
\end{array}
\right]}=k+\mathrm{rk}\footnotesize{\left[
\begin{array}{cccccccc}
\mathbf{0}_k & \mathbf{I}_k & \mathbf{0}_{(s-3)k+h}\\
\mathbf{0}_k & \mathbf{0}_k & \mathbf{I}^{[k]}_{(s-3)k+h}\\
\hdashline
\mathbf{0}_k & \mathbf{I}_k & \mathbf{0}_{(s-3)k+h}\\
\mathbf{0}_k & \mathbf{0}_k & \mathbf{I}^{((s-4)k+h)}_{(s-3)k+h}
\end{array}
\right]}\\
&=\left\{\begin{array}{ll}
(s-1)k+h=m+k, & s\geq 5,\\
\\
2k+2h=m+h, & s=4.
\end{array}\right.
\end{aligned}
\end{equation}
\end{itemize}

\begin{itemize}
\item When $ik+h<jk+h \leq m\leq (s-2)k+h$, then $(s-1)k+h-m\geq k$.
 For an integer $\ell\in \{i, j\}$, let $$\mathbf{H}_{\ell}=\footnotesize{\left[\begin{array}{cccccccc}
\mathbf{0}_{(s-1)k+h-m+(j-\ell)k} & \mathbf{I}_k & (\mathbf{P}^a_\ell)^{(k)}\\
\mathbf{0}_{(s-1)k+h-m+(j-\ell)k} & \mathbf{0}_k & (\mathbf{P}^a_\ell)^{[k]}
\end{array}
\right]}\ \ \mbox{and}\ \ \mathbf{K}_{\ell}=\footnotesize{\left[\begin{array}{cccccccc}
\mathbf{0}_{(s-1)k+h-m+(j-\ell)k} & \mathbf{I}_k & \mathbf{0}_{\ell k+h}\\
\mathbf{0}_{(s-1)k+h-m+(j-\ell)k} & \mathbf{0}_k & \mathbf{I}^{[k]}_{\ell k+h}
\end{array}
\right]}$$
be matrices in $\mathrm{Mat}_{(\ell k+h)\times (n-m+jk+h)}(\mathbb{F}_q)$. \end{itemize}
Consider
$$
\mathrm{rk}\footnotesize{\left[
\begin{array}{cccc}
(\mathbf{A}_ig)^{(m)}\\
(\mathbf{A}_j\hat{g})^{(m)}
\end{array}
\right]}=\mathrm{rk}\footnotesize{\left[
\begin{array}{cccccccc}
\mathbf{0}_{m-jk-h} & \mathbf{H}_i\\
\mathbf{I}_{m-jk-h} & \mathbf{0}_{n-m+jk+h}\\
\mathbf{0}_{m-jk-h} & \mathbf{I}^{(j-i)k}_{n-m+jk+h}\\
\hdashline
\mathbf{0}_{m-jk-h} & \mathbf{H}_j\\
\mathbf{I}_{m-jk-h} & \mathbf{0}_{n-m+jk+h}
\end{array}
\right]}, \ \ \mathrm{rk}\footnotesize{\left[
\begin{array}{cccc}
(\mathbf{A}_ig)^{(m)}\\
\mathbf{B}_j^{(m)}
\end{array}
\right]}=\mathrm{rk}\footnotesize{\left[
\begin{array}{cccccccc}
\mathbf{0}_{m-jk-h} & \mathbf{H}_i\\
\mathbf{I}_{m-jk-h} & \mathbf{0}_{n-m+jk+h}\\
\mathbf{0}_{m-jk-h} & \mathbf{I}^{(j-i)k}_{n-m+jk+h}\\
\hdashline
\mathbf{0}_{m-jk-h} & \mathbf{K}_j\\
\mathbf{I}_{m-jk-h} & \mathbf{0}_{n-m+jk+h}
\end{array}
\right]}$$
and
$$
\mathrm{rk}\footnotesize{\left[
\begin{array}{cccc}
\mathbf{B}_i^{(m)}\\
(\mathbf{A}_j\hat{g})^{(m)}
\end{array}
\right]}=\mathrm{rk}\footnotesize{\left[
\begin{array}{cccccccc}
\mathbf{0}_{m-jk-h} & \mathbf{K}_i\\
\mathbf{I}_{m-jk-h} & \mathbf{0}_{n-m+jk+h}\\
\mathbf{0}_{m-ik-h} & \mathbf{I}_{n-m+jk+h}^{(j-i)k}\\
\hdashline
\mathbf{0}_{m-jk-h} & \mathbf{H}_j\\
\mathbf{I}_{m-jk-h} & \mathbf{0}_{n-m+jk+h}
\end{array}
\right]},\ \
\mathrm{rk}\footnotesize{\left[
\begin{array}{cccc}
\mathbf{B}_i^{(m)}\\
\mathbf{B}_j^{(m)}
\end{array}
\right]}=\mathrm{rk}\footnotesize{\left[
\begin{array}{cccccccc}
\mathbf{0}_{m-jk-h} & \mathbf{K}_i\\
\mathbf{I}_{m-jk-h} & \mathbf{0}_{n-m+jk+h}\\
\mathbf{0}_{m-ik-h} & \mathbf{I}_{n-m+jk+h}^{(j-i)k}\\
\hdashline
\mathbf{0}_{m-jk-h} & \mathbf{K}_j\\
\mathbf{I}_{m-jk-h} & \mathbf{0}_{n-m+jk+h}
\end{array}
\right]}.$$
In this case, since $(j-i)k\geq k$ and $(s-1)k+h-m\geq k$, it holds $$\mathrm{rk}\footnotesize{\left[
\begin{array}{cccc}
\mathbf{U}^{(m)}\\
\mathbf{V}^{(m)}
\end{array}
\right]}\geq m+k$$
for any $\mathbf{U}\in \{\mathbf{A}_ig, \mathbf{B}_i: g\in G_i\}$ and $\mathbf{V}\in \{\mathbf{A}_j\hat{g}, \mathbf{B}_j: \hat{g}\in G_j\}$.

$\mathbf{Step\ 3}$. For any $\mathbf{U}\in \{\mathbf{A}_ig, \mathbf{B}_i: g\in G_i, 1\leq i\leq s-1\}$, we illustrate $$\mathrm{rk}\footnotesize{\left[
\begin{array}{cccc}
\mathbf{M}^{(m)}\\
\mathbf{U}^{(m)}
\end{array}
\right]}\geq m+k.$$
It follows from the structures of $\mathbf{U}^{(m)}$ and $\mathbf{M}^{(m)}$ that
$$\mathrm{rk}\footnotesize{\left[
\begin{array}{cccc}
\mathbf{M}^{(m)}\\
\mathbf{U}^{(m)}
\end{array}
\right]}\left\{\begin{array}{ll}
\geq m+k, & k+1\leq m\leq ik+h \leq (s-1)k+h,\\
\\
=m+k, & m=(i+1)k+h\leq (s-2)k+h,\\
\\
> m+k\ (\mbox{since}\ m-ik-h> k), & (i+2)k+h\leq m\leq (s-2)k+h.
\end{array}\right.$$

Hence, when $s\geq 3$ and $s\neq 4$, it follows from \eqref{equation 4.2} and \eqref{equation 1} that for any two distinct elements $\mathbf{U}$ and $\mathbf{V}$ in $\{\mathbf{A}_i g, \mathbf{B}_i, \mathbf{M}: g \in G_i, 1\leq i\leq s-1\}$, the subspace distance satisfies
$$d_S(\mathrm{rs}(\mathbf{U}^{(m)}), \mathrm{rs}(\mathbf{V}^{(m)}))\geq 2k$$
for $m\in \{k+1, k+2, \ldots, k+h, 2k+h, 3k+h, \ldots, (s-2)k+h\}$.
Therefore,
 $$d_S(\mathcal{C}_m)\geq2k\ \ \mbox{and}\ \ |\mathcal{C}_m|=\sum^{s-1}_{i=1}q^{ik+h}+1.$$

$\mathbf{Step\ 4}$. It remains to prove that there exist two elements $\mathrm{U}$ and $\mathrm{V}$ in $\mathscr{C}$ such that $d_S(\mathrm{U}^{(m)}, \mathrm{V}^{(m)})=2k$.

When $2\leq i\leq s-1$ or $h\geq 1$, we observe that $k+1\leq ik+h$. Let
$$g=\footnotesize{\left[
\begin{array}{cccc}
\mathbf{I}_{(s-i-1)k} & \mathbf{0} & \mathbf{0}\\
\mathbf{0} & \mathbf{I}_k & \mathbf{0}\\
\mathbf{0} & \mathbf{0} & \mathbf{P}_i^k
\end{array}
\right]}\in G_i.$$
It follows from Proposition \ref{proposition 3.1} that $(\mathbf{P}^k_i)^{(1)}=\mathbf{I}^{\{k+1\}}_{ik+h}$. Hence,
$$
\begin{aligned}
\mathrm{rk}\footnotesize{\left[
\begin{array}{cccc}
(\mathbf{A}_ig)^{(k+1)}\\
\mathbf{B}^{(k+1)}_i
\end{array}
\right]}&=\mathrm{rk}\footnotesize{\left[
\begin{array}{cccccccc}
\mathbf{0}_{(s-i-1)k} & \mathbf{I}_k & (\mathbf{P}^k_i)^{(k)}\\
\mathbf{0}_{(s-i-1)k} & \mathbf{0} & (\mathbf{P}^k_i)^{\{k+1\}}\\
\hdashline
\mathbf{0}_{(s-i-1)k} & \mathbf{I}_k & \mathbf{0}_{ik+h}\\
\mathbf{0}_{(s-i-1)k} & \mathbf{0} & \mathbf{I}^{\{k+1\}}_{ik+h}\\
\end{array}
\right]}=k+\mathrm{rk}\footnotesize{\left[
\begin{array}{cccccccc}
(\mathbf{P}^k_i)^{(k+1)}\\
\mathbf{I}^{\{k+1\}}_{ik+h}
\end{array}
\right]}=2k+1.
\end{aligned}$$

By \eqref{equation 1}, $d_S(\mathrm{rs}((\mathbf{A}_ig)^{(k+1)}), \mathrm{rs}(\mathbf{B}^{(k+1)}_i))=2k$. Moreover, we conclude from Theorem \ref{theorem 4.2} that
$$d_S(\mathrm{rs}((\mathbf{A}_ig)^{(m)}), \mathrm{rs}(\mathbf{B}^{(m)}_i))=2k$$
for each integer $m\in \{k+1, k+2, \ldots, k+h, 2k+h, \ldots, (s-2)k+h\}$. Therefore, $d_S(\mathcal{C}_m)=2k$.
\end{proof}

\begin{theorem}\label{theorem 4.4}
Let $n=sk+h$ such that $s\geq 3$, $s\neq 4$ and $0\leq h< k$. Let
$$\mathbf{t}=(1, 2, \ldots, k+h, 2k+h, \ldots, (s-2)k+h, n-k, \ldots, n-1).$$
Let $\mathscr{C}$, $\mathcal{C}$ and $
\mathcal{C}_{\mathbf{t}}$ be defined as in (\ref{equation 3.1}), (\ref{equation 3.2}) and \eqref{equation 4.3}, respectively. Then $\mathcal{C}_{\mathbf{t}}$ is a cardinality-consistent flag code on $\mathbb{F}^n_q$ with\vspace{-10 pt}
$$d_f(\mathcal{C}_{\mathbf{t}})=2k(s+h+k-2)\ \ \mbox{and}\ \  |\mathcal{C}_{\mathbf{t}}|=|\mathcal{C}|=|\mathscr{C}|=\sum^{s-1}_{i=1}q^{ik+h}+1.$$
\end{theorem}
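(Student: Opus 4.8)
The plan is to assemble the result from two facts established earlier. First, restricting every flag of $\mathcal{C}_{\mathbf{t}}$ to the coordinates of the (generalized) full admissible type $\mathbf{t}_0=(1,2,\ldots,k,n-k,\ldots,n-1)$ returns exactly the optimum distance flag code $\mathcal{C}'$ of Theorem~\ref{theorem 3.3}, of cardinality $\sum_{i=1}^{s-1}q^{ik+h}+1$. Second, the projected codes of $\mathcal{C}_{\mathbf{t}}$ at the remaining coordinates are governed by Theorem~\ref{theorem 4.1}. Here $\mathbf{t}_0$ is a subsequence of $\mathbf{t}$, and the coordinates of $\mathbf{t}$ not in $\mathbf{t}_0$ form the set $M=\{k+1,\ldots,k+h\}\cup\{2k+h,3k+h,\ldots,(s-2)k+h\}$, with $|M|=h+(s-3)$ and $k<m<n-k$ for every $m\in M$. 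In the degenerate case $s=3$, $h=0$ one has $M=\emptyset$, $\mathbf{t}=\mathbf{t}_0$, and $2k(s+h+k-2)=2k(k+1)=D^{(\mathbf{t}_0,n)}$, so the statement is just Theorem~\ref{theorem 3.3}; assume $M\neq\emptyset$ from now on.

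I would first handle cardinality and cardinality-consistency. As $k\le k+h$ is a coordinate of $\mathbf{t}$, the $k$-projected code of $\mathcal{C}_{\mathbf{t}}$ is $\mathcal{C}_k$, a partial $k$-spread of cardinality $\sum_{i=1}^{s-1}q^{ik+h}+1=|\mathscr{C}|$ by Theorem~\ref{theorem 3.1}(i); consequently $\mathrm{W}\mapsto\mathrm{W}^{(k)}$, hence $\mathrm{W}\mapsto\mathrm{W}^{(\mathbf{t})}$, is injective on $\mathscr{C}$, so $|\mathcal{C}_{\mathbf{t}}|=|\mathscr{C}|=\sum_{i=1}^{s-1}q^{ik+h}+1$. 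Cardinality-consistency then reduces to checking that every projected code $\mathcal{C}_m$ with $m\in\mathbf{t}$ has the same cardinality: for $m\in\{1,\ldots,k,n-k,\ldots,n-1\}$ this is because $\mathcal{C}'$, being an optimum distance flag code, is cardinality-consistent by Lemma~\ref{lemma 2.1}; for $m\in M$ it is exactly the cardinality clause of Theorem~\ref{theorem 4.1}, which is where the hypotheses $s\ge 3$, $s\ne 4$ are used.

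For the flag distance, fix distinct flags $\mathcal{F}=\mathcal{F}_{\mathbf{W}}$, $\mathcal{F}'=\mathcal{F}_{\mathbf{W}'}$ in $\mathcal{C}_{\mathbf{t}}$ with $\mathrm{W}\neq\mathrm{W}'$ in $\mathscr{C}$. Since $D^{(\mathbf{t}_0,n)}$ is the \emph{maximum} possible flag distance on $\mathcal{F}_q(\mathbf{t}_0,n)$ and $\mathcal{C}'$ attains it, every pair of distinct flags of $\mathcal{C}'$ realizes this value and hence the coordinatewise maxima, so $d_S(\mathrm{W}^{(m)},\mathrm{W}'^{(m)})=2m$ for $1\le m\le k$ and $d_S(\mathrm{W}^{(m)},\mathrm{W}'^{(m)})=2(n-m)$ for $n-k\le m\le n-1$, contributing $2\sum_{m=1}^{k}m+2\sum_{j=1}^{k}j=2k(k+1)$. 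For $m\in M$, Steps~1--3 of the proof of Theorem~\ref{theorem 4.1} give $d_S(\mathrm{W}^{(m)},\mathrm{W}'^{(m)})\ge 2k$. Summing,
$$d_f(\mathcal{F},\mathcal{F}')\ \ge\ 2k(k+1)+2k\,|M|\ =\ 2k(k+1)+2k(h+s-3)\ =\ 2k(s+h+k-2),$$
so $d_f(\mathcal{C}_{\mathbf{t}})\ge 2k(s+h+k-2)$. For the reverse inequality I would exhibit an equalizing pair: by Step~4 of the proof of Theorem~\ref{theorem 4.1}, choosing $i=1$ when $h\ge 1$ and $i=2$ when $h=0$ (valid since $s\ge 3$) and letting $g\in G_i$ be the element with lower-right block $\mathbf{P}_i^{\,k}$, the single pair $\mathrm{W}=\mathrm{rs}(\mathbf{A}_ig)$, $\mathrm{W}'=\mathrm{rs}(\mathbf{B}_i)$ of $\mathscr{C}$ (distinct already in the $k$-th coordinate, since $\mathcal{C}_k$ is a partial spread of full size) satisfies $d_S(\mathrm{rs}((\mathbf{A}_ig)^{(m)}),\mathrm{rs}(\mathbf{B}_i^{(m)}))=2k$ simultaneously for all $m\in M$; combined with the coordinatewise maxima on $\mathbf{t}_0$ this gives $d_f(\mathcal{F}_{\mathbf{W}},\mathcal{F}_{\mathbf{W}'})=2k(k+1)+2k\,|M|=2k(s+h+k-2)$, so equality holds.

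The substantive input comes entirely from Theorems~\ref{theorem 3.3} and~\ref{theorem 4.1}; the point requiring care is the logical packaging of the distance — using that an optimum distance flag code forces \emph{every} pair (not merely a minimizing one) to attain its coordinatewise maxima on $\mathbf{t}_0$, and that Theorem~\ref{theorem 4.1} provides a \emph{single} pair attaining $2k$ on \emph{all} of $M$ at once, so that the two bounds combine to an exact value instead of only bounding one another. The leftover arithmetic ($|M|=h+s-3$ and $2\sum_{m=1}^k m+2\sum_{j=1}^k j=2k(k+1)$) is routine.
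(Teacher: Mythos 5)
Your proposal is correct and follows essentially the same route as the paper: both deduce cardinality-consistency from Theorems~\ref{theorem 3.3} and~\ref{theorem 4.1}, observe that the $\mathbf{t}_0$-coordinates contribute the fixed optimum $2k(k+1)$ to every pair, bound the contribution of the extra coordinates $m$ below by $2k$ via Theorem~\ref{theorem 4.1}, and then exhibit (via Step~4 of the proof of Theorem~\ref{theorem 4.1}, through Theorem~\ref{theorem 4.2}) a single pair $(\mathbf{A}_ig,\mathbf{B}_i)$ attaining $2k$ simultaneously on all those $m$, which forces equality. Your version merely spells out more explicitly the logical scaffolding (the degenerate case $(s,h)=(3,0)$, the injectivity argument for $|\mathcal{C}_{\mathbf{t}}|=|\mathscr{C}|$, and why an optimum distance flag code forces every pair to attain the coordinatewise maxima) that the paper states tersely.
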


\begin{proof}
From Theorems \ref{theorem 3.3} and \ref{theorem 4.1}, $\mathcal{C}_{\mathbf{t}}$ is a cardinality-consistent flag code with cardinality $\sum^{s-1}_{i=1}q^{ik+h}+1$. According to Theorem \ref{theorem 3.3}, the $m$-projected code $\mathcal{C}_m$ attains the maximum possible distance for every $m\in \{1, \ldots, k,$ $n-k, \ldots, n-1\}$.

For $2\leq i\leq s-1$ or $h\geq 1$, take
$g=\footnotesize{\left[
\begin{array}{cccc}
\mathbf{I}_{(s-i-1)k} & \mathbf{0} & \mathbf{0}\\
\mathbf{0} & \mathbf{I}_k & \mathbf{0}\\
\mathbf{0} & \mathbf{0} & \mathbf{P}_i^k
\end{array}
\right]}\in G_i
$.
Following step 4 in the proof of Theorem \ref{theorem 4.1}, for each $m\in \{k+1, k+2, \ldots, k+h, 2k+h, \ldots, (s-2)k+h\}$, we have
$$d_S(\mathrm{rs}((\mathbf{A}_ig)^{(m)}), \mathrm{rs}(\mathbf{B}^{(m)}_i))=d_S(\mathcal{C}_m)=2k.$$\vspace{-10 pt}
Consequently,
$$d_f(\mathcal{C}_{\mathbf{t}})=\sum^k_{m=1}2m+\sum^{n-1}_{m=n-k}2(n-m)+2k(s-3+h)=2k(s+h+k-2).$$\vspace{-10 pt}
\end{proof}

\begin{theorem}\label{theorem 4.5}
Let $n=sk+h$ such that $s\geq 2$, $s\neq 4$ and $0\leq h< k$. Let
$\mathbf{t}=(t_1, t_2, \ldots, t_r)$ be a subsequence of $(1, 2, \ldots, k+h, 2k+h, \ldots, (s-2)k+h, n-k, \ldots, n-1)$.
Let $\mathscr{C}$ and $\mathcal{C}$ be defined as in \eqref{equation 3.1} and \eqref{equation 3.2}, respectively.
Let
\begin{equation}\label{equation 4.22}
\mathcal{C}_{\mathbf{t}}=\{\mathrm{W}^{(\mathbf{t})}: \mathrm{W}=\mathrm{rs}(\mathbf{W})\in \mathscr{C}\}
\end{equation}
with $$\mathrm{W}^{(\mathbf{t})}=(\mathrm{W}^{(t_1)}, \mathrm{W}^{(t_2)}, \ldots, \mathrm{W}^{(t_r)}),$$
where $\mathrm{W}^{(j)}=\mathrm{rs}(\mathbf{W}^{(j)})$ with $\mathbf{W}^{(j)}$ being the submatrix of $\mathbf{W}$ formed by its first $j$ rows for $j\in \{t_1, t_2, \ldots, t_r\}$. Then $\mathcal{C}_{\mathbf{t}}\subseteq \mathcal{F}_q(\mathbf{t}, n)$ is a cardinality-consistent flag code on $\mathbb{F}^n_q$ with $|\mathcal{C}_{\mathbf{t}}|=|\mathcal{C}|=|\mathscr{C}|=\sum^{s-1}_{i=1}q^{ik+h}+1$ and
$$d_f(\mathcal{C}_{\mathbf{t}})=\sum_{t_i\leq k}2t_i+\sum_{t_i\geq n-k}2(n-t_i)+\sum_{k+1\leq t_i\leq n-2k}2k.$$
\end{theorem}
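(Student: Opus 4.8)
The plan is to realise $\mathcal{C}_{\mathbf{t}}$ as a sub-projection of the flag code studied in Theorem~\ref{theorem 4.4} (when $s\geq 3$) or of the full flag code of Theorems~\ref{theorem 3.6}--\ref{theorem 3.8} (when $s=2$), and to read off its parameters from the projected codes $\mathcal{C}_{t_1},\dots,\mathcal{C}_{t_r}$ that were already analysed earlier. First I would note that $\mathcal{C}_{\mathbf{t}}$ is the image of $\mathscr{C}$ under $\mathrm{W}\mapsto\mathrm{W}^{(\mathbf{t})}$, hence $|\mathcal{C}_{\mathbf{t}}|\leq|\mathscr{C}|=\sum_{i=1}^{s-1}q^{ik+h}+1$, and that its $j$-th projected code is exactly $(\mathcal{C}_{\mathbf{t}})_j=\mathcal{C}_{t_j}=\{\mathrm{W}^{(t_j)}:\mathrm{W}\in\mathscr{C}\}$. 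For $t_j\in\{1,\dots,k,n-k,\dots,n-1\}$, Theorem~\ref{theorem 3.3} together with Lemma~\ref{lemma 2.1} shows that $\mathcal{C}_{t_j}$ attains the maximal subspace distance $2\min\{t_j,n-t_j\}$ and has cardinality $\sum_{i=1}^{s-1}q^{ik+h}+1$; for $t_j$ with $k<t_j<n-k$ (precisely the values $k+1\leq t_j\leq n-2k$ that can occur in $\mathbf{t}$ when $s\geq 3$, the exclusion $s\neq 4$ being what makes $d_S(\mathcal{C}_{t_j})=2k$ hold even at the top middle level, cf.~\eqref{equation 4.2}), Theorem~\ref{theorem 4.1} (or Theorem~\ref{theorem 3.7} when $s=2$) gives the same cardinality and $d_S(\mathcal{C}_{t_j})=2k$. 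Then $|(\mathcal{C}_{\mathbf{t}})_j|=\sum_{i=1}^{s-1}q^{ik+h}+1=|\mathscr{C}|\geq|\mathcal{C}_{\mathbf{t}}|\geq|(\mathcal{C}_{\mathbf{t}})_j|$ forces $|\mathcal{C}_{\mathbf{t}}|=|\mathscr{C}|$ and cardinality-consistency.

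Next I would derive the distance lower bound. Writing $\delta$ for the claimed right-hand side, cardinality-consistency makes every projection $\mathcal{C}_{\mathbf{t}}\to(\mathcal{C}_{\mathbf{t}})_j$ a bijection, so two distinct flags of $\mathcal{C}_{\mathbf{t}}$ differ at every level; hence for $\mathcal{F}\neq\mathcal{F}'$ we have $d_f(\mathcal{F},\mathcal{F}')=\sum_{j=1}^{r}d_S(\mathrm{F}_{t_j},\mathrm{F}'_{t_j})\geq\sum_{j=1}^{r}d_S(\mathcal{C}_{t_j})$. Substituting $d_S(\mathcal{C}_{t_j})=2t_j$, $2k$, $2(n-t_j)$ according to whether $t_j\leq k$, $k<t_j<n-k$, or $t_j\geq n-k$, this lower sum equals $\delta$.

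For tightness I would exhibit a single pair of codewords of $\mathscr{C}$ that meets $d_S(\mathcal{C}_{t_j})$ at every level at once. Pick an index $i$ with $2\leq i\leq s-1$ if $s\geq 3$, or $i=1$ with $h\geq 1$, and take $g\in G_i$ with bottom-right block $\mathbf{P}_i^k$, as in Step~4 of the proof of Theorem~\ref{theorem 4.1}; put $\mathrm{W}=\mathrm{rs}(\mathbf{A}_ig)$ and $\mathrm{W}'=\mathrm{rs}(\mathbf{B}_i)$. At each level $t_j\leq k$ or $t_j\geq n-k$ the code $\mathcal{C}_{t_j}$ attains the maximal Grassmann distance and is therefore equidistant, so automatically $d_S(\mathrm{W}^{(t_j)},\mathrm{W}'^{(t_j)})=2\min\{t_j,n-t_j\}$; at the middle levels $k<t_j<n-k$, Step~4 of the proof of Theorem~\ref{theorem 4.1} (which invokes Theorem~\ref{theorem 4.2}) gives $d_S(\mathrm{W}^{(t_j)},\mathrm{W}'^{(t_j)})=2k$. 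Summing over $j$, this pair has flag distance exactly $\delta$, so $d_f(\mathcal{C}_{\mathbf{t}})\leq\delta$, and with the lower bound $d_f(\mathcal{C}_{\mathbf{t}})=\delta$. In the remaining case $s=2$, $h=0$ there are no middle levels, $\mathcal{C}_{\mathbf{t}}$ sits inside the optimum distance full flag code of Theorem~\ref{theorem 3.6}, and every distinct pair already realises $\delta$.

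The step I expect to be the real work is the tightness part: the lower bound is bookkeeping once cardinality-consistency is in hand, but the upper bound forces the \emph{same} pair $(\mathrm{rs}(\mathbf{A}_ig),\mathrm{rs}(\mathbf{B}_i))$ to witness the minimum at all $r$ levels simultaneously, so one must combine the equidistant behaviour at the extreme levels (partial spreads for small $t_i$, and codes with full pairwise sum for large $t_i$) with the explicit rank identities of Step~4 of Theorem~\ref{theorem 4.1}. A minor additional point is the combinatorial check that, for the values of $t_i$ actually appearing in $\mathbf{t}$, the band $k<t_i<n-k$ coincides with $k+1\leq t_i\leq n-2k$, together with the handling of the small-$s$ degeneracies.
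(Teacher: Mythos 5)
Your argument is correct and follows essentially the same route as the paper: the paper proves Theorem~\ref{theorem 4.5} by referring directly back to the proofs of Theorems~\ref{theorem 3.6}, \ref{theorem 3.8} and \ref{theorem 4.4}, which already establish that every projected code $\mathcal{C}_{t_j}$ has cardinality $\sum_{i=1}^{s-1}q^{ik+h}+1$ (giving cardinality-consistency), achieves the stated subspace distance $2\min\{t_j,n-t_j\}$ or $2k$, and that the single pair $(\mathrm{rs}(\mathbf{A}_ig),\mathrm{rs}(\mathbf{B}_i))$ with bottom-right block $\mathbf{P}_i^k$ meets those distances simultaneously at all levels. You fill in the lower/upper-bound bookkeeping and the equidistance observation at the extreme levels explicitly, but the decomposition, the key lemmas, and the witness pair are the same as the paper's.
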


\begin{proof}
This result follows directly from the proof of Theorems \ref{theorem 3.6}, \ref{theorem 3.8} and \ref{theorem 4.4}.
\end{proof}

\begin{corollary}\label{corollary 3.5}
Let $n=sk+h$ such that $s\geq 2$, $s\neq 4$ and $0\leq h< k$, and let $\mathcal{C}_{\mathbf{t}}$ be defined as in \eqref{equation 4.22}. Let
$$k\in \{t_1, t_2, \ldots, t_r\}\subseteq \{1, 2, \ldots, k+h, 2k+h, \ldots, (s-2)k+h, n-k, \ldots, n-1\}.$$
If $k> \frac{q^h-1}{q-1}$, then $\mathcal{C}_{\mathbf{t}}$ is a cardinality-consistent flag code attaining the largest possible cardinality.
\end{corollary}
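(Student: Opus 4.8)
The plan is to deduce the statement from Theorem~\ref{theorem 4.5}, Theorem~\ref{theorem 3.1}(i) and Lemma~\ref{lemma 3.3}: Theorem~\ref{theorem 4.5} already supplies cardinality-consistency and the exact cardinality of $\mathcal{C}_{\mathbf{t}}$, so the task reduces to recognising that this cardinality equals the largest possible size of a partial $k$-spread of $\mathbb{F}_q^{n}$. To this end I would first record the elementary identity, valid for $n=sk+h$ with $0\le h<k$,
$$\frac{q^{n}-q^{k+h}}{q^{k}-1}+1=\frac{q^{k+h}\bigl(q^{(s-1)k}-1\bigr)}{q^{k}-1}+1=q^{k+h}\sum_{j=0}^{s-2}q^{jk}+1=\sum_{i=1}^{s-1}q^{ik+h}+1 .$$
Since $k>\frac{q^{h}-1}{q-1}$, Lemma~\ref{lemma 3.3} then gives $e_q(k,n,0)=\sum_{i=1}^{s-1}q^{ik+h}+1$; that is, $\sum_{i=1}^{s-1}q^{ik+h}+1$ is precisely the maximum possible cardinality of a partial $k$-spread of $\mathbb{F}_q^{n}$.

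Next I would invoke Theorem~\ref{theorem 4.5}, which shows that $\mathcal{C}_{\mathbf{t}}\subseteq\mathcal{F}_q(\mathbf{t},n)$ is a cardinality-consistent flag code with $|\mathcal{C}_{\mathbf{t}}|=|\mathscr{C}|=\sum_{i=1}^{s-1}q^{ik+h}+1$. Since $k\in\{t_1,\dots,t_r\}$, the $j$-th projected code of $\mathcal{C}_{\mathbf{t}}$ with $t_j=k$ is well defined, and by the definition in \eqref{equation 4.22} it coincides with $\mathcal{C}_k=\{\mathrm{W}^{(k)}:\mathrm{W}\in\mathscr{C}\}$, which is a partial $k$-spread of $\mathbb{F}_q^{n}$ by Theorem~\ref{theorem 3.1}(i). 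Cardinality-consistency forces $|\mathcal{C}_{\mathbf{t}}|$ to equal the size of each of its projected codes, in particular of $\mathcal{C}_k$, so $|\mathcal{C}_{\mathbf{t}}|=e_q(k,n,0)$ by the previous step.

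It then remains to phrase the optimality: for any cardinality-consistent flag code $\mathcal{C}'\subseteq\mathcal{F}_q(\mathbf{t},n)$ (with $k\in\{t_1,\dots,t_r\}$) whose projected code in dimension $k$ is a partial $k$-spread of $\mathbb{F}_q^{n}$, cardinality-consistency gives $|\mathcal{C}'|$ equal to the size of that partial spread, hence $|\mathcal{C}'|\le e_q(k,n,0)=|\mathcal{C}_{\mathbf{t}}|$; thus $\mathcal{C}_{\mathbf{t}}$ attains the largest possible cardinality in this class. I expect no genuine obstacle in carrying this out: the only points requiring care are the geometric-series manipulation above and the identification of the appropriate projected code of $\mathcal{C}_{\mathbf{t}}$ with $\mathcal{C}_k$, which is what permits Theorem~\ref{theorem 3.1}(i) and Lemma~\ref{lemma 3.3} to be applied.
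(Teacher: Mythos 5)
Your proposal is correct and follows essentially the same route as the paper: both hinge on Lemma~\ref{lemma 3.3} together with the identification $\sum_{i=1}^{s-1}q^{ik+h}+1=\frac{q^{n}-q^{k+h}}{q^{k}-1}+1$, combined with the fact (Theorems~\ref{theorem 4.5} and~\ref{theorem 3.1}(i)) that $\mathcal{C}_{\mathbf t}$ is cardinality-consistent with $k$-projected code a partial $k$-spread of this size. The paper's proof is a one-line appeal to Lemma~\ref{lemma 3.3}; you have simply spelled out the geometric-series calculation and the reduction to $e_q(k,n,0)$ that the paper leaves implicit.
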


\begin{proof}
If $k> \frac{q^h-1}{q-1}$, then by Lemma \ref{lemma 3.3}, the flag code $\mathcal{C}_{\mathbf{t}}$ constructed in Theorem \ref{theorem 4.5} attains the largest possible size $\sum^{s-1}_{i=1}q^{ik+h}+1=\frac{q^n-q^{k+h}}{q^k-1}+1$.
\end{proof}

Finally, for the case $s=4$, we consider cardinality-consistent flag codes of type $(1, 2, \ldots, k+h, 2k+h, 3k+h, \ldots, n-1)$ on $\mathbb{F}_q^n$.

\begin{theorem}\label{theorem 4.6}
Let $n=4k+h$ with $0\leq h< k$. Let
$$\mathbf{t}=(1, 2, \ldots, k+h, 2k+h, 3k+h, \ldots, n-1).$$
Let $\mathscr{C}$, $\mathcal{C}$ and $
\mathcal{C}_{\mathbf{t}}$ be defined as in (\ref{equation 3.1}), (\ref{equation 3.2}) and \eqref{equation 4.3}, respectively.
Then $\mathcal{C}_{\mathbf{t}}\subseteq \mathcal{F}_q(\mathbf{t}, n)$ is a cardinality-consistent flag code with
$$d_f(\mathcal{C}_{\mathbf{t}})=2k(k+h+1)+2h\ \ \mbox{and}\ \ |\mathcal{C}_{\mathbf{t}}|=|\mathcal{C}|=|\mathscr{C}|=\sum^{3}_{i=1}q^{ik+h}+1.$$
\end{theorem}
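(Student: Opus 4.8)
The plan is to run the argument of Theorem~\ref{theorem 4.4} essentially unchanged; the only new feature for $s=4$ is that the isolated middle dimension $m=2k+h=(s-2)k+h$ carries subspace distance $2h$ rather than $2k$, and this is exactly the $s=4$ branch of \eqref{equation 4.2}. So the work is to (a) re-derive cardinality-consistency in the presence of this smaller distance, (b) redo the distance sum, and (c) check that a single pair of flags realises the new value.

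For cardinality-consistency, split $\mathbf{t}$ into the low part $\{1,\dots,k\}$, the middle-low part $\{k+1,\dots,k+h\}$, the isolated dimension $2k+h$, and the high part $\{n-k,\dots,n-1\}$. On the low and high parts, maximality of $d_S(\mathcal{C}_m)$ together with $|\mathcal{C}_m|=|\mathscr{C}|$ is already supplied by Theorem~\ref{theorem 3.3} (applied to the full admissible type, of which $\mathbf{t}$ is an enlargement). On the middle-low part the rank estimates of Steps~1--3 in the proof of Theorem~\ref{theorem 4.1} go through verbatim -- they only break down, and only for $s=4$, at $m=2k+h$ -- so every pair of generators stacked at height $m$ has rank at least $m+k$, giving $d_S(\mathcal{C}_m)=2k$ and $|\mathcal{C}_m|=|\mathscr{C}|$. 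At $m=2k+h$ those same estimates still yield rank $\ge m+k$ for every pair \emph{except} $(\mathbf{B}_1^{(m)},\mathbf{B}_{s-1}^{(m)})=(\mathbf{B}_1^{(m)},\mathbf{B}_3^{(m)})$, for which \eqref{equation 4.2} with $s=4$ gives rank exactly $m+h$; hence $d_S(\mathcal{C}_{2k+h})=2h>0$, so $|\mathcal{C}_{2k+h}|=|\mathscr{C}|$ as well, and $\mathcal{C}_{\mathbf{t}}$ is cardinality-consistent with $|\mathcal{C}_{\mathbf{t}}|=|\mathscr{C}|=\sum_{i=1}^{3}q^{ik+h}+1$.

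For the distance, cardinality-consistency forces any two distinct flags to differ in every component, so $d_f(\mathcal{C}_{\mathbf{t}})\ge\sum_{t_i}d_S(\mathcal{C}_{t_i})=k(k+1)+2kh+2h+k(k+1)=2k(k+h+1)+2h$, using the projected distances just listed. To see equality I would exhibit the single pair $\mathrm{rs}(\mathbf{B}_1),\mathrm{rs}(\mathbf{B}_3)\in\mathscr{C}$ and check that it realises the minimum of \emph{every} projected code simultaneously: for $m\le k$ the truncations $\mathrm{rs}(\mathbf{B}_1^{(m)})$ and $\mathrm{rs}(\mathbf{B}_3^{(m)})$ are spanned by disjoint sets of standard basis vectors (so $d_S=2m$); for $k+1\le m\le k+h$ a direct count of the standard basis vectors appearing in the two truncations gives stacked rank $m+k$ (so $d_S=2k$); at $m=2k+h$ equation \eqref{equation 4.2} gives $d_S=2h$; and for $n-k\le m\le n-1$ one has $d_S\ge d_S(\mathcal{C}_m)=2(n-m)$ by Theorem~\ref{theorem 3.3} while trivially $d_S\le 2(n-m)$, forcing $d_S=2(n-m)$. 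Summing, the pair of flags built from $\mathbf{B}_1$ and $\mathbf{B}_3$ has flag distance exactly $2k(k+h+1)+2h$, matching the lower bound, whence $d_f(\mathcal{C}_{\mathbf{t}})=2k(k+h+1)+2h$.

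The step I expect to be the real work is the claim in the previous paragraph that the cross-orbit pair $(\mathbf{B}_1,\mathbf{B}_3)$ -- rather than the pair $(\mathbf{A}_ig,\mathbf{B}_i)$ that plays this role for $s\ne 4$ -- still attains the minimum subspace distance $2k$ at each dimension $k+1,\dots,k+h$; this pair lies outside the reach of the propagation lemma Theorem~\ref{theorem 4.2}, so one must write out those truncations of $\mathbf{B}_1$ and $\mathbf{B}_3$ and count basis vectors by hand. Everything else is bookkeeping against rank bounds already proved in Theorems~\ref{theorem 3.3}, \ref{theorem 4.1} and \ref{theorem 4.2}. (In the degenerate case $h=0$ one has $2k+h=2k$ and $d_S(\mathcal{C}_{2k})=0$, so that case must be excluded or handled on its own; the substantive range is $1\le h<k$, which is what I would assume throughout.)
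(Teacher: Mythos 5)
Your proof follows essentially the same route as the paper: same decomposition into low/middle/isolated/high dimensions, same reuse of Theorems~\ref{theorem 3.3} and~\ref{theorem 4.1}, same isolation of equation~\eqref{equation 4.2} for the $s=4$ anomaly at $m=2k+h$, and the same witnessing pair $(\mathbf{B}_1,\mathbf{B}_3)$ achieving the minimum. The ``real work'' you flag --- verifying that $(\mathbf{B}_1,\mathbf{B}_3)$ stacks to rank exactly $m+k$ for $k+1\le m\le k+h$ --- is precisely the rank computation the paper carries out (it shows $\mathrm{rk}[\mathbf{B}_1^{(k+x)};\mathbf{B}_3^{(k+x)}]=2k+x$ by counting which standard basis vectors appear), so your instinct about where the content lies is right.

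Your parenthetical remark about $h=0$ is also correct and, in fact, exposes a genuine imprecision in the theorem as stated. With $h=0$ and $s=4$ one can check directly from the block structure of $\mathbf{B}_1$ and $\mathbf{B}_3$ (the middle block $\mathbf{I}_{k+h}^{[k]}$ degenerates to $0$ rows) that $\mathrm{rs}(\mathbf{B}_1^{(2k)})=\mathrm{rs}(\mathbf{B}_3^{(2k)})$, both equal to $\langle e_1,\dots,e_k,e_{2k+1},\dots,e_{3k}\rangle$; equivalently, equation~\eqref{equation 4.2} gives rank $2k$ for a $4k\times n$ stack, so the two $2k$-dimensional truncations coincide. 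Hence $|\mathcal{C}_{2k}|<|\mathscr{C}|$ and cardinality-consistency fails, contrary to the theorem's claim, even though the distance formula $2k(k+1)$ does happen to evaluate correctly for that one pair. The hypothesis should read $1\le h<k$, as you guessed; with that restriction your proof and the paper's agree completely.
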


\begin{proof}
By Theorem \ref{theorem 3.3} and Lemma \ref{lemma 2.1}, the $m$-projected code $\mathcal{C}_m$ attains the maximum possible distance and $|\mathcal{C}_m|=|\mathcal{C}|=|\mathscr{C}|=\sum^{3}_{i=1}q^{ik+h}+1$ for each $m\in\{1, 2, \ldots, k, 3k+h, \ldots, n-1\}$.

From the proof of Theorem \ref{theorem 4.1}, for any two distinct matrices $\mathbf{U}$ and $\mathbf{V}$ in $\{\mathbf{A}_ig, \mathbf{B}_i, \mathbf{M}: g\in G_i, 1\leq i\leq 3\}$, the following statements hold:
\begin{itemize}
    \item $d_S\bigl(\mathrm{rs}(\mathbf{U}^{(m)}), \mathrm{rs}(\mathbf{V}^{(m)})\bigr) \geq 2k$ for each $m\in\{k+1,\ldots, k+h\}$, hence $d_S(\mathcal{C}_m) \geq 2k$.
    \item $d_S\bigl(\mathrm{rs}(\mathbf{U}^{(2k+h)}), \mathrm{rs}(\mathbf{V}^{(2k+h)})\bigr) \geq 2k$, if $\{\mathbf{U},\mathbf{V}\}\neq\{\mathbf{B}_1, \mathbf{B}_3\}$.
    \item By \eqref{equation 4.2}, we have $$\mathrm{rk}\footnotesize{\left[
\begin{array}{cccc}
\mathbf{B}^{(2k+h)}_1\\
\mathbf{B}^{(2k+h)}_3
\end{array}
\right]}=2k+2h.$$
Hence, $d_S(\mathrm{rs}(\mathbf{B}_1^{(2k+h)}), \mathrm{rs}(\mathbf{B}_3^{(2k+h)})\bigr)=d_S(\mathcal{C}_{2k+h})=2h$.
\end{itemize}

Let $m=k+x$, where $1\leq x\leq h$. Consider
$$
\begin{aligned}
\mathrm{rk}\footnotesize{\left[
\begin{array}{cccc}
\mathbf{B}^{(k+x)}_1\\
\mathbf{B}^{(k+x)}_3
\end{array}
\right]}&=\mathrm{rk}\footnotesize{\left[
 \begin{array}{ccccccc}
\mathbf{0}_k & \mathbf{0}_k & \mathbf{I}_k & \mathbf{0}_{k+h}\\
\mathbf{0}_k &
\mathbf{0}_k & \mathbf{0}_k & \mathbf{I}^{[k+1, k+x]}_{k+h}\\
\hdashline
\mathbf{I}_k & \mathbf{0}_k & \mathbf{0}_k & \mathbf{0}_{k+h}\\
\mathbf{0}_k & \mathbf{0}_k & \mathbf{I}^{(x)}_{k} & \mathbf{0}_{k+h}
\end{array}
\right]}=k+
\mathrm{rk}\footnotesize{\left[
 \begin{array}{ccccccc}
\mathbf{I}_k & \mathbf{0}_{k+h}\\
\mathbf{0}_k & \mathbf{I}^{[k+1,k+x]}_{k+h}\\
\hdashline
\mathbf{I}^{(x)}_k & \mathbf{0}_{k+h}
\end{array}
\right]}=2k+x.
\end{aligned}$$
Then for $k+1\leq m\leq k+h$, we have $|\mathcal{C}_m|=|\mathscr{C}|=|\mathcal{C}|$ and $d_S(\mathrm{rs}(\mathbf{B}_1^{(m)}), \mathrm{rs}(\mathbf{B}_3^{(m)}))=d_S(\mathcal{C}_m)=2k$. Therefore, $\mathcal{C}_{\mathbf{t}}$ is  cardinality-consistent, and
$$
d_f(\mathrm{rs}(\mathbf{B}_1^{(\mathbf{t})}), \mathrm{rs}(\mathbf{B}_3^{(\mathbf{t})}))=\sum^k_{m=1}2m+\sum^{h}_{x=1}2k+2h+\sum^{n-1}_{m=n-k}2(n-m)=2k(k+h+1)+2h,
$$
which is the minimum distance of $\mathcal{C}_{\mathbf{t}}$.
\end{proof}

\begin{corollary}\label{corollary 4.2}
Let $n=4k+h$ such that $0\leq h< k$, and let $\mathbf{t}=(t_1, t_2, \ldots, t_r)$.
Let $\mathscr{C}$ and $\mathcal{C}$ be defined as in \eqref{equation 3.1} and \eqref{equation 3.2}, respectively.
Let
$$\mathcal{C}_{\mathbf{t}}=\{\mathrm{W}^{(\mathbf{t})}=(\mathrm{W}^{(t_1)}, \mathrm{W}^{(t_2)}, \ldots, \mathrm{W}^{(t_r)}): \mathrm{W}=\mathrm{rs}(\mathbf{W})\in \mathscr{C}\},$$
where $\mathrm{W}^{(j)}=\mathrm{rs}(\mathbf{W}^{(j)})$ with $\mathbf{W}^{(j)}$ being the submatrix of $\mathbf{W}$ formed by its first $j$ rows for $j\in \{t_1, t_2, \ldots, t_r\}$. Then the following results hold:
\begin{itemize}
		\item[\rm(i)] If $\{t_1, t_2, \ldots, t_r\}\subseteq \{1, 2, \ldots, k+h, 3k+h, \ldots, n-1\}$, then $\mathcal{C}_{\mathbf{t}}\subseteq \mathcal{F}_q(\mathbf{t}, n)$ is a cardinality-consistent flag code on $\mathbb{F}^n_q$ with $|\mathcal{C}_{\mathbf{t}}|=|\mathcal{C}|=|\mathscr{C}|=\sum^3_{i=1}q^{ik+h}+1$ and
$$d_f(\mathcal{C}_{\mathbf{t}})=\sum_{t_i\leq k}2t_i+\sum_{t_i\geq n-k}2(n-t_i)+\sum_{k+1\leq t_i\leq k+h}2k.$$

		\item[\rm(ii)] If $2k+h\in \{t_1, t_2, \ldots, t_r\}\subseteq \{1, 2, \ldots, k+h, 2k+h, 3k+h, \ldots, n-1\}$, then $\mathcal{C}_{\mathbf{t}}\subseteq \mathcal{F}_q(\mathbf{t}, n)$ is a cardinality-consistent flag code on $\mathbb{F}^n_q$ with $|\mathcal{C}_{\mathbf{t}}|=|\mathcal{C}|=|\mathscr{C}|=\sum^3_{i=1}q^{ik+h}+1$ and
$$d_f(\mathcal{C}_{\mathbf{t}})=\sum_{t_i\leq k}2t_i+\sum_{t_i\geq n-k}2(n-t_i)+\sum_{k+1\leq t_i\leq k+h}2k+2h.$$
\end{itemize}
\end{corollary}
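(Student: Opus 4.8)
The plan is to reduce both parts of Corollary~\ref{corollary 4.2} to the facts already assembled in Theorems~\ref{theorem 3.3} and~\ref{theorem 4.6}, using the elementary principle that the minimum flag distance of a cardinality-consistent flag code equals the sum of the minimum subspace distances of its projected codes. For $n=4k+h$ I would first catalogue the projected codes $\mathcal{C}_m$ of the auxiliary code $\mathcal{C}$ of \eqref{equation 3.2}: by Theorem~\ref{theorem 3.3}, for $m\in\{1,\dots,k\}\cup\{n-k,\dots,n-1\}$ we have $|\mathcal{C}_m|=|\mathscr{C}|$ and $d_S(\mathcal{C}_m)=\min\{2m,2(n-m)\}$; the rank estimates in the proof of Theorem~\ref{theorem 4.1} (as invoked in the proof of Theorem~\ref{theorem 4.6}) give, for $m\in\{k+1,\dots,k+h\}$, that $|\mathcal{C}_m|=|\mathscr{C}|$ and $d_S(\mathcal{C}_m)=2k$, realized by the pair $(\mathrm{rs}(\mathbf{B}_1^{(m)}),\mathrm{rs}(\mathbf{B}_3^{(m)}))$; and Theorem~\ref{theorem 4.6} gives $|\mathcal{C}_{2k+h}|=|\mathscr{C}|$ together with $d_S(\mathcal{C}_{2k+h})=2h$, again realized by $(\mathrm{rs}(\mathbf{B}_1^{(2k+h)}),\mathrm{rs}(\mathbf{B}_3^{(2k+h)}))$.

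With this catalogue in hand, cardinality-consistency of $\mathcal{C}_{\mathbf{t}}$ is immediate in both cases: every coordinate $t_i$ admitted in (i) or in (ii) lies in one of the three ranges above, so $|\mathcal{C}_{t_i}|=|\mathscr{C}|=\sum_{i=1}^{3}q^{ik+h}+1$; since $|\mathcal{C}_{\mathbf{t}}|\le|\mathscr{C}|$ and the $k$-projection of $\mathcal{C}_{\mathbf{t}}$ already has that size, all projected codes and $\mathcal{C}_{\mathbf{t}}$ itself share this cardinality. Cardinality-consistency then makes each projection $\mathcal{F}\mapsto\mathrm{F}_{t_i}$ injective, so distinct flags $\mathcal{F},\mathcal{F}'\in\mathcal{C}_{\mathbf{t}}$ satisfy $\mathrm{F}_{t_i}\ne\mathrm{F}'_{t_i}$ and hence $d_S(\mathrm{F}_{t_i},\mathrm{F}'_{t_i})\ge d_S(\mathcal{C}_{t_i})$ for every $i$; summing over $i$ yields $d_f(\mathcal{C}_{\mathbf{t}})\ge\sum_{i=1}^{r}d_S(\mathcal{C}_{t_i})$, and substituting the values from the catalogue reproduces exactly the formula asserted in (i) and (ii) as a lower bound (the trailing $2h$ term appears precisely when $2k+h\in\{t_1,\dots,t_r\}$, i.e.\ in case (ii)).

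For the matching upper bound I would evaluate the flag distance of the single pair $\mathcal{F}=\mathrm{rs}(\mathbf{B}_1^{(\mathbf{t})})$, $\mathcal{F}'=\mathrm{rs}(\mathbf{B}_3^{(\mathbf{t})})$. On every coordinate with $t_i\le k$ or $t_i\ge n-k$ the projected code attains its maximum possible subspace distance and is therefore equidistant, so this pair realizes $d_S(\mathcal{C}_{t_i})$ automatically; on each coordinate $k+1\le t_i\le k+h$ and on the coordinate $t_i=2k+h$ the explicit rank identities collected above show that this very pair attains $2k$, respectively $2h$. Hence $d_f(\mathcal{F},\mathcal{F}')=\sum_{i=1}^{r}d_S(\mathcal{C}_{t_i})$, which together with the lower bound proves both distance formulas. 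The only genuine obstacle here is the bookkeeping that a single pair --- namely $(\mathbf{B}_1,\mathbf{B}_3)$ --- simultaneously witnesses the minimum subspace distance in every coordinate of the chosen subsequence; this relies on the equidistant/partial-spread structure furnished by Theorem~\ref{theorem 3.3}, which upgrades ``some pair meets the bound'' to ``every pair meets the bound'' on the top and bottom blocks and thereby forces the flag distance to be additive over the type vector rather than merely bounded below by the sum.
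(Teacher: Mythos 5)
Your proposal is correct and reconstructs exactly what the paper's one-line appeal to the proof of Theorem~\ref{theorem 4.6} means: assemble the projected minimum distances (Theorem~\ref{theorem 3.3} for $t_i\le k$ and $t_i\ge n-k$, the rank computations for $\mathbf{B}_1,\mathbf{B}_3$ for $k+1\le t_i\le k+h$ and $t_i=2k+h$), note cardinality-consistency makes the minimum flag distance at least the sum of projected minima, and exhibit the pair $(\mathbf{B}_1,\mathbf{B}_3)$ attaining that sum; the equidistance observation on the top/bottom blocks is the clean way to see that this single pair simultaneously witnesses the minimum in every coordinate. One small slip worth flagging: in the cardinality-consistency step you argue from ``the $k$-projection of $\mathcal{C}_{\mathbf{t}}$,'' but $k$ need not belong to $\{t_1,\dots,t_r\}$ under either hypothesis~(i) or~(ii); the fix is immediate, since by the catalogue every $\mathcal{C}_{t_i}$ already has cardinality $|\mathscr{C}|$, so $|\mathscr{C}|\le|\mathcal{C}_{t_1}|\le|\mathcal{C}_{\mathbf{t}}|\le|\mathscr{C}|$ and consistency follows without ever invoking the $k$-projection.
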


\begin{proof}
This result follows directly from the proof of Theorem \ref{theorem 4.6}.
\end{proof}

\begin{corollary}\label{corollary 4.3}
Let $n=4k+h$ such that $0\leq h< k$, and let $\mathcal{C}_{\mathbf{t}}$ be defined as in Corollary \ref{corollary 4.2}. Let
$$k\in \{t_1, t_2, \ldots, t_r\}\subseteq \{1, 2, \ldots, k+h, 2k+h, 3k+h, \ldots, n-1\}.$$
If $k> \frac{q^h-1}{q-1}$, then $\mathcal{C}_{\mathbf{t}}$ is a cardinality-consistent flag code attaining the largest possible cardinality.
\end{corollary}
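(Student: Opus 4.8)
The plan is to mirror the proof of Corollary~\ref{corollary 3.5}, replacing the $s\neq 4$ construction of Theorem~\ref{theorem 4.5} by the $s=4$ construction supplied by Corollary~\ref{corollary 4.2}. First I would note that the hypothesis $k\in\{t_1,\ldots,t_r\}\subseteq\{1,2,\ldots,k+h,2k+h,3k+h,\ldots,n-1\}$ places us in exactly one of the two cases of Corollary~\ref{corollary 4.2}: case~(i) when $2k+h\notin\{t_1,\ldots,t_r\}$ and case~(ii) when $2k+h\in\{t_1,\ldots,t_r\}$. In either case Corollary~\ref{corollary 4.2} guarantees that $\mathcal{C}_{\mathbf{t}}$ is cardinality-consistent with $|\mathcal{C}_{\mathbf{t}}|=|\mathcal{C}|=|\mathscr{C}|=\sum_{i=1}^{3}q^{ik+h}+1$.

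Next, since $k$ occurs among the $t_i$, the $k$-projected code of $\mathcal{C}_{\mathbf{t}}$ is $\mathcal{C}_k$, which by Theorem~\ref{theorem 3.1}(i) is a partial $k$-spread of $\mathbb{F}_q^n$ with $|\mathcal{C}_k|=|\mathcal{C}_{\mathbf{t}}|$. Equivalently, $\mathcal{C}_k$ is a $0$-intersecting equidistant code in $\mathcal{G}_q(n,k)$, so $|\mathcal{C}_{\mathbf{t}}|=|\mathcal{C}_k|\leq e_q(k,n,0)$; this same ceiling applies to any cardinality-consistent flag code of this type whose $k$-projected code is a partial $k$-spread, so it suffices to show the bound is met. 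Because $n=4k+h\equiv h\pmod{k}$ with $0\leq h<k$ and $k>\frac{q^h-1}{q-1}$, Lemma~\ref{lemma 3.3} yields $e_q(k,n,0)=\frac{q^n-q^{k+h}}{q^k-1}+1$, and I would finish with the elementary identity
$$\frac{q^{4k+h}-q^{k+h}}{q^k-1}=q^{k+h}\cdot\frac{q^{3k}-1}{q^k-1}=q^{k+h}(q^{2k}+q^k+1)=\sum_{i=1}^{3}q^{ik+h},$$
so that $e_q(k,n,0)=\sum_{i=1}^{3}q^{ik+h}+1=|\mathcal{C}_{\mathbf{t}}|$, establishing that $\mathcal{C}_{\mathbf{t}}$ attains the largest possible cardinality.

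I do not anticipate any real obstacle: the computation is routine and every ingredient is already available. The only points needing a little care are verifying that the hypothesis on $\mathbf{t}$ genuinely lands in one of the two cases of Corollary~\ref{corollary 4.2}, and being explicit that it is the $k$-projected code (a partial $k$-spread, hence a $0$-intersecting equidistant code) against which the extremal quantity $e_q(k,n,0)$ of Lemma~\ref{lemma 3.3} is compared, rather than $\mathcal{C}_{\mathbf{t}}$ directly.
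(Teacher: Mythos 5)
Your proposal is correct and follows essentially the same route as the paper's own (very terse) proof: invoke Corollary~\ref{corollary 4.2} for cardinality-consistency and size, observe that the $k$-projected code is a partial $k$-spread, and apply Lemma~\ref{lemma 3.3} together with the identity $\frac{q^{4k+h}-q^{k+h}}{q^k-1}=\sum_{i=1}^{3}q^{ik+h}$ to see that the bound $e_q(k,n,0)$ is met. The extra detail you supply (identifying which code the extremal bound is compared against) is a welcome clarification of what the paper leaves implicit.
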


\begin{proof}
If $k> \frac{q^h-1}{q-1}$, then by Lemma \ref{lemma 3.3}, the flag code $\mathcal{C}_{\mathbf{t}}$ constructed in Corollary \ref{corollary 4.2} attains the largest possible size $\sum^3_{i=1}q^{ik+h}+1$.
\end{proof}

Let $\mathcal{F}=(\mathrm{F}_1, \mathrm{F}_2, \ldots, \mathrm{F}_r)$ be a flag of type $\mathbf{t}=(t_1, t_2, \ldots, t_r)$ on $\mathbb{F}_q^n$, and let $G$ be a multiplicative subgroup of $\mathrm{GL}_{n}(\mathbb{F}_{q})$. Define
$$\mathrm{Orb}_G(\mathcal{F})=\{\mathcal{F}^g=(\mathrm{F}^g_1, \mathrm{F}^g_2, \ldots, \mathrm{F}^g_r): g\in G\}\subseteq \mathcal{F}_q(\mathbf{t}, n)$$
as the orbit flag code generated by $\mathcal{F}$ under the action of $G$. If $G$ is cyclic, the code $\mathrm{Orb}_G(\mathcal{F})$ is called a cyclic orbit flag code.

Let $n=sk+h$ such that $s\geq 2$ and $0\leq h< k$. Let $\mathbf{t}=(t_1, t_2, \ldots, t_r)$ be a subsequence of $(1, 2, \ldots, n-1)$. The codes $\mathscr{C}$ and $\mathcal{C}$ defined in \eqref{equation 3.1} and \eqref{equation 3.2} can be respectively expressed as unions of orbit codes and orbit flag codes, as follows.
$$\begin{aligned}
\mathscr{C}&=\{\mathrm{rs}(\mathbf{A}_ig), \mathrm{rs}(\mathbf{B}_i), \mathrm{rs}(\mathbf{M}): g\in G_i, 1\leq i\leq s-1\}\\
&=\bigcup_{i=1}^{s-1}\{\mathrm{rs}(\mathbf{A}_ig)=\mathrm{A}_i^g: g\in G_i\}\bigcup \{\mathrm{rs}(\mathbf{B}_i), \mathrm{rs}(\mathbf{M}): 1\leq i\leq s-1\}\\
&=\bigcup_{i=1}^{s-1}\mathrm{Orb}_{G_i}(\mathrm{A}_i)\bigcup \{\mathrm{rs}(\mathbf{B}_i), \mathrm{rs}(\mathbf{M}): 1\leq i\leq s-1\}
\end{aligned}$$and
$$\begin{aligned}
\mathcal{C}&=\{\mathcal{F}_{\mathbf{W}}=(\mathrm{W}^{(1)}, \mathrm{W}^{(2)}, \ldots, \mathrm{W}^{(n-1)}): \mathrm{W}=\mathrm{rs}(\mathbf{W})\in \mathscr{C}\}\\
&=\bigcup_{i=1}^{s-1}\{\mathcal{F}_{\mathbf{A}_ig}=(\mathrm{rs}(\mathbf{A}^{(1)}_ig), \mathrm{rs}(\mathbf{A}^{(2)}_ig), \ldots, \mathrm{rs}(\mathbf{A}^{(n-1)}_ig)): g\in G_i\}\\
&\ \ \ \ \bigcup \{\mathcal{F}_{\mathbf{W}}: \mathrm{W}=\mathrm{rs}(\mathbf{W})\in\{\mathrm{rs}(\mathbf{B}_i), \mathrm{rs}(\mathbf{M}): 1\leq i\leq s-1\}\}\\
&=\bigcup_{i=1}^{s-1}\mathrm{Orb}_{G_i}(\mathcal{F}_i)\bigcup \{\mathcal{F}_{\mathbf{W}}:\mathrm{W}=\mathrm{rs}(\mathbf{W})\in \{\mathrm{rs}(\mathbf{B}_i), \mathrm{rs}(\mathbf{M}): 1\leq i\leq s-1\}\},
\end{aligned}$$
where $\mathcal{F}_i=(\mathrm{A}_i^{(1)}, \mathrm{A}_i^{(2)}, \ldots, \mathrm{A}_i^{(n-1)})$ with $\mathrm{A}_i=\mathrm{rs}(\mathbf{A}_i)$, and $$\mathrm{Orb}_{G_i}(\mathcal{F}_i)=\{\mathcal{F}_i^g=(\mathrm{rs}(\mathbf{A}^{(1)}_ig), \mathrm{rs}(\mathbf{A}^{(2)}_ig), \ldots, \mathrm{rs}(\mathbf{A}^{(n-1)}_ig)): g\in G_i\}$$
is a cyclic orbit full flag code on $\mathbb{F}_q^n$.
Then
$$\begin{aligned}
\mathcal{C}_{\mathbf{t}}&=\{\mathrm{W}^{(\mathbf{t})}=(\mathrm{W}^{(t_1)}, \mathrm{W}^{(t_2)}, \ldots, \mathrm{W}^{(t_r)}): \mathrm{W}=\mathrm{rs}(\mathbf{W})\in \mathscr{C}\}\\
&=\bigcup_{i=1}^{s-1}\mathrm{Orb}_{G_i}(\mathcal{F}_i^{(\mathbf{t})})\bigcup \{\mathrm{W}^{(\mathbf{t})}:\mathrm{W}=\mathrm{rs}(\mathbf{W})\in \{\mathrm{rs}(\mathbf{B}_i), \mathrm{rs}(\mathbf{M}): 1\leq i\leq s-1\}\},
\end{aligned}$$
where
$\mathcal{F}_i^{(\mathbf{t})}=(\mathrm{A}_i^{(t_1)}, \mathrm{A}_i^{(t_2)}, \ldots, \mathrm{A}_i^{(t_r)})$, and
$$
\mathrm{Orb}_{G_i}(\mathcal{F}_i^{(\mathbf{t})})=\{\mathcal{F}_i^g=(\mathrm{rs}(\mathbf{A}^{(t_1)}_ig), \mathrm{rs}(\mathbf{A}^{(t_2)}_ig), \ldots, \mathrm{rs}(\mathbf{A}^{(t_r)}_ig)): g\in G_i\}
$$
is a cyclic orbit flag code of type $\mathbf{t}$ on $\mathbb{F}_q^n$.

Therefore, the flag codes constructed in this paper can be decomposed into the union of $s-1$ orbit flag codes and $s$ additional flags.

\section{Conclusions and open problems}
In this paper, we conducted a study on flag codes with longer type vectors and characterized them via their projected codes. Specifically, we introduced systematic constructions of optimum distance flag codes, as well as flag codes with longer type vectors, based on partial spreads and cyclic orbit flag codes. The development of full flag codes with increased cardinality remains a key and highly interesting direction for future research.\\

{\bf \Large Declarations}\\
\\
{{\bf Conflict of interest} The authors declare there are no conflicts of interest.}

\section*{Acknowledgements}
We are grateful for the support provided by the National Natural Science Foundation of China (12371326).


\begin{thebibliography}{r1}
\bibitem{Network} R. Ahlswede, N. Cai, S.-Y.R. Li, R.W. Yeung, Network information flow, IEEE Trans. Inf. Theory 46 (4) (2000) 1204-1216.

\bibitem{Cyclic.orbit.flag} C. Alonso-Gonz$\mathrm{\acute{a}}$lez, M.A. Navarro-P$\mathrm{\acute{e}}$rez, Cyclic orbit flag codes, Des. Codes Cryptogr. 89 (2021) 2331-2356.

\bibitem{A.new} C. Alonso-Gonz$\mathrm{\acute{a}}$lez, M.A. Navarro-P$\mathrm{\acute{e}}$rez, A new invariant for cyclic orbit flag codes, Linear Multilinear Algebra 73 (9) (2025) 2036-2059.

\bibitem{Quasi-optimum} C. Alonso-Gonz$\mathrm{\acute{a}}$lez, M.A. Navarro-P$\mathrm{\acute{e}}$rez, Quasi-optimum distance flag codes, arXiv: 2407.19758.

\bibitem{planar.spreads} C. Alonso-Gonz$\mathrm{\acute{a}}$lez, M.A. Navarro-P$\mathrm{\acute{e}}$rez, X. Soler-Escriv$\mathrm{\grave{a}}$, Flag codes from planar spreads in network coding, Finite Fields Appl. 68 (2020) 101745.

\bibitem{An.orbital.construction} C. Alonso-Gonz$\mathrm{\acute{a}}$lez, M.A. Navarro-P$\mathrm{\acute{e}}$rez, X. Soler-Escriv$\mathrm{\grave{a}}$, An orbital construction of optimum distance flag codes, Finite Fields Appl. 73 (2021) 101861.

\bibitem{perfect.matchings} C. Alonso-Gonz$\mathrm{\acute{a}}$lez, M.A. Navarro-P$\mathrm{\acute{e}}$rez, X. Soler-Escriv$\mathrm{\grave{a}}$, Optimum distance flag codes from spreads via perfect matchings in graphs, J. Algebr. Comb. 54 (2021) 1279-1297.

\bibitem{Chen} S. Chen, D. Yao, Constructions of optimum distance full flag codes, Finite Fields Appl. 89 (2023) 102202.

\bibitem{Partial.spreads} E. Gorla, A. Ravagnani, Partial spreads in random network coding, Finite Fields Appl. 26 (2014) 104-115.

\bibitem{Han} X. Han, X. Li, G. Wang, New constructions of full flag codes based on partial spreads, arXiv: 2506.15127.

\bibitem{Jia} J. Jia, Y. Chang, Cardinality-consistent flag codes with larger cardinality, Finite Fields Appl. 110 (2026) 102750.

\bibitem{errorsanderasures} R. K$\mathrm{\ddot{o}}$tter, F.R. Kschischang, Coding for errors and erasures in random network coding, IEEE Trans. Inf. Theory 54 (8) (2008) 3579-3591.

\bibitem{D.Liebhold} D. Liebhold, G. Nebe, A. Vazquez-Castro, Network coding with flags, Des. Codes Cryptogr. 86 (2018) 269-284.

\bibitem{S.Liu} S. Liu, S. Yu, L. Ji, Construction of optimal flag codes by MRD codes, Des. Codes Cryptogr. 93 (2025) 2119-2136.

\bibitem{The.maximum.size} E.L. N$\mathrm{\breve{a}}$stase, P.A. Sissokho, The maximum size of a partial spread in a finite projective space, J. Comb. Theory, Ser. A 152 (2017) 353-362.

\bibitem{Singer.groups} M.A. Navarro-P$\mathrm{\acute{e}}$rez, X. Soler-Escriv$\mathrm{\grave{a}}$, Flag codes of maximum distance and constructions using Singer groups, Finite Fields Appl. 80 (2022) 102011.

\bibitem{R.W.} R.W. N$\mathrm{\acute{o}}$brega, B.F. Uch$\mathrm{\hat{o}}$a-Filho, Multishot codes for network coding: bounds and a multilevel construction, in:
IEEE International Symposium on Information Theory, Proceedings (ISIT), Seoul, Korea, 2009, pp. 428-432.

\bibitem{Orbit.codes} A.L. Trautmann, F. Manganiello, J. Rosenthal, Orbit codes - a new concept in the area of network coding, in: 2010 IEEE Information Theory Workshop, ITW 2010 - Proceedings, Dublin, Ireland, 2010, pp. 1-4.

\bibitem{Cyclic.orbit} A.L. Trautmann, F. Manganiello, M. Braun, J. Rosenthal, Cyclic orbit codes, IEEE Trans. Inf. Theory 59 (11) (2013) 7386-7404.
\end{thebibliography}
\end{document}